\documentclass[12pt]{amsart}%
\usepackage[T1]{fontenc}
\usepackage{amsmath, amsthm, amsfonts, amssymb}
\usepackage{mathrsfs}
\usepackage{graphicx}
\usepackage{url}
\usepackage{enumerate}
\usepackage{fancyhdr}
\usepackage{xspace}
\usepackage{verbatim}

\usepackage[margin=1.00in]{geometry}%
\usepackage{hyperref}

\theoremstyle{definition}

\newtheorem{defn}{Definition}[section]

\theoremstyle{remark}
\newtheorem{rmk}[defn]{Remark}
\newtheorem*{remark*}{Remark}

\theoremstyle{plain}

\newtheorem{cor}[defn]{Corollary}

\newtheorem{lem}[defn]{Lemma}

\newtheorem{prop}[defn]{Proposition}

\newtheorem{thm}[defn]{Theorem}

\numberwithin{equation}{section}
\numberwithin{figure}{section}
\newcommand{\paren}[1]{\left(#1\right)}
\newcommand{\set}[1]{\left\{#1\right\}}
\newcommand{\Z}{\mathbb{Z}}
\newcommand{\R}{\mathbb R}
\newcommand{\C}{\mathbb C}
\newcommand{\N}{\mathbb N}
\newcommand{\spec}{\sigma}
\newcommand{\Lap}{\mathcal{L}}
\newcommand{\spn}{\operatorname{span}}
\newcommand{\eng}{\mathscr{E}}

\newcommand{\alb}{\set{1,2,\ldots,3N}}

\newcommand{\eps}{\varepsilon}

\newcommand{\resscaling}{\rho}

 \def\mult#1#2{\textup{mult}_{#1}{(#2)}}

\def\s-s{self-similar}

\DeclareMathSizes{10}{9}{7}{5}

\allowdisplaybreaks[4]
\title{Gaps in the  spectrum of the Laplacian on $3N$-Gaskets}

\author[D.~Kelleher]{Daniel ~Kelleher}
\address{Department of Mathematics, Purdue University, West Lafayette, IN,  47907, USA }
\email{dkellehe@purdue.edu}
\urladdr{\url{http://www.math.purdue.edu/~dkellehe/}}

\author[N.~Gupta]{Nikhar Gupta}\email{Nikhaar.Gupta@uconn.edu}
\author[M.~Margenot]{Maxwell Margenot}\email{mmargeno42@gmail.com}
\author[J.~Marsh]{Jason Marsh}\email{jmarsh2@nd.edu}
\author[W.~Oakley]{William Oakley}\email{wgoakley@math.ucla.edu}
\author[A.~Teplyaev]{Alexander Teplyaev}\email{teplyaev@math.uconn.edu}
\address{Department of Mathematics, University of Connecticut, Storrs, CT 06269, USA}
\urladdr{\url{http://www.math.uconn.edu/~teplyaev/}}

\thanks{Authors supported in part by the National Science Foundation through grant DMS-1106982.}
\subjclass[2000]{Primary: 81Q35, 60J35, Secondary: 28A80, 31C25, 31E05, 35K08}
\keywords{Fractals, self-similar, metric geometry, heat kernel, heat kernel estimates, spectrum}
\begin{document}
%
\begin{abstract}
This article develops analysis on fractal $3N$-gaskets, a class of post-critically finite fractals which include the Sierpinski triangle for $N=1$, specifically properties of the Laplacian $\Delta$ on these gaskets. We first prove the existence of a self-similar geodesic metric on these gaskets, and prove heat kernel estimates for this Laplacian with respect to the geodesic metric.  We also compute the elements of the method of spectral decimation, a technique used to determine the spectrum of post-critically finite fractals. Spectral decimation on these gaskets arises from more complicated dynamics than in previous examples, i.e. the functions involved are rational rather than polynomial. Due to the nature of these dynamics, we are able to show that there are gaps in the spectrum.
\end{abstract}
%
\date{\today}
\maketitle
%
\tableofcontents
%
\section{Introduction} \label{introduction}

Recently there has been considerable interest in Laplace operators with exotic spectral properties. 
For instance, large spectral gaps  imply uniformly convergent Fourier expansions on fractals, see \cite{Str05}, and  oscillations  in the spectrum, see \cite[and references therein]{ADT09,ADTV,HSTZ,HZ09,hexa,Kajino10,Kajino13,Kajino13b,Kajino14,KL93}. 
However there had not been any infinite families of fractals with complicated spectra which can be computed. It is especially interesting to have families of  ``fractal strings'' (the term introduced by M.~Lapidus, see \cite{Lap1,Lap2}) that converge to classical loops as $N\to\infty$ and have meromorphic spectral zeta functions (see \cite{DGV08,DGV12,LL12,StTzeta,Tep07}). 
Such families would be of particularly significance  in view of the appearance of fractal models in the 
theory of Quantum Gravity, as in the works of F.~Englert, R.~Loll, M.~Reuter et al., see \cite[and references therein]{Englert,Loll,Reuter} (see \cite{ADT09,ADT10,Dun12,KKP+12} for most immediate applications in Physics, such as 
 oscillations of the heat kernel or the Casimir effect). 
In our work we study the first  family of fractals known in the literature that combines all these aspects of the general theory.   

Our main geometric objects of study, the $3N$-gaskets, are post critically finite self-similar fractals with the symmetries of a regular $3N$-gon. They were introduced in \cite{TW05,TW06} and their resistance was studied in \cite{BCF+07} (in more general framework the existense and uniqueness of energy and Laplacians are discussed in \cite{BBKT,HMT06,KSW,Pei08,Pei14a,Pei14b,T08cjm} and references therein). 
In this paper we show how to compute the spectrum of the Laplacian operator for this infinite family of fractals using the technique of spectral decimation. Roughly speaking,  one begins by calculating the dynamics map $R(z)$, which maps the spectrum of the Laplacian on the $n$-th approximating graph to the spectrum of the graph Laplacian on a coarser $(n-1)$-th approximation, i.e. $R:\sigma(\Delta_n)\to\sigma(\Delta_{n-1})$. This allows one to identify the spectrum of the Laplacian on the limit of these graphs as   the Julia set of $R(z)$ and an infinite sequence of isolated eigenvalues accumulating to this Julia set. 
The basics techniques and theory of analysis on post critically finite fractals, such as $3N$-gaskets, can be found \cite{Str06,Kig93,Kig03}, including the construction of the Laplacian and energy form. The textbook \cite{Str06} contains comprehensive explanation of the spectral decimation method in the case of the Sierpinski gasket. 
 To compute the multiplicities of eigenvalues, we use \cite{Tep08b,Tep08a}, and also prove a new general result on spectral decimation, Theorem~\ref{thm-new}.

Historically, the spectral decimation was first used to calculate the spectrum of the Sierpinski gasket fractal in \cite{FS92, Shi91}, though it was framed in a different manner than it is here. Older works from physics literature, \cite{Ale84,BKP85,Bel92,R84,RT82} used similar techniques to calculate the spectrum of Schr\"odinger type operators on  Sierpinski  lattices. 
It was shown that the general technique of spectral decimation works for a certain types of fractal with the right symmetry conditions \cite{MT95,Tep98,MT03,DB10,DS09}. In \cite{Tep08a,Tep08b}, several examples are worked out in detail, including of the 6-gasket which is also covered in the current work. It was also used in to find the spectrum of infinite Sierpinski fractafolds in \cite{ST12}.
However, the class of examples for which the exact algorithm is known is limited, and in known examples the dynamics which give rise to these algorithms are relatively simple in that they involve only iterating polynomials or low degree rational functions. This paper presents the first infinite class of fractals for which the dynamics arises from iterations of  unboundedly high degree rational functions.

 Concerning further research,     \cite{ACSY14}  discusses how to develop a theory of 1-forms from discrete 1-forms in such a way that is applicable to other p.c.f.-fractals, in particular $3N$-gaskets because they share symmetries with the Sierpinski Gasket. 
 The general theory of 1-forms on fractals and related questions are discussed in   \cite{IRT12,h5,h4,h3,h2,h1,H16,HKT13,HR16,HT15}, and related resolvent estimates and distribution theory in \cite{r5,r4,r3,r2,r1}. 

\subsection*{Acknowledgments}
 The first and the last authors are very grateful to Robert Strichartz for interesting discussions and many helpful suggestions. 
 
%
\section{Definitions and main results}\label{results}\label{preliminaries}
For a fixed natural number $N>0$, a $3N$-Gasket is a post-critically finite self-similar fractal $K$ with a set of $3N$ maps $F_i: K \to K$, $i=1,2, \ldots, 3N$. $K$ has fractal boundary $V_0=\set{v_1,v_2,v_3}$, and the maps $F_i$ are determined by the rule $F_i(v_2) = F_j(v_3)$ where $i= j+1 \mod 3N$ and $F_{kN}(v_1)= v_k$ for $k=1,2,3$. 

This defines a self-similar structure in the sense of \cite{Kig93,Kig01} by the theory of ancestors in \cite{Kig93}. The limiting self-similar structure is $K$ is defined as the quotient of the shift space
\[
\Sigma = \set{1,2,\ldots,3N}^\N
\]
of infinite words on the alphabet of length $3N$ with quotient map  $\pi: \Sigma \to K$. Here $v_1  = \pi(\overline{N})$ where $\overline N := NNN\cdots$ is the infinite string of $N$'s, and $F_i(v_1) = \pi\paren{i\overline{N}}$. Thus $\pi$ is defined by the equivalence relation
\[
\pi\paren{ w(i)(2N)\overline{N}} = \pi\paren{ w (j) (3N)\overline{N}}
\]
where $w\in \set{1,2,\ldots, 3N}^k$ is a finite word and $i\equiv j+1 \mod 3N$.

\begin{remark*} 
In this paper we consider $3N$-gaskets as abstract \mbox{p.c.f.} fractals in the sense of Kigami, 
defined as quotient spaces of the infinite product space $\Sigma$. 
In the case where $N=1,2,3$ 
these fractals can be naturally embedded in 
$\R^2$ as so called regular polygaskets. 
These embeddings  are the  attractors of the iterated function system consisting of the $3N$ maps $G_i:\R^2\to\R^2$, $G_i(x) = r(x-p_i)+p_i$, where $p_i$ are the corners of a regular $3N$-gon,
\[
r = \frac{\sin(\pi/3N)}{\sin(\pi/3N) + \sin(\pi/3N + 2\pi m/3N)},
\]
and $m= \lfloor 3N/4\rfloor$. 
These fractals are nested fractals and are shown in Figure~\ref{fig1}. 
 For  $N\geqslant4$ the attractors of this 
iterated function system will not be the $3N$-gaskets, but different fractals 
(for istance, 
in the cases where $N$ is divisible by $4$, the attractor of the above iterated function system 
 is an infinitely ramified fractal, 
 and does not fit into the framework of the current paper). 
However
 all the $3N$ gaskets can   be embedded in 
$\R^2$ as 
self-similar sets in the sense of Euclidean self-similarity, {i.e.} using Euclidean contractive similitudes.  For instance, 
Figure~
\ref{SawtoothSetup}
shows a rough scheme how to construct an $\R^2$-embedding the $3N$-gasket for $N=5$. 
 In any case, the formal definition given before this remark is   valid for all natural $N$ and, 
 for the purpose of this paper, the embedding in $\R^2$ is not important. 
\end{remark*}

\begin{figure}
\includegraphics[width=.3\textwidth]{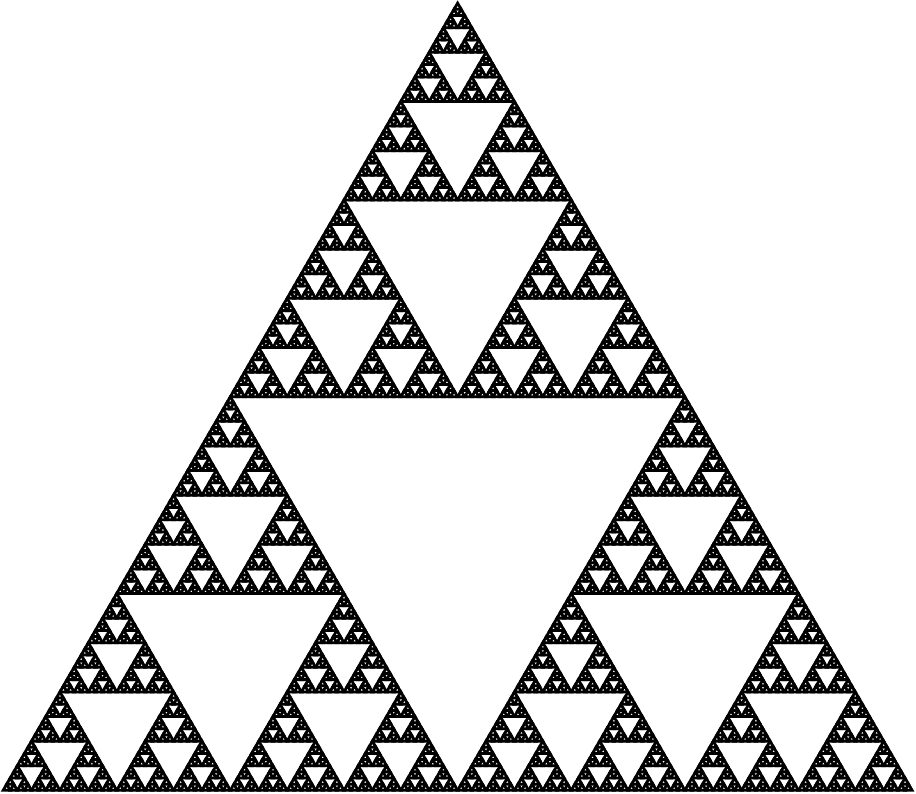}\hspace{.03\textwidth}
\includegraphics[width=.27\textwidth]{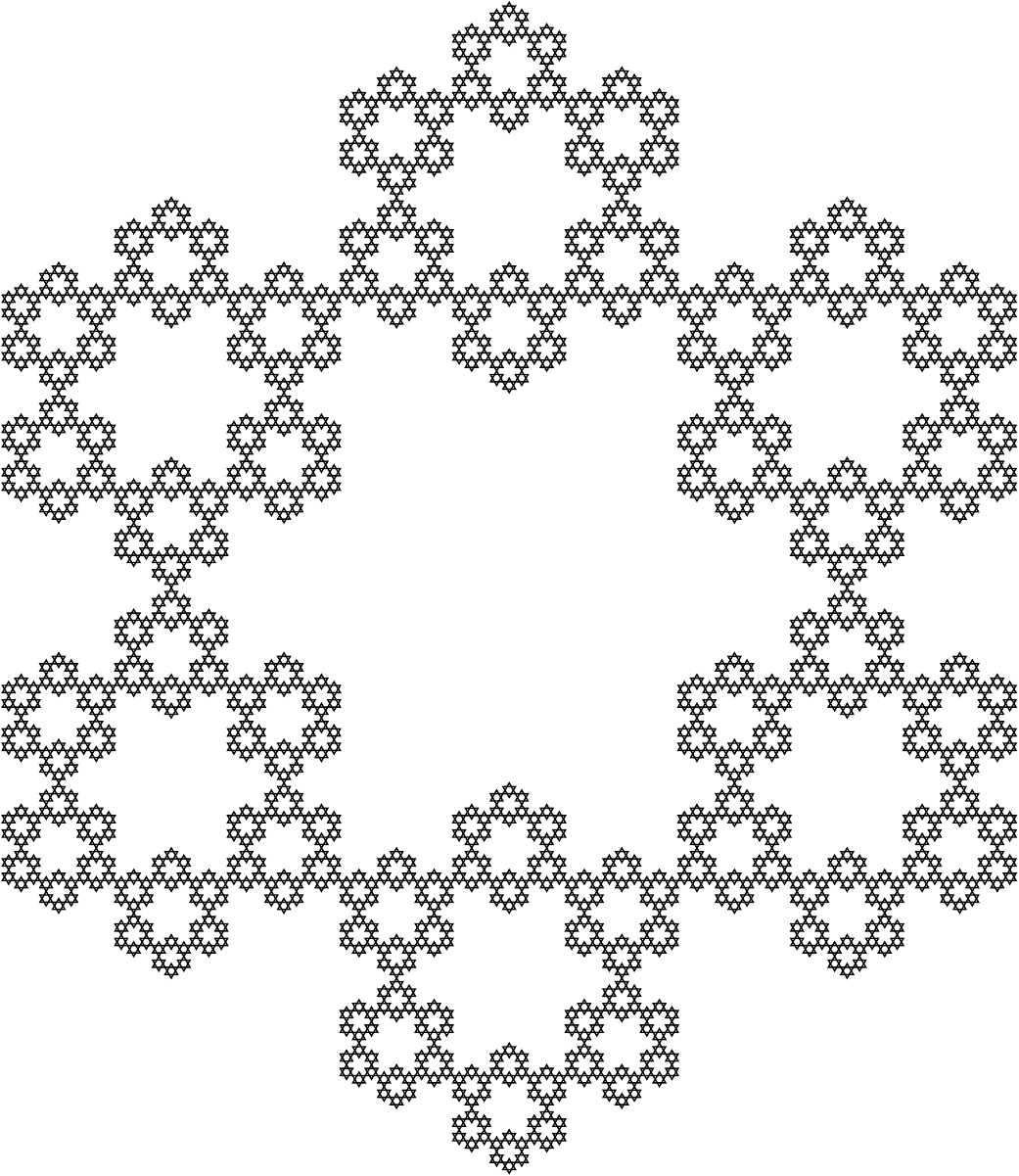}\hspace{.03\textwidth}
\includegraphics[width=.3\textwidth]{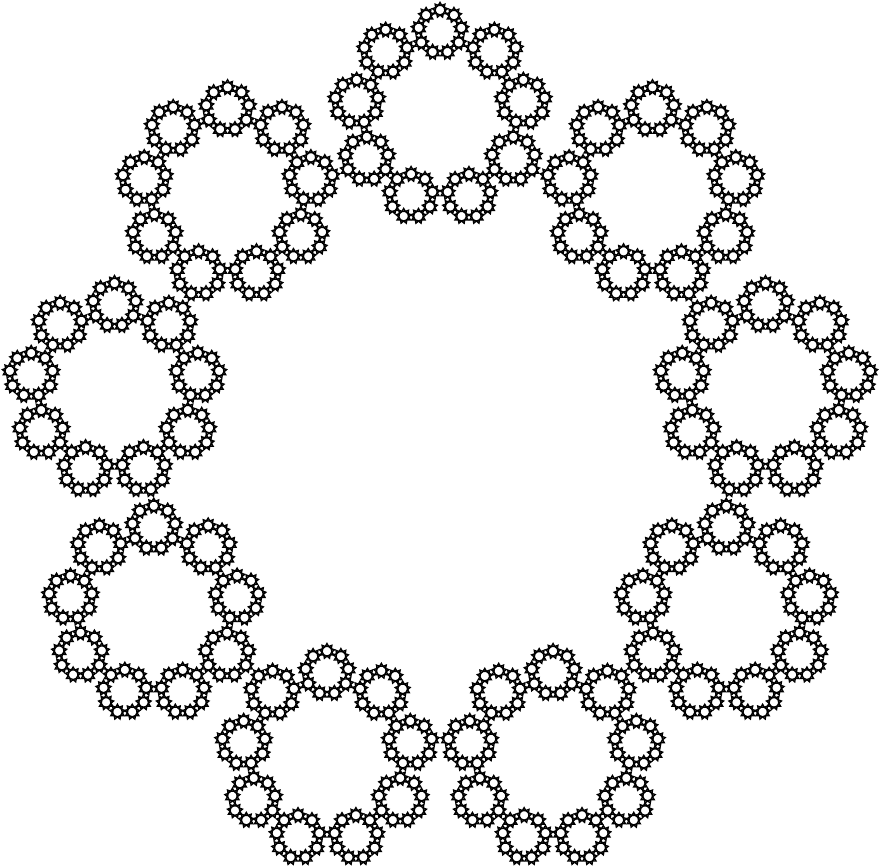}
\caption{The 3-gasket (Sierpinski gasket), 6-gasket (Hexagasket), and 9-gasket.}\label{fig1}
\end{figure}

We intrinsically approximate $K$ by the finite sets $V_n\subset K$, $n=0,1,2,\ldots$, where
\[
V_{n+1} = \bigcup_{i=1}^{3N} F_i(V_n),
\]
and $V_0$ is the fractal boundary.  
We think of $V_n$ as the vertex set of a graph, where $x,y\in V_n$ are adjacent if there is  $w = i_1i_2\cdots i_n\in\set{1,2,\ldots,3N}^n$, such that $
x$ and $y\in F_w(V_0)$. In this case we write $x\sim y$. Here $ F_w := F_{i_1}\circ F_{i_2}\circ\cdots\circ F_{i_n}$.

For a set $V$, let $\ell(V)= \set{f:V\to\R}$. If $V$ is finite $\ell(V) = \R^{|V|}$. We define the Laplacian on a given $3N$-gasket as the scaled limit of graph Laplacian on the sets $V_n$. Specifically, we define the operator $\Delta_n: \ell(V_n)\to\ell(V_n)$ by
\[
\Delta_n f(x) = f(x)-\frac{1}{d_x}\sum_{y\sim x} f(y),
\]
where $d_x $ is the degree of $x$ in the graph above. For $f\in \ell(K)$, and $x\in V_k$ for some $k$, define
\begin{align}\label{LaplacianDefn}
\Delta f(x) = \lim_{n\to\infty} c^n\Delta_n f|_{V_n}(x),
\end{align}
where
\begin{equation}\label{e-def-c}
	c = 3N+2N^2.
\end{equation}
We define the (continuous) domain of $\Delta$ to be the functions $f\in C(K)$ for which the above limit exists, and is uniform in  $x\in V^*$ and,  moreover, represents a continuous function on $V^*$. 
Then, in this case, $\Delta f(x)$ can be continued from $V^*$ to   a continuous function on the fractal. 
It follows from the theory of 
Kigami (\cite{Kig01})  that the limit is non trivial and has a unique extension to a self-adjoint operator, which we also denote $\Delta$, with discrete spectrum $\sigma(\Delta)$.

 Since $\ell(V_n)$ is a finite dimensional vector space for all $n$, $\Delta_n$ can be thought of as a matrix, and the spectrum $\spec(\Delta_n)$ is the eigenvalues of this matrix. For example, for any $N$, 
\[
\Delta_0 = \paren{\begin{array}{ccc}1& -1/2 & -1/2 \\ -1/2&1&-1/2\\-1/2&-1/2&1\end{array}},
\]
and thus $\spec(\Delta_0) = \set{0,3/2}$.

Take $d_n$ to be the graph distances on the $n$th level approximating graphs, that is
\begin{equation}\label{e-dn}
d_n(x,y) = \inf\set{k\in \N~|~\exists\set{x_i}_{i=0}^k\subset V_n,\text{ with } x=x_0\sim x_1 \sim\cdots\sim x_k=y}
\end{equation}
We prove in Section~\ref{metric} that when scaled properly, the sequence $\set{d_n}_{n=1}^\infty$ converges to $d_*$ which can be extended to a self-similar geodesic distance on the whole of the $3N$-gasket $K$ and induces the standard topology. 

Further, let $\mu$ be the symmetric self-similar measure on $K$, that is $\mu$ is the unique measure such that $\mu(F_w(K)) = (3N)^{-k}$ for all $w\in \set{1,\ldots,3N}^k$. Alternatively, if $\nu$ is the standard Bernoulli probability measure on $\set{1,\ldots, 3N}^{\N}$, then $\mu(U) = \nu(\pi^{-1}(U))$ for all Borel subsets $U$ of $K$.

The heat kernel $p(t,x,y): \R^+\times K\times K \to \R$ of the operator $\Delta$ with respect to $\mu$ is the integral kernel such that, if $f\in L^2(K)$ and $u: \R^+ \times K \to \R$ is defined
\[
u(t,x)  = \int p(t,x,y)f(y) \ d\mu(y)
\]
then $u$ satisfies the differential equation
\[
\begin{cases}
\Delta u(t,x) = \displaystyle\frac{du}{dt}(t,x) \\[8pt]
u(0,x) = f(x)
\end{cases}
\]
With respect to the distance $d_*$ and the measure $\mu$, we get the following heat kernel estimate
\begin{thm}\label{HKEthm}
The  Laplacian $\Delta$ on $K$ has a symmetric heat kernel $p(t,x,y): \R^+\times K\times K \to \R$ which satisfies the following estimates
\begin{align}
p(t,x,y) \leq c_1 t^{-d_H/d_w}\exp\paren{c_2 (d_*(x,y)^{d_w}/t)^{1/(d_w - 1)}}
\end{align}
and
\begin{align*}
p(t,x,y) \geq c_3 t^{-d_H/d_w}\exp\paren{c_4 (d_*(x,y)^{d_w}/t)^{1/(d_w - 1)}}. 
\end{align*}
Here $d_H = \log (3N)/\log(N+1)$ is the Hausdorff dimension with respect to $d_*$, 
 $d_w = \alpha + d_H=(\log(2N+3)+log(N))/\log(N+1)$
where $\alpha = \log \rho /\log(N+1)$,  $\rho = c/(3N)=1+2N/3$, and $c_1$,$c_2$,$c_3$,$c_4$ are positive real constants. 
\end{thm}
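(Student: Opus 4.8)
The plan is to obtain the two–sided sub-Gaussian bound from the general characterisation of heat kernel estimates for strongly recurrent (resistance) Dirichlet forms: on a metric measure space carrying such a form, the estimate $\mathrm{HKE}(d_w)$ is equivalent to the conjunction of volume regularity of $\mu$ and a matching growth estimate for the effective resistance, with the off-diagonal exponent necessarily equal to $1/(d_w-1)$; this is the circle of results of Barlow, Kumagai, and Grigor'yan--Telcs, in the resistance-form formulation (as in Kigami's memoir on resistance forms and heat kernel estimates). So it suffices to verify, for the triple $(K, d_*, \mu)$ with its self-similar energy form $\eng$, that (i) $\mu$ is Ahlfors regular of exponent $d_H$ in the metric $d_*$, and (ii) $\eff\bigl(x,\, K\setminus B_{d_*}(x,r)\bigr)\asymp r^{\,d_w-d_H}$ uniformly in $x\in K$ and $0<r\le\operatorname{diam}_{d_*}(K)$; the remaining hypothesis, a chaining/Poincar\'e condition, is automatic because $d_*$ is geodesic.

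For (i): Section~\ref{metric} shows that each $F_i$ contracts $d_*$ by the factor $(N+1)^{-1}$ and that $d_*$ induces the standard topology, so a ball $B_{d_*}(x,r)$ with $(N+1)^{-(n+1)}\le r<(N+1)^{-n}$ both contains and is contained in a union of at most $M$ cells $F_w(K)$ with $|w|=n$, where $M$ depends only on the combinatorics of the $3N$-gon and not on $n$ or $x$. Each such cell has $\mu$-measure $(3N)^{-n}$, and $(3N)^{-n}=\bigl((N+1)^{-n}\bigr)^{d_H}$ by the very definition $d_H=\log(3N)/\log(N+1)$; hence $\mu\bigl(B_{d_*}(x,r)\bigr)\asymp r^{d_H}$, which in particular gives volume doubling.

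For (ii): the existence and self-similarity of the energy form on the $3N$-gasket, with a single resistance renormalisation factor $\resscaling=c/(3N)=1+2N/3$, is the content of \cite{BCF+07}, and is consistent with \eqref{e-def-c} since $c=\resscaling\cdot 3N$. Consequently the effective resistance across a cell of level $n$ is of order $\resscaling^{-n}$. Combining this with the ball--cell comparison of step (i) and with the network-reduction (series/parallel) calculus available on a finitely ramified fractal --- so that the resistances of the boundedly many $n$-cells making up $B_{d_*}(x,r)$ combine with uniformly bounded constants --- yields $\eff\bigl(x, K\setminus B_{d_*}(x,r)\bigr)\asymp \resscaling^{-n}\asymp r^{\alpha}$, where $\alpha=\log\resscaling/\log(N+1)$; and $r^{\alpha}=r^{\,d_w-d_H}$ since $d_w=\alpha+d_H$. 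Feeding (i) and (ii) into the general theorem then delivers both inequalities in the stated form, with constants $c_1,\dots,c_4$ depending only on $N$.

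The main obstacle is step (ii), and specifically the uniformity of the resistance estimate in both the centre $x$ and the scale $r$ when balls are measured in the geodesic metric $d_*$ rather than in the resistance or Euclidean metric. One must understand how a $d_*$-ball meets the cell decomposition at the least regular points --- the junction vertices where several cells meet and where the local picture differs from that of a generic point --- and check that in every case the number of cells involved, together with the way their resistances are wired, is bounded independently of scale. Self-similarity of $d_*$ reduces this to verifying finitely many local configurations, but carrying that out is the technical heart of the argument; everything else is either quoted from Kigami's theory of p.c.f.\ fractals or is a routine consequence of self-similarity.
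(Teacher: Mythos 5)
Your proposal follows essentially the same route as the paper: both arguments establish Ahlfors $d_H$-regularity of $\mu$ with respect to the self-similar geodesic metric $d_*$ and a two-sided comparison between effective resistance and $d_*^{\alpha}$ (the paper's Proposition~\ref{snowflake}), and then invoke Kigami's general resistance-form heat kernel theorem (Theorem~15.10 of \cite{Kig12}), with the chain condition supplied by geodesicity. The only cosmetic difference is that you verify the resistance condition in the form $\eff\bigl(x,K\setminus B_{d_*}(x,r)\bigr)\asymp r^{\alpha}$ while the paper checks the pointwise quasisymmetric equivalence $R(x,y)\asymp d_*(x,y)^{\alpha}$; for resistance forms these are interchangeable inputs to the same theorem.
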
 

The proof of this theorem is given in Section~\ref{metric}. We note here that the symmetry of the heat kernel is a result of the fact that $\Delta$ is self-adjoint with respect to the self-similar measure $\mu$. This follows from \cite{Kig01}. 

\begin{rmk}Note that the spectral dimension $$d_s=2d_H/d_w=2\log(3N)/(\log(2N+3)+log(N))$$ 
was first computed in \cite{BCF+07} in different notation. Our current $\rho$ was denoted by $c$ in \cite{BCF+07}, and our current $N$ was denoted by $3N$ in \cite{BCF+07}. According to the terminology of \cite{BCF+07}, our $3N$ gasket is a $(N,N,N)$-gasket, which means $N_1=N_2=N_3=N$ (but in \cite{BCF+07} the notation was $N=N_1+N_2+N_3$, which would contradict our use of $N$).
\end{rmk}

We use the technique called spectral decimation to calculate the spectrum of $\Delta_{n+1}$ from the spectrum of $\Delta_n$. Thinking of $\Delta_n$ as a matrix, the first step in spectral decimation is computing the rational functions $R(z)$ and $\phi(z)$ such that
\[
S(z) = \phi(z)(\Delta_n - R(z)).
\]
where the matrix-valued function $S(z) = (A-z) + B(D-z)^{-1}C$ is the Schur complement of 
\[
\Delta_{n+1} - z = \begin{pmatrix}
A-z & B \\ C & D-z
\end{pmatrix}
\] 
with block representation such that $A$ corresponds to the vertices in the $n$th level approximation. 
\begin{thm}\label{Randphi}
For the $3N$-gasket, $N\geq 1$, $R(z)$ and $\phi(z)$ are the rational functions
\begin{align}
R(z) = 
\begin{cases}
\frac{(z-1) \sqrt{z} U_{N-1}\left(\sqrt{z}\right) \left(2 T_N(1-2 z)+2 U_{N-1}(1-2 z)+1\right)}{T_N\left(\sqrt{z}\right)} & \text{if }N\text{ is even} \\[8pt]
\frac{\sqrt{z} T_N\left(\sqrt{z}\right) \left(2 T_N(1-2 z)+2 U_{N-1}(1-2 z)+1\right)}{U_{N-1}\left(\sqrt{z}\right)} & \text{if }N\text{ is odd},
\end{cases} \label{form1}
\end{align}
and
\begin{align}
\phi(z) = 
\begin{cases}
\frac{(3-2 z) T_N\left(\sqrt{z}\right)}{\left(T_N\left(\sqrt{z}\right)-2 (z-1) \sqrt{z} U_{N-1}\left(\sqrt{z}\right)\right) \left(2 T_N(1-2 z)+2 U_{N-1}(1-2 z)+1\right)}  & \text{if }N\text{ is even} \\[8pt]
\frac{(3-2z) U_{N-1}\left(\sqrt{z}\right)}{\left(U_{N-1}\left(\sqrt{z}\right)- 2 \sqrt{z} T_N\left(\sqrt{z}\right)\right) \left(2 T_N(1-2 z)+2 U_{N-1}(1-2 z)+1\right)} & \text{if }N\text{ is odd}.
\end{cases} \label{phi}
\end{align}
Further, for $N>1$, $R(z)$ has simple poles at the set of points
\begin{align}\label{sing}
&\set{\cos^{2}\paren{\frac{m\pi}{2N}}:m=1,3,\ldots, N-1} & \text{if }N\text{ is even, and}  \\
&\set{\cos^{2}\paren{\frac{m\pi}{2N}}:m=2,4,\ldots, N-1} & \text{if }N\text{ is odd}.
\end{align}
For $N>1$, the set of points above along with the point $3/2$ are the complete set of zeros of $\phi(z)$.
\end{thm}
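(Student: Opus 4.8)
The plan is to compute the Schur complement $S(z)$ of $\Delta_{n+1}-z$ explicitly by exploiting the cyclic/dihedral symmetry of the $3N$-gasket. First I would observe that the ``new'' vertices created in passing from level $n$ to level $n+1$ along any edge of the level-$n$ graph form a path of $2N-1$ interior vertices (since each cell $F_i(K)$ contributes a chain), and these chains are attached to the old vertices only at their endpoints. So the matrix $D$ (the restriction of $\Delta_{n+1}-z$ to new vertices) is block-diagonal with one block per level-$n$ edge, each block being the $(2N-1)\times(2N-1)$ tridiagonal matrix of a path graph (up to the normalization by degrees, which is constant in the interior). Inverting a tridiagonal path matrix is classical and produces Chebyshev polynomials: the relevant $(D-z)^{-1}$ entries at the two endpoints are ratios of $U_{k}(\sqrt z)$ and $T_k(\sqrt z)$ type expressions evaluated at the appropriate argument. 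This is where $T_N(\sqrt z)$ and $U_{N-1}(\sqrt z)$ enter, and the parity split ($N$ even vs. odd) comes from whether the central vertex of the chain — the image of $v_1$ under $F_{kN}$, which has a different degree — sits at an even or odd position; one has to track this carefully.

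Next I would assemble $B(D-z)^{-1}C$. Because $B$ and $C$ only ``see'' the endpoints of each chain, the contribution to $S(z)$ is, for each level-$n$ edge, a $2\times 2$ matrix built from the corner entries of the inverted path block. Comparing with the target identity $S(z)=\phi(z)(\Delta_n - R(z))$ forces $S(z)$ to have the same off-diagonal/diagonal structure as $\Delta_n$; matching the diagonal and off-diagonal entries gives two equations, and solving them yields $R(z)$ and $\phi(z)$ as the stated rational functions. The factor $2T_N(1-2z)+2U_{N-1}(1-2z)+1$ should appear from the determinant of the chain block (equivalently, from the Chebyshev recursion closing up around the degree-changing central vertex); the substitution $1-2z = \cos\theta$ when $z=\sin^2(\theta/2)$ is the natural change of variables that turns path-matrix determinants into $\sin$ and $\cos$ of multiples of $\theta$, and I would use the identities $T_N(\sqrt z)$, $U_{N-1}(\sqrt z)$ at $z=\cos^2\alpha$ reducing to $\cos(N\alpha)$, $\sin(N\alpha)/\sin\alpha$.

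For the statements about poles and zeros, the poles of $R(z)$ are exactly the zeros of the denominator $T_N(\sqrt z)$ (even case) or $U_{N-1}(\sqrt z)$ (odd case) that are not cancelled by the numerator. The zeros of $T_N(\sqrt z)$ are $\sqrt z = \cos((2j-1)\pi/2N)$, i.e. $z = \cos^2((2j-1)\pi/2N)$, and of $U_{N-1}(\sqrt z)$ are $z=\cos^2(j\pi/N)=\cos^2(2j\cdot\pi/2N)$; intersecting with the range where no numerator factor vanishes and discarding $z=0$ (which is cancelled by the explicit $\sqrt z$ factor) gives precisely the listed sets $\{\cos^2(m\pi/2N)\}$ with $m$ odd (resp. even), $m\le N-1$. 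Simplicity of the poles follows since $T_N$ and $U_{N-1}$ have only simple zeros (distinct Chebyshev roots). Finally, for $\phi(z)$: its numerator is $(3-2z)$ times $T_N(\sqrt z)$ or $U_{N-1}(\sqrt z)$, so its zeros are $z=3/2$ together with the zeros of that Chebyshev factor — and one checks that the zeros of $T_N(\sqrt z)$ (resp. $U_{N-1}(\sqrt z)$) coincide with the poles of $R(z)$ listed above, after verifying that the other factors of $\phi$ contribute no cancellation there. I expect the main obstacle to be the bookkeeping in the first two paragraphs: correctly identifying the chain length, the position and degree of the special central vertex, and hence getting the parity split and the exact Chebyshev arguments right, since a single off-by-one there changes $T_N \leftrightarrow U_{N-1}$ and breaks the match with $\Delta_n$.
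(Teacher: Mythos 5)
Your overall strategy (compute the Schur complement, use Chebyshev polynomials coming from chain structures, and match against $\Delta_n$) is the right one and is close in spirit to what the paper does, but the structural claim on which your computation rests is false, and this breaks the argument. The matrix $D$ is \emph{not} block-diagonal with one $(2N-1)\times(2N-1)$ tridiagonal path block per level-$n$ edge. Within a single level-$n$ cell the $6N-3$ new vertices form a single connected graph: a cycle of $3N$ ``ring'' vertices with $3N-3$ interior teeth, the three old vertices being the remaining teeth. Two things go wrong with your decomposition. First, the $2N-1$ new vertices ``between'' two old vertices are not a path: they are $N$ ring vertices together with $N-1$ teeth, each tooth adjacent to two consecutive ring vertices, so the block is a sawtooth (full of triangles), not a tridiagonal matrix. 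Second, and more seriously, consecutive segments are directly adjacent to each other: the sub-cell $F_{kN}$ containing the old vertex $F_{kN}(v_1)$ is a triangle whose other two corners $F_{kN}(v_2)$ and $F_{kN}(v_3)$ are both \emph{new} vertices, lie in different segments, and are adjacent to each other. So the segments do not decouple after conditioning on the old vertices. Already for $N=1$ (the Sierpinski gasket) your claim would make $D$ diagonal, whereas the three midpoints are mutually adjacent. Relatedly, the vertex $F_{kN}(v_1)$ that you invoke as the ``degree-changing central vertex of the chain'' is an old vertex, not an interior vertex of any chain, so the mechanism you propose for the $T_N$ versus $U_{N-1}$ parity split cannot be the right one.

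The paper circumvents both problems by first solving the eigenvalue problem on the linear sawtooth with boundary at its two end ring-vertices (Lemma~\ref{M1}), which packages the elimination of the generic interior into two transfer coefficients $l$ and $r$ (and this is where the Chebyshev half-angle expressions, hence the parity of $N$, genuinely enter), and then writing a small closed linear system \eqref{eq0} at the three ring vertices adjacent to each old vertex; it is this system that correctly encodes the coupling between segments, and solving it gives $R(z)=1+(a+z-1)/(b+c)$ and $\phi(z)=b+c$. Your approach could be repaired by eliminating the teeth first and then inverting the resulting modified path/cycle matrices, but as written the claimed per-edge tridiagonal blocks do not exist. Your final paragraph on the poles of $R$ and zeros of $\phi$ is essentially the paper's argument and is fine in outline, provided you actually carry out the check that $2T_N(1-2z)+2U_{N-1}(1-2z)+1$ and the remaining Chebyshev factor do not vanish at the candidate points (the paper does this via the fact that the extrema of $T_N$ are the zeros of $U_{N-1}$ and have value $\pm1$).
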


This is proved in the appropriate subsections of Section~\ref{computation}. 

Above $T_k$ and $U_k$ are Chebyshev polynomials of the first and second type respectively, that is to say that
\[
T_k(z) = \cos( k \arccos(z)) \quad\text{and}\quad U_k (z) = \frac{\sin((k+1)\arccos(z))}{\sin(\arccos(z))}
\]
for the appropriate domains, and are extended by the polynomial representation everywhere else.

An interesting feature of this result is that the function $R(z)$ is given by a rational function which has disconnected real Julia set (see \cite{HSTZ,HZ09}) because $R(z)$ has poles in the convex hull of its Julia set. One implication of this is that the spectrum of the renormalized limit of $\Delta_n$, $\Delta$ has gaps in the sense of \cite{Str05}, which implies that the Fourier series of continuous functions converge uniformly on $K$.

\begin{thm}\label{gapsthm}
If $\set{\lambda_i}_{i=0}^\infty$, is the nondecreasing enumeration of the eigenvalues of $\Delta$, then $\limsup \lambda_{k+1}/\lambda_k > 1$.
\end{thm}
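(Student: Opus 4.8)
The plan is to exploit the structure of spectral decimation together with the presence of poles of $R(z)$ inside the convex hull of its Julia set. Recall that by the spectral decimation machinery (as developed in \cite{Tep08a,Tep08b} and Theorem~\ref{thm-new}), the spectrum $\sigma(\Delta)$ is obtained from the backward iteration $R^{-n}(\{$\textit{exceptional set}$\})$, so that $\sigma(\Delta)$ consists of the Julia set $\lims$ of $R$ (more precisely, the set of accumulation points of backward orbits) together with a discrete set of ``forbidden'' eigenvalues coming from the finite graphs, accumulating only to $\lims$. Thus to establish $\limsup \lambda_{k+1}/\lambda_k > 1$ it suffices to produce a bounded open interval $(a,b)\subset (0,\infty)$ with $a>0$ such that $(a,b)$ contains only finitely many elements of $\sigma(\Delta)$ while both $\{\lambda_i\}$ accumulates below $a$ and has points above $b$ --- equivalently, a genuine ``gap'' in $\sigma(\Delta)$ away from $0$. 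Then for the indices $k$ with $\lambda_k \le a < b \le \lambda_{k+1}$ one gets $\lambda_{k+1}/\lambda_k \ge b/a > 1$, and since by the spectral decimation structure such gaps recur at every scale (pulled back by $R$), the ratio is bounded below by a constant infinitely often, giving the $\limsup$ claim.

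First I would identify the gap at the ``top'' of the Julia set. Since $R(z)$ has the simple poles listed in \eqref{sing}, all lying in $(0,1)$ and hence inside the interval spanned by the real Julia set, the filled Julia set on the real line is disconnected: near a pole $z_0$ of $R$, $R$ maps a punctured neighborhood of $z_0$ onto a neighborhood of $\infty$, so points near $z_0$ escape under iteration and therefore lie in the Fatou set. This is the standard mechanism (cf.\ \cite{HSTZ,HZ09}) by which a pole inside the convex hull of $\lims$ forces $\lims$ to omit an open real interval around $z_0$. I would make this precise by choosing one pole $z_0 \in (0,1)$ from the list \eqref{sing}, checking that $R$ has a pole there of odd order one (simple, as stated) and that the immediate basin of $\infty$ contains a real interval $I_0 \ni z_0$; the endpoints of the maximal such interval are preimages of points already known to be in $\lims$ or on its boundary, so $I_0 \cap \lims = \emptyset$ while $\sup \lims > \sup I_0$ and $\inf\lims < \inf I_0$ (using that $\lims \subset [0, \Lambda]$ for the appropriate spectral bound and that $0 \in \lims$ as $0$ is always an eigenvalue). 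Hence $I_0$ is a spectral gap of $\sigma(\Delta)$ modulo the finitely many exceptional eigenvalues it could contain.

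Next I would promote this single gap to infinitely many, scaled. Because $\sigma(\Delta) \setminus \{$finite exceptional set$\}$ is backward-invariant under $R$ in the sense that $R^{-1}(\sigma(\Delta)) \subset \sigma(\Delta) \cup \{$exceptional$\}$, and because the relevant branch of $R^{-1}$ near $0$ is a contraction fixing $0$ (one checks $R(z) \sim c' z$ as $z \to 0$ with $R'(0) > 1$, consistent with $R$ expanding at the fixed point $0$ of the dynamics, so a branch of $R^{-1}$ contracts toward $0$), the gap $I_0$ has preimages $I_0^{(n)}$ under $R^{-n}$ that are disjoint intervals shrinking geometrically toward $0$, each disjoint from $\sigma(\Delta)$ up to finitely many points. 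Writing $I_0^{(n)} = (a_n, b_n)$ with $b_n/a_n \to \lambda_0 := \sup I_0 / \inf I_0 > 1$ (ratios are asymptotically preserved under the $C^1$ branch of $R^{-1}$ near the fixed point $0$), we obtain for each large $n$ an index $k_n$ with $\lambda_{k_n} \le a_n$ and $\lambda_{k_n+1} \ge b_n$, hence $\lambda_{k_n+1}/\lambda_{k_n} \ge b_n/a_n$, and therefore $\limsup_k \lambda_{k+1}/\lambda_k \ge \lambda_0 > 1$.

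The main obstacle, and where care is needed, is the bookkeeping of the finitely many exceptional eigenvalues introduced at each decimation step: I must verify that these do not fall inside the scaled gaps $I_0^{(n)}$ (or, if a bounded number of them do, that they do not fill the gap, only subdivide it into pieces still having ratio bounded away from $1$), and that the multiplicity/eigenvalue-counting conventions from \cite{Tep08b,Tep08a} and Theorem~\ref{thm-new} indeed place all of $\sigma(\Delta)$ into (backward orbits of $\lims$) $\cup$ (exceptional set), with nothing leaking into the Fatou interval $I_0$. A secondary technical point is confirming the local behavior of $R$ at $0$ and at the chosen pole $z_0$ directly from the Chebyshev formula \eqref{form1} --- a finite computation using $T_N(1)=1$, $U_{N-1}(1)=N$, and the vanishing orders of $T_N(\sqrt z)$ and $U_{N-1}(\sqrt z)$ at the relevant points --- but this is routine once set up.
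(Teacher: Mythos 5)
Your proposal follows essentially the same route as the paper: its Proposition~\ref{poles} likewise deduces from the presence of a pole of $R$ in $I_0=[0,z_{max}]$ that the real Julia set is totally disconnected and hence omits a subinterval of every $[0,\eps]$, converts these gaps near $0$ into the $\limsup$ statement via the renormalization $\lambda=\lim c^{m+n}\tilde R^{n}(z)$ (citing the argument of \cite{HSTZ}), and the proof of Theorem~\ref{gapsthm} then just checks that the poles lie in $(0,1)$ and that $R$ has a zero beyond the largest pole, so a pole indeed sits inside $I_0$. Your writeup makes explicit the pull-back of the gap by the contracting inverse branch at $0$ and the exceptional-eigenvalue bookkeeping that the paper delegates to \cite{HSTZ}, but the underlying mechanism is identical.
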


Knowing the functions $R(z)$ and $\phi(z)$ allow us to calculate the spectrum of $\Delta_n$  as follows.

\begin{thm}\label{finitespectrum}
If $\Delta_n$ is the graph Laplacian of the $n$th approximating graph to a $3N$-Gasket, then $\Delta_n$ is a matrix of dimension $\displaystyle\frac{3N+(3N-2)(3N)^n}{3N-1}$ and the set of eigenvalues is
\begin{align*}
\set{0,3/2} &\cup \bigcup_{m=0}^{n-1}R^{-m}\paren{
\mathcal A \cup \set{\sin^2\paren{j\pi/(3N)}}_{j=1}^{3N-1} }
\end{align*}
where
\[
\mathcal A = 
\set{z:2 T_N(1-2 z)+2 U_{N-1}(1-2 z)+1=0}.
\]

The multiplicities of these eigenvalues is as follows
\begin{enumerate}[(i)]
\item The multiplicity of $0$ is $1$, and the multiplicity of $3/2$ is $\displaystyle \frac{3N + (6N-3)(3N)^n}{3N-1}$.


\item For any $n\geq 1$ and $0 \leq m < n-1$ such that $\displaystyle R^m(z)= \sin^2(k\pi/(3N)) \in \sigma(\Delta_1)$  $k=1,2,\ldots 3N-1$, then the multiplicity of $z$ is  
\[
\frac{3N+(3N-2)(3N)^{n-m-1}}{3N-1} \quad\text{ if $3\nmid k$ or  }\quad (3N)^{n-m-1}+1 \text{ if }3\mid k.
\]
\item For any $z$ with $R^m(z) \in \mathcal A$, $n\geq0$, and $0\leq m<n-2$, the mulitiplicity of $z$ is $\displaystyle\frac{(3N)^{n-m-1}-1}{3N-1}$.
\end{enumerate}
\end{thm}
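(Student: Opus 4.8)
The plan is to prove the dimension formula, the eigenvalue list, and the multiplicity formulas together, by induction on $n$, using spectral decimation (Theorems~\ref{Randphi} and~\ref{thm-new}) as the engine and computing the base spectrum $\sigma(\Delta_1)$ directly from the dihedral symmetry of the graph on $V_1$. The dimension is the easy part: from $V_{n+1}=\bigcup_{i=1}^{3N}F_i(V_n)$, together with the facts that consecutive cells $F_i(V_n)$ and $F_{i\pm 1}(V_n)$ meet in exactly one vertex and non-consecutive cells are disjoint, one gets the recursion $|V_{n+1}|=3N|V_n|-3N$ with $|V_0|=3$; solving this linear recursion produces the claimed size of $\Delta_n$, and one notes for later use that the quantity $|V_n|-(3N)^n$ is exactly the predicted multiplicity of the eigenvalue $3/2$.

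For the inductive step, split $V_{n+1}=V_n\cup W_n$ with $W_n=V_{n+1}\setminus V_n$ and write $\Delta_{n+1}-z$ in the block form of the statement, with $A$ acting on $\ell(V_n)$ and $D$ on $\ell(W_n)$. The Schur factorization gives $\det(\Delta_{n+1}-z)=\det(D-z)\det S(z)$, and by Theorem~\ref{Randphi} $S(z)=\phi(z)(\Delta_n-R(z))$, hence
\[
\det(\Delta_{n+1}-z)=\det(D-z)\,\phi(z)^{|V_n|}\,\det(\Delta_n-R(z)).
\]
Let $\mathcal E$ denote the exceptional set: the poles of $R$, the zeros and poles of $\phi$, and $\sigma(D)$; by Theorem~\ref{Randphi} the poles of $R$ together with the zeros of $\phi$ are exactly the listed points $\cos^2(m\pi/2N)$ and the point $3/2$. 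For $z\notin\mathcal E$, applying $-(D-z)^{-1}C$ to the $V_n$-component extends every $R(z)$-eigenfunction of $\Delta_n$ to a $z$-eigenfunction of $\Delta_{n+1}$, and this extension is an isomorphism of eigenspaces; this is exactly the situation handled by the general decimation result Theorem~\ref{thm-new}, which yields $\mult{\Delta_{n+1}}{z}=\mult{\Delta_n}{R(z)}$. Iterating this correspondence down from the low-level exceptional eigenvalues produces the families $\bigcup_m R^{-m}\bigl(\mathcal A\cup\{\sin^2(j\pi/3N)\}\bigr)$, each member carrying the multiplicity — and the $3\nmid k$ versus $3\mid k$ case split — of its iterate in $\sigma(\Delta_1)$.

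It remains to understand the exceptional values. Since $W_n=\bigcup_{|w|=n}F_w(V_1\setminus V_0)$ is a disjoint union of $(3N)^n$ scaled copies of the interior of one level-one cell, $D$ is block diagonal with $(3N)^n$ copies of the level-one interior matrix $D_1$ (the restriction of $\Delta_1$ to $V_1\setminus V_0$), so $\sigma(D)=\sigma(D_1)$. For each $\lambda\in\mathcal E$ one must decide whether $\lambda\in\sigma(\Delta_{n+1})$ and count its multiplicity by exhibiting eigenfunctions explicitly. Three families carry everything: (a) the constant, which gives $0$ with multiplicity $1$; (b) the symmetric ``harmonic-extension'' modes built cell-by-cell out of the $3/2$-eigenspace of $\Delta_0$, which account for $3/2$ with the multiplicity $|V_{n+1}|-(3N)^{n+1}$ once the cell-wise degrees of freedom are counted; and (c) the modes that vanish on all of $V_n$ and are supported on a single cell or on a rotation-symmetric cluster of cells. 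Among (c), the modes confined to the interior of a level-one cell exist precisely for $\lambda\in\mathcal A$, with defining equation $2T_N(1-2z)+2U_{N-1}(1-2z)+1=0$ — the Chebyshev polynomials appear because each of the three ``sides'' of a cell is a path of $N$ edges, whose eigenmodes are Chebyshev — while the modes living on the ring of junction vertices of a cell occur for $\lambda\in\{\sin^2(j\pi/3N)\}$; counting the independent such modes inside $\ell(V_{n+1})$ and transporting their $R$-images by the decimation correspondence yields the multiplicities in (ii) and (iii), the dichotomy $3\mid k$ versus $3\nmid k$ recording whether a given junction mode is compatible with, or annihilated by, the three-fold rotational symmetry of $K$.

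The genuine obstacle is this exceptional-value analysis, not the generic decimation step. Two points demand care. First, one must show that certain exceptional values fail to lift to the spectrum — for example, when $N=1$ the value $1/2\in\sigma(D_1)$ is \emph{not} an eigenvalue of $\Delta_1$ although $5/4$ and $3/2$ are — and this reduces to computing the rank of $C$ on each $\lambda$-eigenspace of $D$, i.e.\ to checking whether a localized $D$-mode satisfies the zero-flux condition at the junction vertices; the dihedral symmetry of a single cell reduces this to small explicit linear algebra, but it must be carried out case by case over the finitely many exceptional points. Second, one must track exactly how a mode first appearing at some level $k$ contributes a number of copies at level $n$ that grows like a power of $3N$, and how those multiplicities compound under repeated $R$-preimages — this is precisely the bookkeeping that Theorem~\ref{thm-new} is designed for — so that the ranges of the unions and the exponents of $3N$ in (i)–(iii) come out correctly.
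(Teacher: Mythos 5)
Your outline has the same skeleton as the paper's argument: the dimension recursion $|V_{n+1}|=3N|V_n|-3N$, the generic decimation identity $\mult{n}{z}=\mult{n-1}{R(z)}$ away from the exceptional set, and a separate analysis of the exceptional values. But the paper does not redo the exceptional cases by hand: it quotes the general multiplicity rules of Proposition~\ref{p-mult} (from \cite{Tep08b}) --- case (\ref{p-mult5}) for $z=3/2$, case (\ref{p-mult4}) for $\sin^2(k\pi/3N)$ with $3\mid k$, cases (\ref{p-mult3}) and (\ref{p-mult7}) to rule out spurious values --- and supplements them only where they fail, namely by Theorem~\ref{thm-new} for $z\in\mathcal A$. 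You never invoke Proposition~\ref{p-mult}, and you attribute the generic identity to Theorem~\ref{thm-new}; that theorem treats only the exceptional case $z_0\in\mathcal A$, where $\phi$ has a pole and the answer is a Betti number rather than an equality of multiplicities, so the citation is wrong and your generic step rests on the Schur determinant factorization alone, which does not by itself disentangle the cancellations at points where $\det(D-z)$, the zeros of $\phi$, and the poles of $\phi$ and $R$ interact.

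The genuine gap is in your family (c). You propose to count localized modes cell by cell (``modes confined to the interior of a level-one cell exist precisely for $\lambda\in\mathcal A$''), but $\mult{D_1}{\lambda}=2$ for each $\lambda\in\mathcal A$, so a cell-by-cell count would give $2(3N)^{n-1}$, not the claimed $\bigl((3N)^{n-1}-1\bigr)/(3N-1)$. The point --- and it is the entire content of Theorem~\ref{thm-new} --- is that for $\lambda\in\mathcal A$ the zero-extension of a single cell's $D$-eigenfunction is \emph{not} an eigenfunction of $\Delta_n$: its flux at the three junction vertices vanishes only in sum, not individually, so eigenfunctions must be assembled from modes on many cells with fluxes cancelling pairwise at shared vertices, and the solution space is the cycle space of the cell-adjacency graph, whence the first Betti number $1+3N+\dots+(3N)^{n-2}$. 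By contrast, for $\sin^2(k\pi/3N)$ with $3\mid k$ the Dirichlet modes do extend by zero one cell at a time, which is why Proposition~\ref{p-mult}(\ref{p-mult4}) produces the $(3N)^{n-1}\mult{D}{z}$ growth there. Your proposal applies the same ``rank of $C$ on each eigenspace'' heuristic to both situations and gives no mechanism distinguishing them; executed as written it would get (iii) wrong, and the $3\mid k$ versus $3\nmid k$ dichotomy in (ii) is likewise not about compatibility with the three-fold rotation but about whether $\sin^2(k\pi/3N)$ lies in $\sigma(D)$ at all.
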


The proof of this theorem is given at the end of Subsection~\ref{multiplicities}.

We define $\tilde R$ to be the branch of the inverse of $R(z)$ such that $\tilde R(0) = 0$, that is $\tilde R\circ R(z) = z$ for $z$ in a neighborhood of $0$. Using the above theorem and equation \ref{LaplacianDefn}, one has

\begin{thm}\label{lambda-thm}
If, as before, $\Delta = \lim_{n\to\infty} c^n\Delta_n$ is the Laplcian on $K$, with $c = 3N+2N^2$, then the spectrum
\[
\sigma(\Delta) = \set{\lim_{n\to\infty} c^{m+n}\tilde R^{n}(z)~|~z\in\sigma(\Delta_m)}.
\]
\end{thm}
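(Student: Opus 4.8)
The plan is to derive Theorem~\ref{lambda-thm} from the one-step spectral decimation correspondence furnished by Theorem~\ref{Randphi}, together with the general decimation result Theorem~\ref{thm-new} (and \cite{Tep08a,Tep08b}), while carefully tracking the renormalization constant $c$ of~\eqref{e-def-c}. \emph{Step~1 (one-step correspondence).} From the Schur-complement factorization $S(z)=\phi(z)\paren{\Delta_n-R(z)}$ one obtains: for every $z$ outside the finite exceptional set $E$ consisting of the poles of $R$ listed in~\eqref{sing}, the zeros of $\phi$, and the boundary values $\set{\sin^2(j\pi/(3N)):1\le j\le 3N-1}$, one has $z\in\sigma(\Delta_{n+1})$ if and only if $R(z)\in\sigma(\Delta_n)$; moreover a $z$-eigenfunction of $\Delta_{n+1}$ restricts on $V_n$ to an $R(z)$-eigenfunction of $\Delta_n$, and conversely every eigenfunction of $\Delta_n$ with non-exceptional eigenvalue extends uniquely, via the map $u\mapsto -(D-z)^{-1}Cu$ dictated by $S(z)$, to an eigenfunction of $\Delta_{n+1}$. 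Iterated, this identifies the non-exceptional part of $\sigma(\Delta_n)$ with $R$-preimages and produces the multiplicity bookkeeping recorded in Theorem~\ref{finitespectrum}; verifying that the explicit $R,\phi$ of Theorem~\ref{Randphi} satisfy the hypotheses of Theorem~\ref{thm-new} is precisely what Section~\ref{computation} does.

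\emph{Step~2 ($\sigma(\Delta)$ is contained in the displayed set).} Let $\lambda\in\sigma(\Delta)$ with eigenfunction $f$. By~\eqref{LaplacianDefn}, $c^n\Delta_n f|_{V_n}(x)\to\lambda f(x)$ for $x\in V^*$, and compatibility of the graphs $V_n\subset V_{n+1}$ together with Step~1 forces, for all large $n$, that $f|_{V_n}$ is an eigenfunction of $\Delta_n$ with some eigenvalue $\lambda_n$ and that $c^n\lambda_n\to\lambda$; in particular $\lambda_n\to0$. Hence the $\lambda_n$ eventually lie in the neighborhood of $0$ on which $R$ is a local diffeomorphism, and since $R(0)=0$ with $R'(0)=c$ (a direct computation from~\eqref{form1}, which is exactly why $c$ is normalized as in~\eqref{e-def-c}), the only branch of $R^{-1}$ keeping the iterates near $0$ is $\tilde R$, with $\tilde R'(0)=1/c$. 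Thus there is $m$ with $\lambda_n=\tilde R^{\,n-m}(\lambda_m)$ for all $n\ge m$, so $\lambda=\lim_n c^n\lambda_n=\lim_k c^{m+k}\tilde R^{\,k}(\lambda_m)$ with $\lambda_m\in\sigma(\Delta_m)$. The finitely many eigenvalues whose decimation tower meets $E$ at a bounded level are handled by the explicit list in Theorem~\ref{finitespectrum} and are again of the stated form for a suitable $m$.

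\emph{Step~3 (the displayed set is contained in $\sigma(\Delta)$).} Fix $m$ and $z\in\sigma(\Delta_m)$ for which all iterates $\tilde R^{\,k}(z)$ are defined. First, $c^{m+k}\tilde R^{\,k}(z)$ converges: since $\tilde R(w)=w/c+O(w^2)$ near $0$ and $\tilde R^{\,k}(z)\to0$ geometrically at rate $c^{-k}$, the ratio of consecutive terms is $1+O(c^{-k})$, so the sequence is Cauchy; call its limit $\lambda$. Next, starting from a $z$-eigenfunction of $\Delta_m$ and iterating the extension step of Step~1 (always choosing the $\tilde R$-branch) produces a compatible family $g_n\in\ell(V_n)$ with $\Delta_n g_n=\tilde R^{\,n-m}(z)\,g_n$; these glue to a continuous function $f$ on $K$ lying in the domain of $\Delta$ (continuity and the domain condition follow from the resistance and energy estimates and Kigami's theory \cite{Kig01}), and then $c^n\Delta_n f|_{V_n}=c^n\tilde R^{\,n-m}(z)\,f|_{V_n}\to\lambda f$, so $\lambda\in\sigma(\Delta)$. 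That this construction exhausts $\sigma(\Delta)$---equivalently, that no eigenvalue is lost in the limit---follows from the dimension count of Theorem~\ref{finitespectrum} together with the density of $\bigcup_n\ell(V_n)$ in $L^2(K,\mu)$.

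\emph{Main obstacle.} The computations involved are routine; the two genuinely delicate points are (i) showing that the exceptional set $E$---nontrivial here precisely because $R$ has the poles~\eqref{sing} inside the convex hull of its Julia set---contributes no eigenvalues beyond those already catalogued in Theorem~\ref{finitespectrum}, so that for every tower the branch $\tilde R$ governs all but finitely many levels; and (ii) the completeness argument in Step~3, ensuring that every eigenfunction of $\Delta$ is a limit of decimation eigenfunctions and hence that the displayed set equals $\sigma(\Delta)$ rather than being merely one-sidedly related to it. Both rest on Theorem~\ref{thm-new} and the multiplicity analysis of Section~\ref{computation}.
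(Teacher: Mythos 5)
Your proposal is correct and follows essentially the same route as the paper: the paper gives no detailed argument here, deferring to the standard spectral decimation limit computation of \cite[Chapter 3]{Str06} and \cite{FS92}, and your three steps (one-step Schur-complement correspondence, convergence of $c^{m+n}\tilde R^{n}(z)$ from $R'(0)=c$, and completeness via the dimension count of Theorem~\ref{finitespectrum}) are precisely that standard argument adapted to the $3N$-gasket data computed in Sections~\ref{computation} and~\ref{theorems}.
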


This follows from calculations similar that in \cite[Chapter 3]{Str06}, and is a technique which goes back to \cite{FS92}.

We also are able to calculate the normalized limiting distribution of eigenvalues, that 
is, 
the limit as $n\to\infty$ of the normalized probability measures $\kappa_n$ 
 defined to be  
\[
\kappa_n(\set{z}) = \frac{\mult n{z} (3N-1)}{3N + (6N-3)(3N)^n},
\] 
where $\mult n{z}$ is the multiplicity of $z$ as an eigenvalues of $\Delta_n$. 
The following results is implied by Theorem~\ref{finitespectrum}.

\begin{cor}\label{cor-kappa}  Normalizing with by the number of eigenvalues including multiplicity, the limiting distribution of eigenvalues (the integrated density of states) is a pure point measure $\kappa=\lim\limits_{n\to\infty}\kappa_n$ with the set of atoms
\begin{eqnarray*}
 \set{\frac{3}{2}}\cup \bigcup_{k=0}^{\infty}R^{-k}\Big(\mathcal A \cup (\set{\sin^2\paren{j\pi/(3N)}}_{j=1}^{3N-1} \Big)
\end{eqnarray*}
The value of $\kappa$ at these atoms is given as follows
\begin{enumerate}[(i)]
\item $ \displaystyle \kappa(\set{3/2})=\frac{3N-2}{6N-3}$.
\item For any $m\geq 0$, if $R^m(z) = \sin^2\paren{k\pi/(3N)}$, then
\[
\kappa(z) = \begin{cases}
\displaystyle\frac{3N-2}{6N-3}(3N)^{-m-1} & \text{ if $3\nmid k$}\\[12pt]
\displaystyle \frac{3N-1}{6N-3}(3N)^{-m-1} & \text{ if $3\mid k$.}
\end{cases}
\]
\item For $z$ with $R^m(z)\in \mathcal A$ with $m\geq 0$, $\kappa(z) = \displaystyle (3N)^{-m-1}(6N-3)^{-1}$.
\end{enumerate}  
\end{cor}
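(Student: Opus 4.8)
This corollary is read off from Theorem~\ref{finitespectrum}: the plan is to substitute the multiplicity formulas into the definition of $\kappa_n$ and pass to the limit. Write $S:=\mathcal A\cup\set{\sin^2(j\pi/(3N))}_{j=1}^{3N-1}$; by Theorem~\ref{finitespectrum} the eigenvalues of $\Delta_n$ other than $0$ and $3/2$ are exactly the points of $\bigcup_{m=0}^{n-1}R^{-m}(S)$, and we call $z$ a point of \emph{generation} $m$ if $m$ is least with $R^m(z)\in S$. For $z$ of generation $m$, Theorem~\ref{finitespectrum}(ii)--(iii) expresses $\mult{n}{z}$ explicitly in terms of $n-m$ once $n$ exceeds a threshold of size $O(m)$; since the normalizing denominator $3N+(6N-3)(3N)^n$ grows like $(6N-3)(3N)^n$ while each such multiplicity has leading term a fixed constant times $(3N)^{\,n-m-1}$, only that leading term survives in
\[
\kappa(\set{z})=\lim_{n\to\infty}\frac{(3N-1)\,\mult{n}{z}}{3N+(6N-3)(3N)^n}.
\]

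Carrying out the three cases reproduces (i)--(iii). For $0$ the multiplicity is $1$ for every $n$, so $\kappa(\set 0)=0$ and $0$ is not an atom; for $3/2$ one uses Theorem~\ref{finitespectrum}(i) instead, whose dominant term gives $\kappa(\set{3/2})=\tfrac{3N-2}{6N-3}$. If $z$ has generation $m$ with $R^m(z)=\sin^2(k\pi/(3N))$, the leading multiplicities $\tfrac{3N-2}{3N-1}(3N)^{n-m-1}$ (for $3\nmid k$) and $(3N)^{n-m-1}$ (for $3\mid k$) become, after multiplication by $3N-1$ and division by $(6N-3)(3N)^n$, exactly $\tfrac{3N-2}{6N-3}(3N)^{-m-1}$ and $\tfrac{3N-1}{6N-3}(3N)^{-m-1}$. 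If instead $R^m(z)\in\mathcal A$, the leading multiplicity $\tfrac{1}{3N-1}(3N)^{n-m-1}$ gives $(3N)^{-m-1}(6N-3)^{-1}$. These are precisely the values in (i)--(iii).

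The substantive step is to show $\kappa_n$ converges weak-$*$ to the purely atomic measure $\kappa=\sum_z\kappa(\set z)\delta_z$, i.e.\ that no mass escapes onto the (disconnected) real Julia set of $R$. Inspecting \eqref{form1} shows $R$ is a rational map of degree $d:=\lfloor 3N/2\rfloor+1$, and $d<3N$ for every $N\ge1$. Hence $R^{-m}(S)$ has at most $|S|\,d^{\,m}$ points, with $|S|$ finite (at most $4N-1$, since $\mathcal A$ is the zero set of a degree-$N$ polynomial), while (ii)--(iii) give each generation-$m$ point with $m\le n-3$ a $\kappa_n$-mass of size $O((3N)^{-m-1})$ uniformly in $n$; therefore the points of generation $M<m\le n-3$ together carry only an $O\!\big((d/3N)^{M}\big)$ fraction of $\kappa_n$, uniformly in $n$, and the two ``edge'' generations $m\in\set{n-2,n-1}$ — the eigenvalues first created at the top two levels, of which there are $O(d^{\,n})=o((3N)^n)$ and each of bounded multiplicity — contribute $o(1)$ as $n\to\infty$. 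Dominated convergence over the generations then promotes the per-atom limits to $\int f\,d\kappa_n\to\int f\,d\kappa$ for every $f\in C([0,2])$ (all $\spec(\Delta_n)\subset[0,2]$ since $\Delta_n=I-P_n$ with $P_n$ a reversible Markov operator), and since each $\kappa_n$ is a probability measure so is $\kappa$; this $\kappa$ is the claimed integrated density of states.

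I expect the genuine obstacle to be the last paragraph: the per-atom limits are bookkeeping, but excluding a continuous component of $\kappa$ rests on the strict inequality $\deg R<3N$ — exactly the place where the rational (not polynomial) nature of $R$ and the location of its poles inside the convex hull of its Julia set must be confronted — together with the uniform bound on how much spectral mass each generation can hold.
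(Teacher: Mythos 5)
Your derivation is correct and follows the route the paper intends: the paper offers no proof of Corollary~\ref{cor-kappa} beyond the remark that it ``is implied by Theorem~\ref{finitespectrum}'', and your per-atom computations carry out exactly that implication, with the right limiting values in all three cases. Two observations. First, you have silently corrected a typo in the paper: as printed, Theorem~\ref{finitespectrum} swaps the formula for $\dim\Delta_n$ with the formula for $\mult{n}{3/2}$ (the recurrence $|V_{n+1}|=3N|V_n|-3N$, $|V_0|=3$, gives $\dim\Delta_n=\frac{3N+(6N-3)(3N)^n}{3N-1}$, which is the normalizer appearing in the definition of $\kappa_n$, while consistency with Lemma~\ref{m1spec} at $n=1$ forces $\mult{n}{3/2}=\frac{3N+(3N-2)(3N)^n}{3N-1}$); only with this correction does the ``dominant term'' of part (i) yield $\frac{3N-2}{6N-3}$ as you assert, so you should state the correction explicitly. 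Second, your last paragraph supplies something the paper omits entirely: the verification that $\kappa_n\to\kappa$ weak-$*$ with no mass escaping onto the Julia set, i.e.\ that $\kappa$ is genuinely a pure point probability measure rather than merely dominating the listed atoms. Your uniform tail estimate based on $\deg R=\lfloor 3N/2\rfloor+1<3N$ (so that generation-$m$ atoms number at most $|S|(\deg R)^m$ while each carries mass $O((3N)^{-m-1})$ uniformly in $n$), together with the bounded-multiplicity treatment of the edge generations $m\in\set{n-2,\,n-1}$ that fall outside the ranges $m<n-1$ and $m<n-2$ of Theorem~\ref{finitespectrum}(ii)--(iii), is correct and is a genuinely needed addition; the paper's one-line attribution to Theorem~\ref{finitespectrum} buys brevity at the cost of leaving precisely this point unaddressed.
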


In   the previous results, we assumed that $N$ is fixed, and considered limits as $n\to\infty$. 
There are two different kind of limits. One corresponds to the graph Laplacians 
$\Delta_n$, that converge as $n\to\infty$ to a Laplacian on an infinite graph, with limiting density of  
 states $\kappa$. 
The other limit corresponds to the limits of renormalized Laplacian $c^n\Delta_n$, that converge to 
the continuous Laplacian $\Delta$ on our fractal, $3N$-gasket. Eigenvalues of $\Delta$ are given in 
Theorem~\ref{lambda-thm}. 
Our results allow one to explicitly compute the limits     of these objects as $N\to\infty$. 

\begin{cor} \label{cor-N-to-infty}
\begin{enumerate}
	\item   $\displaystyle \lim_{N\rightarrow\infty}\kappa$ is a probability measure that is
	absolutely continuous on the interval $[0,1]$ and has an atom of size $\frac{1}{2} $ at the point $z=\set{3/2}$. 
	\item
	as $N\to\infty$, each postive eigenvalue $\lambda_k$ of $\Delta$ eventually has multiplicity two and 
	 $$\displaystyle \lim_{N\rightarrow\infty}\lambda_k=\frac29\pi^2k^2.$$
\end{enumerate}
\end{cor}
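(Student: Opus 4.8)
Both statements are read off the explicit data above by letting $N\to\infty$; the content is entirely in the asymptotics of the Chebyshev quantities in Theorems~\ref{Randphi}, \ref{finitespectrum}, \ref{lambda-thm} and Corollary~\ref{cor-kappa}.

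\emph{Part (1).} From Corollary~\ref{cor-kappa}, $\kappa$ is a pure point probability measure carried by $\{3/2\}\cup\bigcup_{m\ge0}R^{-m}\bigl(\mathcal A\cup\{\sin^2(j\pi/(3N))\}_{j=1}^{3N-1}\bigr)\subseteq[0,1]\cup\{3/2\}$. First $\kappa(\{3/2\})=\frac{3N-2}{6N-3}\to\frac12$, and since $3/2$ is isolated (it lies above $1$) this atom survives in the weak limit with mass $\tfrac12$. Next, the atoms lying over $\mathcal A$ carry total mass $O(1/N)$: there are $|\mathcal A|=O(N)$ at level $0$ (the polynomial $2T_N(1-2z)+2U_{N-1}(1-2z)+1$ has degree $N$), each application of $R^{-1}$ divides weights by $3N$ while multiplying the number of preimages only by $\deg R=O(N)$ from \eqref{form1}, and $\deg R/3N<1$, so the geometric sum over $m$ is $O(1/N)\to0$. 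It then remains to identify the weak limit of the atoms lying over $\{\sin^2(j\pi/(3N))\}$ at all levels $m\ge0$: at level $0$ these are the ``cycle'' eigenvalues, and as $j\pi/(3N)$ equidistributes on $(0,\pi)$ their weighted empirical measure converges, through the push-forward $x=\sin^2\theta$, to an arcsine-type density; the higher levels $m\ge1$ must be shown to equidistribute over $[0,1]$ with a limiting density as $N\to\infty$. Assembling these pieces gives $\lim_{N\to\infty}\kappa=\tfrac12\delta_{3/2}+\nu$ with $\nu$ absolutely continuous of mass $\tfrac12$ on $[0,1]$.

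\emph{Part (2).} By Theorem~\ref{lambda-thm}, $\sigma(\Delta)=\bigl\{\lim_n c^{m+n}\tilde R^n(z):z\in\sigma(\Delta_m)\bigr\}$ with $c=3N+2N^2$. Inserting $T_N(1)=1$, $U_{N-1}(1)=N$ and the values and derivatives of $T_N,U_{N-1}$ at $0$ into \eqref{form1} yields $R(z)=cz+O(z^2)$ near $0$; hence $\tilde R$ (the branch with $\tilde R(0)=0$) is a genuine local inverse with $\tilde R'(0)=1/c$, and the linearizer $\psi(z):=\lim_n c^n\tilde R^n(z)$ is analytic near $0$ with $\psi(z)=z(1+O(z))$, so each non-localized eigenvalue of $\Delta$ born at level $m$ equals $c^m\psi(z)$ for the corresponding $z\in\sigma(\Delta_m)$. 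For fixed $k$ and large $N$ the $k$-th positive eigenvalue of $\Delta$ is born at level $m=1$: of the elements of $\sigma(\Delta_1)$ in Theorem~\ref{finitespectrum}, the value $3/2$ and the points of $\mathcal A$ give only eigenvalues that diverge as $N\to\infty$, while the $\sin^2(j\pi/(3N))$ are the propagating ``cycle'' eigenvalues, of multiplicity eventually two (the pairing $j\leftrightarrow3N-j$). Since $z\mapsto c\psi(z)$ is increasing near $0$ and $\sin^2$ is increasing on $(0,\pi/2)$, the ordered distinct positive eigenvalues are $\lambda_k=c\,\psi\bigl(\sin^2(k\pi/(3N))\bigr)$, and with $\sin^2(k\pi/(3N))=\frac{k^2\pi^2}{9N^2}(1+o(1))$, $\psi(z)=z(1+o(1))$, $c=2N^2(1+o(1))$,
\[
\lambda_k=(3N+2N^2)\,\frac{k^2\pi^2}{9N^2}\,(1+o(1))\ \xrightarrow[N\to\infty]{}\ \frac{2}{9}\pi^2 k^2,
\]
each of multiplicity two for $N$ large.

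\emph{Main obstacle.} In part (2) the substantive step is the last dichotomy: showing, quantitatively for large $N$, that the bottom of $\sigma(\Delta)$ is governed exactly by the level-$1$ cycle eigenvalues $\sin^2(j\pi/(3N))$ — i.e.\ that the points of $\mathcal A$ (which satisfy $\mathcal A\subseteq R^{-1}(0)$) and all eigenvalues born at levels $m\ge2$ produce values that diverge with $N$ rather than slipping below the $k$-th cycle eigenvalue, and that the latter has multiplicity exactly two. This comes down to comparing $\sigma(\Delta_1)$ with the $c$-rescaled spectra $c^m\psi(\sigma(\Delta_m))$, $m\ge2$, using $R(z)\approx cz$ near $0$, together with the multiplicity bookkeeping of Theorem~\ref{finitespectrum}. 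The corresponding difficulty in part (1) is the absolute-continuity statement: proving that the higher-level backward orbits of $R$ equidistribute with an a.c.\ density as $N\to\infty$.
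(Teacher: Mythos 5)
Your route is the same as the paper's: both reduce to the level-one data, namely that $\sigma(\Delta_1)\setminus\set{3/2}$ consists of the cycle-graph eigenvalues $\sin^2(j\pi/(3N))=\tfrac12\paren{1-\cos(2\pi j/(3N))}$ (the paper packages this as the sawtooth decimation $R_{sawtooth}(z)=2z$ of Lemma~\ref{M1}), and then rescale by $c=3N+2N^2$ using the near-linearity of $R$ at $0$; your part (2) computation is literally the paper's. Two of your supporting claims need correction, though. First, in part (1) you should not hope that the levels $m\geq 1$ over the cycle eigenvalues are negligible the way the levels over $\mathcal A$ are: counting distinct atoms (the values $\sin^2(j\pi/(3N))$ come in pairs $j\leftrightarrow 3N-j$, and the multiplicities in Theorem~\ref{finitespectrum}(ii) already include this doubling), level $0$ carries total mass only about $\tfrac14$, and level $m$ carries about $2^{-m-2}$, so fully half of the mass of $\kappa-\tfrac12\delta_{3/2}$ sits on the deeper backward orbits. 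The equidistribution of $R^{-m}$ applied to the arcsine law, as $N\to\infty$, is therefore genuinely needed for absolute continuity; you are right to flag it, and the paper's one-line appeal to Corollary~\ref{cor-kappa} does not supply it either. (Your $O(1/N)$ bound for the $\mathcal A$ part is fine, but only because those atoms carry the smaller weight $(6N-3)^{-1}(3N)^{-m-1}$; the same degree count applied to the $\sin^2$ preimages gives $2^{-m}$ per level, uniformly in $N$, which is exactly the non-negligible contribution.)

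Second, in part (2) your assertion that the points of $\mathcal A$ give only divergent eigenvalues is true but not because those points stay away from $0$: writing $1-2z=\cos\theta$, the roots of $2T_N(1-2z)+2U_{N-1}(1-2z)+1$ accumulate at $0$ like $\sin^2(m\pi/(2N))\approx m^2\pi^2/(4N^2)$, so if they entered $\sigma(\Delta)$ at level $1$ they would produce bounded limits $\approx m^2\pi^2/2$ interlacing with $\tfrac29\pi^2k^2$. What saves the statement is Theorem~\ref{thm-new}: elements of $\mathcal A$ have multiplicity zero in $\Delta_n$ for $n\leq 2$, so by Theorem~\ref{lambda-thm} they enter $\sigma(\Delta)$ only with a prefactor $c^{3}$ and hence diverge. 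Similarly, the seed $3/2\in\sigma(\Delta_0)$ does not create a separate low eigenvalue, because $R(\sin^2(k\pi/(3N)))=3/2$ for $3\nmid k$, so $\tilde R(3/2)=\sin^2(\pi/(3N))$ and that branch reproduces the first cycle eigenvalue, while the copies of $3/2$ born at levels $m\geq1$ carry a factor $c^{m+1}$ and diverge. With these two observations your dichotomy closes and $\lambda_k\to\tfrac29\pi^2k^2$ with eventual multiplicity two follows as you computed; none of this bookkeeping appears explicitly in the paper's proof either.
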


\begin{proof}
(1)  Consider the   circular sawtooth graph, $V_1(N)$, 
	(see Figures~\ref{fig-graphs} and \ref{Sawtooth Graph}
	), 
	and its spectral decimation with respect to the circular graph of $3N$ vertices. 
	This means that we consider the process of spectral decimation that geometrically 
	can be described as removing the teeth from the sawtooth graph. The sawtooth spectral decimation function $R_{sawtooth}(z)$ is computed in lemma \ref{M1} 
	to be $$R_{sawtooth}(z)=2z. $$	The density of states of the probabilistic Laplacian on the circular graph of $3N$ vertices 
	converges as $N\to\infty$ to an absolutely continuous measure on $[0,2]$. By applying 
	$R_{sawtooth}(z)$ to this continuous measure, and using Corollary~\ref{cor-kappa}, we obtain the result. 

(2) The eigenvalues of the circular graph of $3N$ vertices are $$ 1-\cos\left(\frac{2\pi k}{3N}\right)\approx\dfrac2{9N^2}\pi^2k^2$$
	and so the eigenvalues of the circular sawtooth graph, using the function $R_{sawtooth}(z)$ as in lemma \ref{m1spec}, are 
	$$ \frac12\left(1-\cos\left(\frac{2\pi k}{3N}\right)\right)\approx\dfrac1{9N^2}\pi^2k^2.$$
	They have to be multiplied by $c = 3N+2N^2$, which yield the result, taking into account that the 
	function $R(z)$ is approximately linear in the neighborhood of zero. 
\end{proof}

\begin{rmk} 
Following arguments in Corollary~\ref{cor-N-to-infty}, one can show that, as $N\to\infty$, the eigenfunctions converge uniformly to the usual $\sin$ and $\cos$ eigenfunctions on the circle, as our $3N$-fractals $K$ converge, in the Gromov-Hausdorff sense, to the usual circle.   
See 
\cite{BCF+07,Post}
for some related results. 
\end{rmk}

\begin{figure}
\hfill\includegraphics[width=\textwidth]{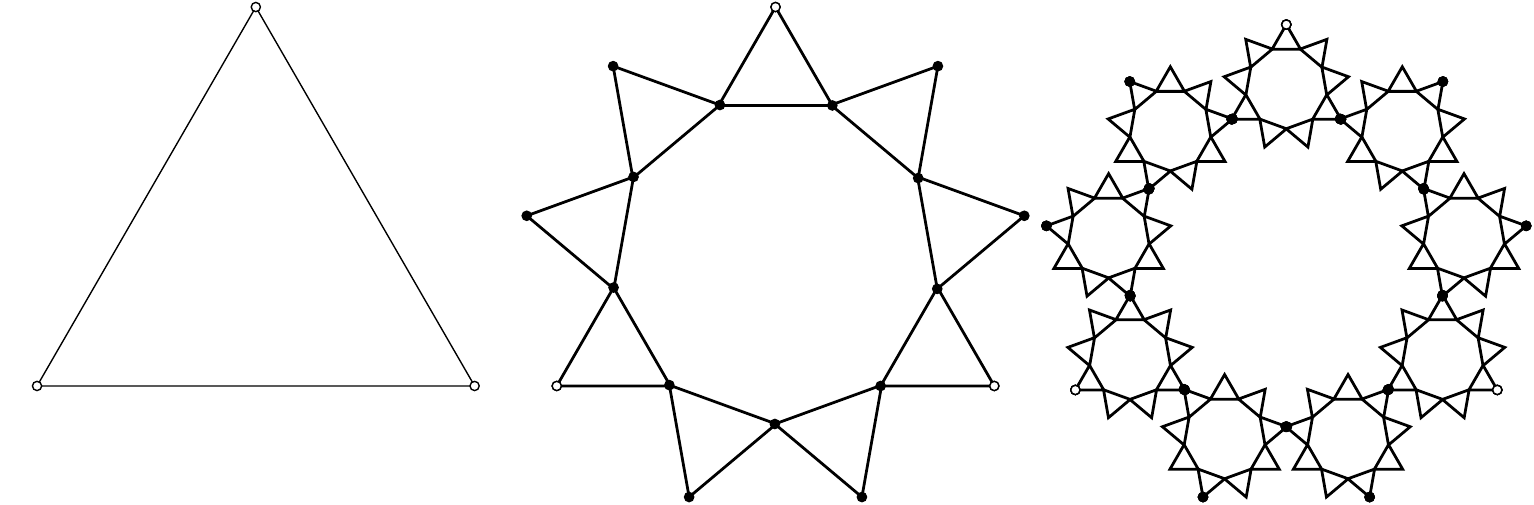}\hfill~
\caption{The 9-gasket graph approximations.\label{fig-graphs}}
\end{figure}
%
\section{Metrics, measure, energy and the heat kernel} \label{metric}

In this section we prove Theorem~\ref{HKEthm}, using techniques from \cite{Kig12,BN98}. At the heart of these techniques, is relating a geodesic metric on the space to the resistance metric.  To define the resistance metric, let
\[
\eng_n(f) = \frac12 \sum_{x\sim y} (f(x)-f(y))^2
\]
for $f\in \ell(V_n)$ be the graph energy of $V_n$. It is shown in \cite{BCF+07,Kig01} that 
\[
\eng(f) := \lim_{n\to\infty} \rho^{-n}\eng_n(f|_{V_n}),
\]
for $$\rho = c/(3N)=1+2N/3$$ and functions $f:K\to\R$, is well defined with non-trivial domain
\[
\mathcal D  : = \set{f:K\to\R~:~\eng(f)< \infty}.
\] 
 It is also established that the Laplacian $\Delta$ defined in Section~\ref{preliminaries} is the infinitesimal generator of $\eng$ as a Dirichlet form on $L^2(K,\mu)$. For more on Dirichlet forms, see \cite{ChenF,FOT11}.

We define the effective resistance between two points $x,y\in K$ as follows
\[
R(x,y) := \inf\set{ \frac{(f(x)-f(y))^2}{\eng(f,f)}~:~f\in\mathcal D}.
\]
It is also established \cite{Kig01} that $R(x,y)$ is a metric on $K$. 

The reader should not confuse this resistance metric $R(x,y)$ with the rational function 
$R(z)$ heavily studied in our  paper but not in the current section. Notation 
for $R(x,y)$ and 
$R(z)$ is well-established in the literature, and we do not to wish to 
change the tradition, even though it may be confusing within one paper. 

We now define the geodesic metric on our fractal spaces.

\begin{prop}\label{metricexistance}
There exists a metric $d_*$ on $K$, inducing the original topology, with the following properties

\begin{enumerate}[(i)]
\item $d_*(x,y) = (N+1)^{-n}d_n(x,y)$ for all $x,y\in V_n$, where $d_n(x,y)$ is the graph distance in $V_n$ defined in \eqref{e-dn}.

\item For all $w\in \alb^n$, $d_*(F_w(x),F_w(y)) = (N+1)^{-n}d_*(x,y)$.

\item \label{geoasmp}
If   $v\neq w\in \alb^n$ and $F_w(K)\cap   F_v(K)=\varnothing $, then 
\[
d_*(x,y) \geq (N+1)^{-n}\quad\text{for all}\quad x\in F_w(K),\ y\in F_v(K).
\]
On the other hand, if $w\in \alb^n$,
\[
d_*(x,y) \leq c(N+1)^{-n}\quad\text{for all}\quad x,y\in F_w(K),
\]
for some constant $c \leq 3 + 1/N$.

\item $K$ is a geodesic space with $d_*$ as a metric.

\item The Hausdorff dimension of $K$ with respect to $d_*$ is
\[
\dim_H(d_*) = \frac{\log 3N}{\log (N+1)}.
\]
\end{enumerate}
\end{prop}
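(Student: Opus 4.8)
The plan is to construct $d_*$ as a rescaled limit of the graph distances $d_n$ and verify the five properties in the natural order, since (i)--(iii) are the core and (iv)--(v) follow from them. First I would establish the key compatibility estimate: for $x,y\in V_n$, the graph distance satisfies
\[
(N+1)\, d_n(x,y) \;\le\; d_{n+1}(x,y) \;\le\; (N+1)\, d_n(x,y) + O(1),
\]
because passing from level $n$ to level $n+1$ each edge of $V_n$ that lies in some cell $F_w(V_0)$ is replaced by a shortest path through the refined cell $F_w(V_1)$, and in a single $3N$-gasket cell the combinatorial diameter of $V_1$ is bounded (this is where the constant $3+1/N$ of part (iii) enters, and it should be read off from the explicit structure of the level-$1$ graph — the ``sawtooth'' picture). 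The lower bound is immediate since a level-$(n+1)$ path projects to a level-$n$ walk. Dividing by $(N+1)^{n+1}$, the sequence $(N+1)^{-n}d_n$ is quasi-Cauchy, and a telescoping/summability argument gives a genuine limit $d_*$ on $V^* = \bigcup_n V_n$, which is uniformly continuous for the metric and hence extends to $K$; property (i) holds by construction once one checks that the limit restricted to $V_n$ equals $(N+1)^{-n}d_n$ exactly (not just asymptotically), which requires the sharper observation that adding refinement levels does not create shortcuts between points already in $V_n$ — i.e. $d_{n+1}(x,y) = (N+1)d_n(x,y)$ for $x,y \in V_n$. That exact scaling is the technical heart and the main obstacle; it amounts to a no-shortcut lemma, proved by showing any level-$(n+1)$ path between $V_n$-points can be pushed to the level-$n$ skeleton without increasing its rescaled length, using the finitely-ramified (p.c.f.) cell structure and the fact that cells meet only at the single junction points $F_i(v_2)=F_j(v_3)$.

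Next, property (ii) is the self-similarity of $d_*$: since $F_w$ maps $V_n$ into $V_{n+k}$ compatibly with the graph structure and the combinatorial distance scales by the obvious factor under restriction of a word, passing to the limit gives $d_*(F_w(x),F_w(y)) = (N+1)^{-k}d_*(x,y)$ for $w$ of length $k$; one should phrase this as: the shortest path in $K$ between $F_w(x)$ and $F_w(y)$ stays inside $F_w(K)$, again by finite ramification, so the metric on the subcopy is exactly the rescaled metric on $K$. Property (iii) then has two halves: the upper bound $d_*(x,y)\le c(N+1)^{-n}$ for $x,y$ in a common cell $F_w(K)$ follows from (ii) applied to the diameter of $K$ itself, so it reduces to $\operatorname{diam}_{d_*}(K) \le 3+1/N$, which one bounds by the level-$1$ estimate and summing the geometric series $\sum_k (3+\text{const})(N+1)^{-k}$ — i.e. the diameter of $K$ is controlled by the diameter of $V_1$ plus the diameters of the next-level cells, recursively. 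The lower bound — that disjoint cells at level $n$ are at $d_*$-distance $\ge (N+1)^{-n}$ — follows because any path between points of $F_w(K)$ and $F_v(K)$ with $\overline{w}$ and $\overline{v}$ disjoint must pass through at least one full edge of the level-$n$ graph not internal to either cell, contributing at least $(N+1)^{-n}$ after rescaling; here I would use property (i) and the fact that $d_*$ restricted to $V_n$ dominates $(N+1)^{-n}$ times the graph distance, together with a compactness/path-decomposition argument for general $x,y\in K$.

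For property (iv), geodesicity, the standard route is: $(K,d_*)$ is complete (closed subset of a complete space, or compact) and, by the cell estimates in (iii), it is uniformly ``almost'' a length space — any two points are joined by chains of cells of geometrically decreasing diameter, so one can run the usual midpoint/Hopf--Rinow argument to produce an actual shortest path; alternatively, realize $d_*$ directly as a path metric $d_*(x,y) = \inf_\gamma \operatorname{length}(\gamma)$ over $d_*$-rectifiable curves and show this infimum agrees with the limiting construction and is attained by compactness of the space of unit-speed paths (Arzelà--Ascoli). Finally property (v): the Hausdorff dimension. Using (ii) and (iii), the $3N$ level-$n$ cells form a cover by sets of diameter comparable to $(N+1)^{-n}$, and by the symmetric self-similar measure $\mu$ — which gives each level-$n$ cell mass $(3N)^{-n}$ — one verifies the volume regularity $\mu(B_{d_*}(x,r)) \asymp r^{d_H}$ with $d_H = \log 3N/\log(N+1)$ (upper bound from the cover, lower bound from the disjointness estimate (iii) ensuring balls of radius $(N+1)^{-n}$ contain a definite number of cells), and then the standard mass-distribution principle yields $\dim_H(K,d_*) = \log 3N/\log(N+1)$. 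I expect the no-shortcut lemma underlying the exact scaling in (i) and the self-containment of geodesics in (ii)--(iii) to be the only genuinely delicate points; everything else is bookkeeping with geometric series and the finitely-ramified structure.
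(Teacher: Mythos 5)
Your proposal is correct and follows essentially the same route as the paper: the exact scaling $d_{n+1}(x,y)=(N+1)d_n(x,y)$ on $V_n$ via the no-shortcut/finite-ramification argument, extension of $d_*$ from $V_*$ to $K$ by continuity, the diameter bound by a geometric series over nested cells, approximate midpoints for geodesicity, and the self-similar cell structure for the Hausdorff dimension. The only cosmetic difference is that you invoke the mass-distribution principle for (v) where the paper simply cites the IFS-of-similitudes characterization; both are standard and equivalent here.
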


\begin{proof}
If $x\sim y$ in $V_n$, $d_{n+1}(x,y) = d_1(v_0,v_1) = N+1$, because of the self-similarity of the graph. More generally, if $x,y\in V_n$ with $d_n(x,y) = k$, and $x=x_0,x_1,\ldots,x_k = y$ is a length minimizing path, then we can extend this to a path in $V_{n+1}$ by paths which connect $x_i$ to $x_{i+1}$ of length $N+1$.

This path can be seen to be the shortest path because $F_w(V_1)\cap F_v(V_1)\subset V_n$ for $v\neq w \in \alb^n$, thus any path in $V_{n+1}$ between $x$ and $y$ would induce a path in $V_n$ by only considering elements in that path which are also in $V_n$. Thus $d_{n+1}(x,y) = (N+1)d_n(x,y)$ for all $x,y\in V_n$. Thus the metric $d_*$ on $V_* := \cup_{n=0}^\infty V_n$ by $d_*(x,y) = (N+1)^{-n}d_n(x,y)$ for $x,y\in V_n$ is well defined.

We shall prove (i-iii) for $d_*$ defined on $V_*$, which will imply that $d_*$ is jointly continuous with respect to the subspace topology of $V_*\subset K$. This in turn implies that $d_*$ extends to a metric on $K$ which induces the original topology, and that (i-iii) are satisfied by the extended metric. The fact that $K$ is compact proves that $d_*$ is complete.

We have taken (i) as a definition, and (ii) can be seen inductively from the argument above. It is left to prove (iii). 

Since the $F_i$ are $d_*$-similitudes with scaling factor $(N+1)^{-1}$, it is enough to show that the above is true with $n=0$. First, given $y\in K$, for all $k$, there is $v\in \alb^k$ such that $y\in F_v(K)$. Choose $y_k\in F_v(V_0)$. Now,
\[
d_*(y_k,y_{k+1}) \leq \operatorname{diam}V_1/(N+1)^{k+1}
\]
where $\operatorname{diam}V_1 =\lfloor 3N/2 \rfloor + 1$ is the graph diameter of $V_1$. So
\[
d_*(v_i,y) \leq \operatorname{diam}V_1\sum_{k=1}^\infty \frac1{(N+1)^k} = \frac{\operatorname{diam}V_1}{N}.
\]
Thus by the triangle inequality, we have our bound on $d_*(x,y)$. 

(iv) For points on $V_*$, it is easy to see that $d_*$ has approximate midpoints: i.e., for every $x,y\in V_*$, and $\eps>0$, there is $z\in K$ such that $|d(x,y)/2-d(y,z)|\leq \eps$. In this case we can take $z$ from $V_k$ for $k$ large enough. By the standard theory of metric spaces, see \cite{BBI01}, this proves that $d_*$ is a geodesic metric.

(v) follows because $K$ is the attractor of the iterated function system generated by the $3N$ $F_i$, which are each similitudes with respect to $d_*$ with Lipshitz constant $1/(N+1)$.
\end{proof}

\begin{lem}\label{resasmp}
\begin{enumerate}[(i)]
\item There is $c_1$ such that if $x,y\in F_w(K)$ for $w\in \alb^n$, then 
\[
R(x,y) \leq c_1\resscaling^{-n}
\]
\item There is a $c_2$ such that for any $x\in V_n$ and $y\in F_w(K)$ with $w\in \alb^n$ and $x\notin F_w(K)$, then 
 \[
R(x,y) \geq c_2\resscaling^{-n}.
\]
\end{enumerate}
\end{lem}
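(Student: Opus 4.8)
The plan is to establish both bounds by reducing to finitely many base configurations on the level-$0$ and level-$1$ cells and then rescaling by the resistance factor $\rho$, exactly as one does for the Sierpinski gasket. The key input is that resistance scales as $R(F_i(x),F_i(y)) = \rho^{-1} R(x,y)$ for each of the $3N$ maps $F_i$, which follows from the self-similar identity $\eng(f) = \rho^{-1}\sum_i \eng(f\circ F_i)$ established (via \cite{BCF+07,Kig01}) in this section, together with the variational definition of $R(x,y)$.

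\textbf{Part (i).} First I would let $D = \max\{R(x,y) : x,y \in V_0\} = R(v_1,v_2)$, a finite positive number determined by the (known, explicit) level-$0$ resistance form. For $x,y \in F_w(K)$ with $w \in \alb^n$, monotonicity of effective resistance under restriction of the network (equivalently, the trace/shorting argument: restricting $\eng$ to functions supported on $F_w(K)$ only increases $R$) gives $R(x,y) \le R_{F_w(K)}(x,y)$, and by self-similarity $R_{F_w(K)}(x,y) = \rho^{-n} R_{K}(F_w^{-1}(x),F_w^{-1}(y)) \le \rho^{-n}\,\mathrm{diam}_R(K)$, where $\mathrm{diam}_R(K) = \sup_{a,b\in K} R(a,b) < \infty$ because $(K,R)$ is a compact metric space (established in \cite{Kig01}). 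So $c_1 = \mathrm{diam}_R(K)$ works. Here one should be slightly careful: the inequality $R(x,y)\le R_{F_w(K)}(x,y)$ needs the observation that any $f\in\mathcal D$ restricted to $F_w(K)$ is an admissible competitor for $R_{F_w(K)}$ with no larger energy, which is immediate from the domain monotonicity of the Dirichlet form under the self-similar decomposition.

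\textbf{Part (ii).} The lower bound is the more delicate direction and is where I expect the main obstacle. The idea is: if $x \in V_n$, $y \in F_w(K)$ with $w\in\alb^n$, and $x \notin F_w(K)$, then any path from $x$ to $y$ through the network must pass through a point of the boundary $F_w(V_0)$ of the cell $F_w(K)$ (this uses that $3N$-gaskets are post-critically finite, so cells meet only at the finite boundary set). By the series law and the fact that effective resistance between a point and a boundary vertex of its containing cell is bounded below, one gets $R(x,y) \ge R(x, F_w(V_0)) \ge \min_{v\in F_w(V_0)} R(x,v)$ minus nothing — more precisely, cutting the network along $F_w(V_0)$ can only decrease resistance, so $R(x,y) \ge R_{\text{cut}}(x,y) = R(x,F_w(V_0)) + R_{F_w(K)}(F_w(V_0), y) \ge R_{F_w(K)}(F_w(V_0),y)$. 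The second term, by self-similarity, equals $\rho^{-n} R_K(V_0, F_w^{-1}(y))$, and one needs a uniform-in-$y$ lower bound $\inf_{b\in K} R_K(V_0, b) =: c_2 > 0$; this is positive because $R_K(V_0,\cdot)$ is a continuous strictly positive function on the compact set $K \setminus V_0$ and extends continuously by $0$ only at $V_0$ itself — wait, that would give $c_2 = 0$. The fix, and the real content, is that $y \in F_w(K)$ while $x \notin F_w(K)$ forces $b = F_w^{-1}(y)$ and the relevant quantity is $R_K$ from the \emph{whole} boundary $V_0$ to $b$, which can be zero when $b\in V_0$; so instead one must argue directly that since $x\notin F_w(K)$, the point $x$ is separated from $y$ by the full boundary $F_w(V_0)$ and $R(x,y)\geq R(x,F_w(V_0))$, and then bound $R(x,F_w(V_0))$ from below by $c_2\rho^{-n}$ using that $x\in V_n$ lies on a neighboring cell whose boundary shares only one (or a controlled number of) vertices with $F_w(V_0)$, so the resistance across the neighboring cell from $x$ to the shared vertex is at least the minimal vertex-to-vertex resistance in a level-$0$ cell scaled by $\rho^{-n}$. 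Making this separation argument precise — keeping track of exactly which cells $x$ can belong to and showing the geometric constant is uniform over all $n$, $w$, and $x\in V_n\setminus F_w(K)$ — is the step that requires care, and it is essentially a finite check over the finitely many local configurations of adjacent cells permitted by the p.c.f. structure.
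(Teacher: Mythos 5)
Your part (i) is essentially correct and amounts to the same bound the paper gets by citing \cite{BN98}: the inequality $|f(x)-f(y)|^2\leq c_1\rho^{-n}\eng(f)$ for $x,y\in F_w(K)$ is exactly what your combination of the self-similar energy identity with the finiteness of the resistance diameter of $K$ produces. One small warning: the two forms of the self-similar identity you invoke are mutually inconsistent in the sign of the exponent (you write $\eng(f)=\rho^{-1}\sum_i\eng(f\circ F_i)$ at the top, but your scaling $R_{F_w(K)}(x,y)=\rho^{-n}R_K(F_w^{-1}(x),F_w^{-1}(y))$ in part (i) requires $\eng(f)=\rho\sum_i\eng(f\circ F_i)$); only the latter makes the claimed inequalities come out, so fix the convention before writing this up.

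Part (ii) has a genuine gap, located exactly at the step you flag as ``requiring care.'' The reduction $R(x,y)\geq R(x,F_w(V_0))$ is sound, but your proposed lower bound for $R(x,F_w(V_0))$ --- ``the resistance across the neighboring cell from $x$ to the shared vertex'' --- goes the wrong way: the resistance computed inside a single sub-cell is an \emph{upper} bound for the resistance in the full network, because the remaining edges of $K$ supply additional parallel paths from $x$ to $F_w(V_0)$, and parallel paths only decrease resistance. (Moreover $x$ need not lie in a cell adjacent to $F_w(K)$ at all, so the ``shared vertex'' may not exist.) Hence the deferred ``finite check over local configurations'' would be verifying the wrong quantity. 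What is missing is the paper's one-line test function. Every vertex of $V_n$ lies in at most two $n$-cells, and two $n$-cells meet in at most one point; so define $h$ to equal $1$ at $x$, to be the harmonic extension of the boundary data $(1,0,0)$ on each of the (at most two) $n$-cells containing $x$, and to vanish identically on every other $n$-cell. Then $h$ is well defined and has finite energy, $h(y)=0$ since $y\in F_w(K)$ while $x\notin F_w(K)$, and self-similarity gives $\eng(h)\leq 2\rho^{n}\eng(g)$ with $g$ the level-$0$ harmonic extension of $(1,0,0)$. Therefore
\[
R(x,y)\ \geq\ \frac{\bigl(h(x)-h(y)\bigr)^2}{\eng(h)}\ \geq\ \frac{\rho^{-n}}{2\,\eng(g)},
\]
uniformly in $n$, $w$, $x$, and $y$ --- including the case $y\in F_w(V_0)$ that defeated your first attempt. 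In your electrical language this is the shorting argument done correctly: short everything outside the at most two cells containing $x$ to a single ground node that absorbs $F_w(V_0)$; the shorted network is an explicit parallel combination of two scaled copies of the level-$0$ cell, with no case analysis needed.
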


\begin{proof}
Part (i) follows from proposition 7.16 (b) in \cite{BN98} which states there is a constant $c_1$ such that $|f(x)-f(y)|^2 \leq c_1\rho^{-n}\eng(f)$ for $x,y\in F_w(K)$, and then using s proof similar to that of part (\ref{geoasmp}) of proposition \ref{metricexistance}.

Part (ii): There are at most two $w$ in $\alb^n$, such that $x\in F_w(V_0)$. We construct the function $h$ such that $h\circ F_w \equiv 0$ for $w$ such that $x\notin F_w(K)$, and if $x\in F_w(K)$, then $F_w^{-1}(x)$ is in $V_0$, we define $h\circ F_w$ to be the harmonic extension of the function $g$ which is $1$ at $F^{-1}_w(x)$ and $0$ at the other boundary points. Noting that $\eng(g) = \resscaling/2$, then 
\[
R(x,y) \geq \eng(h) = \resscaling^{-n+1} \quad\text{or}\quad \resscaling^{-n+1}/2
\]
\end{proof}

\begin{prop}\label{snowflake}
There is a constant $c$ such that
\[
c^{-1} (d_*(x,y))^\alpha \leq R(x,y) \leq c (d_*(x,y))^\alpha
\]
for all $x,y\in K$ and $\alpha = \log\resscaling/\log(N+1)$.
\end{prop}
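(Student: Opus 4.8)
The plan is to sandwich $R(x,y)$ between constant multiples of $(d_*(x,y))^\alpha$ by exploiting the self-similar scaling of both metrics together with the two one-sided estimates already at hand: Proposition~\ref{metricexistance}\eref{geoasmp} controls $d_*$ on and between cells, and Lemma~\ref{resasmp} controls $R$ on and between cells. Since $\resscaling^{-n} = ((N+1)^{-n})^\alpha$ by the very definition $\alpha = \log\resscaling/\log(N+1)$, the scale-$n$ resistance bounds in Lemma~\ref{resasmp} are exactly the $\alpha$-th powers of the scale-$n$ metric bounds in Proposition~\ref{metricexistance}. So the strategy is: given distinct $x,y\in K$, choose the natural scale $n$ determined by $d_*(x,y)$, locate $x$ and $y$ in cells $F_w(K),F_v(K)$ at that scale, and read off matching upper and lower bounds for $R(x,y)$ in terms of $(N+1)^{-n}$, hence in terms of $(d_*(x,y))^\alpha$.

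In detail, first I would fix $x\neq y$ and let $n$ be the largest integer with $d_*(x,y) \leq (N+1)^{-n}$ (using that $\operatorname{diam}_{d_*} K < \infty$ and $d_*(x,y)>0$, this is well-defined). For the \emph{upper} bound on $R$: by the choice of $n$ and the contrapositive of the first inequality in \eref{geoasmp}, $x$ and $y$ lie in cells $F_w(K)$, $F_v(K)$ with $w,v\in\alb^n$ that are not disjoint, i.e.\ $F_w(K)\cap F_v(K)\neq\varnothing$; pick a common point $p$ in the intersection. Then $x,p$ both lie in $F_w(K)$ and $p,y$ both lie in $F_v(K)$, so Lemma~\ref{resasmp}(i) gives $R(x,p)\leq c_1\resscaling^{-n}$ and $R(p,y)\leq c_1\resscaling^{-n}$; since $R(\cdot,\cdot)$ is a metric, $R(x,y)\leq 2c_1\resscaling^{-n} = 2c_1((N+1)^{-n})^\alpha$. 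Finally $(N+1)^{-n} = (N+1)\cdot(N+1)^{-n-1} \leq (N+1)\, d_*(x,y)$ by maximality of $n$, which gives $R(x,y)\leq 2c_1(N+1)^\alpha (d_*(x,y))^\alpha$, the desired upper bound.

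For the \emph{lower} bound, I would take $m = n+1$ (or $n$ plus a fixed constant depending only on the overlap combinatorics of the gasket). Since $d_*(x,y) \leq (N+1)^{-n}$ is not too small relative to scale $m$, one can find a cell $F_w(K)$ with $w\in\alb^m$ and a vertex $z\in V_m$ lying in $F_w(K)$ such that (say) $y\in F_w(K)$ but $x\notin F_w(K)$ — this uses the second inequality of \eref{geoasmp} (cells at scale $m$ have diameter $\leq c(N+1)^{-m}$) together with the fact that $d_*(x,y)$ exceeds a fixed multiple of $(N+1)^{-m}$, so $x$ and $y$ cannot both sit in one scale-$m$ cell; one then passes to a vertex $z\in V_m$ of that cell separating them. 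Lemma~\ref{resasmp}(ii) then yields $R(x,z)\geq c_2\resscaling^{-m}$, and since resistance is a metric and $R(x,y)$ is comparable to $R(x,z)+R(z,y)$ one must argue that removing the detour through $z$ costs only a constant: the cleanest route is to note $R(x,y) \geq R(x,z) - R(z,y)$ is the wrong sign, so instead I would use that the resistance from $x$ to the entire boundary set $F_w(V_0)$ is $\gtrsim \resscaling^{-m}$ and that $y$ lies in $F_w(K)$, whence a standard network-reduction / monotonicity argument (shorting $F_w(V_0)$) gives $R(x,y)\geq c_2\resscaling^{-m} = c_2((N+1)^{-m})^\alpha \geq c_2(N+1)^{-\alpha}(N+1)^{-\alpha n}(\text{const}) \geq c' (d_*(x,y))^\alpha$, using $d_*(x,y)\leq (N+1)^{-n}$.

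The main obstacle is the lower bound, specifically making rigorous the claim that $x$ and $y$ are genuinely separated by a scale-$m$ vertex $z$ for an $m$ only a bounded amount below the scale dictated by $d_*(x,y)$, and that this separation forces $R(x,y)$ itself (not just $R(x,z)$) to be $\gtrsim \resscaling^{-m}$. This requires a careful use of the chaining/overlap structure of the $3N$-gasket (how many cells at a given scale a point can belong to, and how cells at adjacent scales meet), i.e.\ essentially the combinatorial content behind \eref{geoasmp}, combined with the monotonicity of effective resistance under shorting to reduce to the one-point-to-boundary estimate in Lemma~\ref{resasmp}(ii). Once that separation lemma is in place, both inequalities follow by the bookkeeping above, and the constant $c$ in the statement can be taken to be the maximum of $2c_1(N+1)^\alpha$ and the reciprocal of the lower constant $c'$.
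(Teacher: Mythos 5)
Your overall strategy is the same as the paper's: play the two one-sided estimates of Lemma~\ref{resasmp} off against Proposition~\ref{metricexistance}\eref{geoasmp} at the single scale $n$ dictated by $d_*(x,y)$, using $\resscaling^{-n}=\bigl((N+1)^{-n}\bigr)^{\alpha}$. Your upper bound is essentially complete and matches the paper's (the paper organizes the scale selection through the neighborhoods $N_n(x)$ of $n$-cells meeting a cell containing $x$, but chaining through a common point of two intersecting cells via Lemma~\ref{resasmp}(i) and the triangle inequality for $R$ is the same computation; the only nitpick is that you need strict inequality $d_*(x,y)<(N+1)^{-n}$ to invoke the contrapositive of \eref{geoasmp}, which costs you one more factor of $N+1$ in the constant).

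The lower bound, however, has a genuine gap, and you have located it yourself. Lemma~\ref{resasmp}(ii) is a statement about a \emph{vertex} $x\in V_m$ and a point in a cell not containing it, so it cannot be applied verbatim to an arbitrary pair $x,y\in K$; the detour through an auxiliary vertex $z$ fails because the triangle inequality points the wrong way, as you note; and the shorting argument you substitute is not carried out and is delicate: the separating-set inequality $R(x,y)\ge R(x,A)$ does hold when $A=F_w(V_0)$ separates $x$ from $y$, but $R(x,F_w(V_0))\gtrsim\resscaling^{-m}$ fails when $x$ lies in a cell adjacent to $F_w(K)$ near a shared vertex, so one still needs a two-cell buffer between $x$ and $y$ at scale $m$ --- exactly the combinatorial point you leave open. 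The clean fix, and what the paper actually does, is to bypass both the triangle inequality and network reduction and use the variational characterization of $R$ directly with an explicit test function: take $h$ equal to $1$ on the $m$-cells containing $x$, extended harmonically to the adjacent $m$-cells with value $0$ at their outer boundary vertices, and $0$ elsewhere. This is the function built in the proof of Lemma~\ref{resasmp}(ii); it vanishes outside $N_m(x)$ and its energy is controlled by $\resscaling$-scaling, so for \emph{every} $y\notin N_m(x)$ one gets $R(x,y)\ge (h(x)-h(y))^2/\eng(h)\ge c\,\resscaling^{-m}$ in one line. Choosing $m$ to be the largest integer with $y\in N_m(x)$ then gives both $d_*(x,y)\asymp (N+1)^{-m}$ (any path from $x$ to $y$ must cross two vertices of $V_{m+1}$) and $R(x,y)\asymp\resscaling^{-m}$, closing the argument.
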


\begin{proof}
Define 
\[
N_n(x) = \cup\set{F_w(K)~|~ F_w(K)\cap F_v(K)\neq \emptyset~\text{for $v$ such that}~x\in F_v(K)}
\]
to be the union of $n$-cells which contain $x$ or intersect a cell which contains $x$. If $y\in N_n(x)$, then by proposition \ref{metricexistance} (\ref{geoasmp}) $d_*(x,y) \leq 2c_1(N+1)^{-n}$. Similarly, $R(x,y)\leq 2c_2\resscaling^{-n}$.

On the other hand, if $y\notin N_n(x)$, any path from $x$ to $y$ must cross two elements of $V_n$ (possibly including $x$). We conclude $d_*(x,y) \geq c_3(N+1)^{-n}$. The function defined in the proof of \ref{resasmp} vanishes outside of $N_n(x)$, so $R(x,y) \geq c_3\resscaling^{-n}$.
\end{proof}

\begin{proof}[Proof of Theorem \ref{HKEthm}]
We shall walk through the requirements of Theorem 15.10 of \cite{Kig12}. The property (ACC) is implied by $R(x,y)$ being uniformly perfect for local Dirichlet forms, see \cite[Proposition 7.6]{Kig12}. Our Dirichlet form is local, and $R(x,y)$ is uniformly perfect which is implied by proposition \ref{snowflake} and the fact that $d$ is geodesic. The fact that $d\sim_{QS} R$ is also implied by proposition \ref{snowflake} with $g(r) = r^{d_w}$, where $d_w = \alpha + d_H$ and $d_H = \operatorname{dim}_H(d_*)$. It also follows from proposition \ref{metricexistance} part (iii) that there are constants $c_5$ and $c_6$ such that
\[
c_5 d_*(x,y)^{d_H}\leq \mu(B_{d_*}(x,d_*(x,y))) \leq c_6 d_*(x,y)^{d_H},
\]
which along with proposition \ref{snowflake} proves the (DM2)$_{g,d_*}$. The chain condition is also implied by the fact that $d_*$ is geodesic. Thus Theorem \ref{HKEthm} follows from \cite[Theorem 15.10]{Kig12} using condition (b).
\end{proof}
%
\section{Computation of $R(z)$, $\phi(z)$ and the exceptional set}\label{computation}

Consider the sawtooth graph $G_2$ with vertex set $\set{v_0,\ldots,v_m,u_1,\ldots,u_{m}}$ and edge relation $v_i \sim v_{i+1}$, $v_i\sim u_i$, and $v_i \sim u_{i+1}$ for $i=0,1,\ldots, m-1$. Let $\Lap_2$ be the graph Laplacian of $G_2$. The sawtooth graph is depicted in Figure \ref{Sawtooth Graph}.

We consider the eigenvalue problem with boundary $v_0$ and $v_m$, that is to say that to find $f\in\ell(G_2)$ and $z\in\C$ such that
\[
\Lap_2 f(x) = zf(x) \text{ for all } x\neq v_0,v_m,
\]
given prescribed values for $f(v_0)$ and $f(v_m)$.

\begin{figure}[t]
\includegraphics[scale=1]{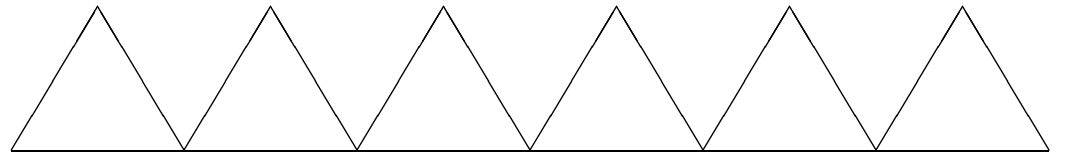}
\caption{Sawtooth graph $G_2$ }\label{Sawtooth Graph}
\end{figure}

\begin{lem}\label{M1}
For $0<z<1$, $z \neq (1-\cos(k\pi/m))/2$, the eigenvalue problem with boundary on $G_2$ is solved by linear combinations of
\begin{eqnarray}
f_{1}(v_k) = -\frac{\sin((k-\frac{m}{2})\arccos(1-2z))}{\sin(\frac{m}{2}\arccos(1-2z))} \label{ef_asym}\\
f_{2}(v_k) = \frac{\cos((k-\frac{m}{2})\arccos(1-2z))}{\cos(\frac{m}{2}\arccos(1-2z))} \label{ef_sym},
\end{eqnarray}
where we extend $f$ to $u_k$ by the eigenvalue equation
\[
f_i(u_k) = \frac{f_i(v_k)+f_i(v_{k+1})}{1-z},\quad i=1,2.
\]
For $z=3/2$, the $z$-eigenspace is span by the functions
\[
g_j(u_k) = (-\delta_{jk})^j \quad \text{and}\quad g_j(v_k) = \begin{cases}
(-1)^k & k<j \\
0 &\text{otherwise,}
\end{cases}
\]
for $j=1,2,\ldots,m$ and $\delta_{jk}$ is the Kronecker delta. Along with $g_0(u_k) = 0$ and $g_0(u_k) = (-1)^k$.
\end{lem}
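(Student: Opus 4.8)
The plan is to reduce the eigenvalue problem on the whole sawtooth $G_2$ to a scalar recurrence on the spine $v_0,\dots,v_m$ by eliminating the degree-two ``tooth'' vertices, i.e.\ by taking a Schur complement over the $u_k$. Since $0<z<1$ forces $z\neq 1$, the equation $\Lap_2 f(u_k)=zf(u_k)$ at each (interior) vertex $u_k$ can be solved for $f(u_k)$, which gives the displayed formula expressing $f(u_k)$ as a constant multiple of $f(v_k)+f(v_{k+1})$. Substituting these expressions into the equations $\Lap_2 f(v_k)=zf(v_k)$ at the interior spine vertices $v_1,\dots,v_{m-1}$ turns each of them into a three-term relation among $f(v_{k-1}),f(v_k),f(v_{k+1})$. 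The one computation worth isolating is that the resulting coefficient factors as
\[
4(1-z)^2-1=(1-2z)(3-2z),
\]
so the common factor $3-2z$ cancels and the recurrence collapses to
\[
f(v_{k+1})-2(1-2z)\,f(v_k)+f(v_{k-1})=0,\qquad k=1,\dots,m-1 .
\]

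Next I would solve this recurrence explicitly. Setting $\theta=\arccos(1-2z)\in(0,\pi)$, its characteristic equation $t^2-2(1-2z)t+1=0$ has roots $e^{\pm i\theta}$, so its solution space is two-dimensional. Because $G_2$ carries the reflection symmetry $v_k\leftrightarrow v_{m-k}$ (and correspondingly on the $u_k$), it is natural to split this space into its symmetric and antisymmetric parts about the midpoint $k=m/2$; normalized to take the value $1$ at $v_0$, these are precisely $f_2$ (even, with $f_2(v_0)=f_2(v_m)=1$) and $f_1$ (odd, with $f_1(v_0)=1$, $f_1(v_m)=-1$). The normalizing denominators $\cos(\tfrac m2\theta)$ and $\sin(\tfrac m2\theta)$ are simultaneously nonzero exactly when $\sin(m\theta)\neq 0$, i.e.\ when $\theta\notin\{k\pi/m\}$, i.e.\ when $z\neq(1-\cos(k\pi/m))/2$ --- which is the hypothesis of the lemma. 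For such $z$, $f_1$ and $f_2$ form a basis of the solution space (a nonzero odd and a nonzero even function are independent); since $f(u_k)$ is a fixed linear function of $f(v_k),f(v_{k+1})$, any $f$ solving the eigenvalue problem equals $\alpha f_1+\beta f_2$ on all of $G_2$, with $(\alpha,\beta)$ recovered from the boundary data by $\alpha=\tfrac12(f(v_0)-f(v_m))$ and $\beta=\tfrac12(f(v_0)+f(v_m))$; conversely every such combination is a solution (and uniqueness for prescribed boundary values is the same statement, since the images $(1,-1)$ and $(1,1)$ of $f_1,f_2$ under $f\mapsto(f(v_0),f(v_m))$ are independent). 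One then confirms by a direct substitution that $f_1,f_2$ together with their $u_k$-extensions really do satisfy $\Lap_2 f=zf$ at every non-boundary vertex.

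The value $z=3/2$ is genuinely exceptional and I would handle it by hand: there $3-2z=0$ and simultaneously $4(1-z)^2-1=0$, so after eliminating the $u_k$ (still possible, as $1-z=-\tfrac12\neq 0$) the interior spine equations become $0=0$. Hence for \emph{any} values $f(v_0),\dots,f(v_m)$, defining $f(u_k)$ by the tooth equation yields a $\tfrac32$-eigenfunction off the boundary, so the eigenspace is $(m+1)$-dimensional and it remains only to exhibit a basis. The functions $g_0,\dots,g_m$ in the statement are such a basis: each $g_j$ satisfies the tooth equation and the (now trivial) interior spine equations, as one checks directly, and linear independence is immediate from the ``staircase'' support of $g_j$ on $v_0,\dots,v_m$.

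The main obstacle I anticipate is purely bookkeeping: performing the substitution of the $f(u_k)$ into the spine equations carefully enough for the factorization $4(1-z)^2-1=(1-2z)(3-2z)$ to emerge, and justifying the cancellation of $3-2z$ --- legitimate on $0<z<1$, where $3-2z>1$, but it is exactly this cancellation that degenerates at $z=3/2$ and forces the separate argument there. A secondary nuisance is matching the index conventions and normalizing constants in the definitions of $f_1$, $f_2$ and of the $u_k$-extension so that the claimed boundary values $f_1(v_0)=1=-f_1(v_m)$ and $f_2(v_0)=f_2(v_m)=1$ come out exactly.
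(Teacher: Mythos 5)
Your argument is correct and, for $0<z<1$, takes essentially the same route as the paper: eliminate the tooth vertices $u_k$ (using $z\neq 1$), substitute into the interior spine equations, and cancel the common factor $3-2z$ to obtain the three-term recurrence $f(v_{k+1})-2(1-2z)f(v_k)+f(v_{k-1})=0$; the paper phrases this same reduction as ``$f$ restricted to the spine is an eigenfunction of the path Laplacian $\Lap_1$ with eigenvalue $2z$,'' which is the identical recurrence, and then likewise identifies $f_1,f_2$ as the odd/even sinusoidal solutions normalized by their boundary values. The one place you genuinely diverge is $z=3/2$: the paper shows the $g_j$ exhaust the eigenspace by induction on $m$, subtracting suitable multiples of $g_m$ and $g_0$ to reduce to the subgraph $G_2^{m-1}$, whereas you observe that at $z=3/2$ the reduced spine equations degenerate to $0=0$, so the solution space of the boundary problem has dimension exactly $m+1$ and is therefore spanned by the $m+1$ independent functions $g_0,\dots,g_m$. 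Your dimension count is cleaner and avoids the induction entirely; the only thing to add is the (two-vertex) verification that each $g_j$ also satisfies the eigenvalue equation at $v_0$ and $v_m$ — the paper's parenthetical remark that these are eigenfunctions ``on the whole of $G_2$'' — so that the genuine $3/2$-eigenspace, being sandwiched between $\operatorname{span}\{g_j\}$ and the $(m+1)$-dimensional boundary-problem space, coincides with both.
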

\begin{proof}
First consider the subgraph $G_1$ consisting of the vertexes $v_0,v_1,\ldots,v_m$, and the Laplacian $\Lap_1$ on this subgraph. It is known that functions of the form $g(v_k) = e^{ik\theta}$, for $\theta$ to be determined, have eigenvalues $1-\cos(\theta)$ except, potentially, at the endpoints $v_0$ and $v_m$.  Suppose $f$ is an eigenfunction of $\Lap_2$ with eigenvalue $z_2$, except at $v_0$ and $v_m$. Then, for $k\in\set{1,\ldots,m-1}$,
\begin{eqnarray}
\nonumber &\Lap_2  f(u_{k}) = z_{2}f(u_{k}) = f(u_{k}) - \frac{1}{2}(f(v_{k-1}) + f(v_{k})) \\
\nonumber &\Lap_2  f(u_{k+1}) = z_{2}f(u_{k+1})= f(u_{k+1}) - \frac{1}{2}(f(v_{k}) + f(v_{k+1}))
\end{eqnarray}

\noindent so that through rearrangement we get
\begin{equation}
(1-z_{2})(f(u_{k}) + f(u_{k+1})) = \frac{1}{2}(f(v_{k-1}) + 2f(v_{k}) + f(v_{k+1})). \label{eqn1}
\end{equation}
Now, for $ v_k$, $k\notin\set{0,m}$,
\begin{equation}
\Lap_2  f(v_k) = z_2 f(v_k) = f(v_k) - \frac{1}{4}\paren{(f(v_{k-1}) + f(v_{k+1})) + (f(u_{k}) + f(u_{k+1}))} \label{eqn2}
\end{equation}
so by dividing by $1-z_2$ in (\ref{eqn1}), substituting the result into (\ref{eqn2}), and collecting our $f(v_k)$'s we get
\begin{equation}
(z_{2}-1+\frac{1}{4(1-z_{2})})f(v_k) = -\frac{1}{4}(1+\frac{1}{2(1-z_{2})})(f(v_{k-1}) + f(v_{k+1})). \label{simple}
\end{equation}
By further simplifying (\ref{simple}), we get
\begin{equation}
 2z_2f(v_k) = \Lap_1 f(v_k). \label{poly}
\end{equation} 
In particular, if $z\neq 3/2$, $f|_{G_1}$ is an eigenfunction of $\Lap_1$ with eigenvalue $z_1 = 2z_2$. But since, for an eigenfunction $f$, the values at $u_k$ are determined by the eigenvalue and the functions value at $v_k$ and $v_{k+1}$, we can assume  $ z_1 = 1-\cos(\theta)$. 

In addition, both $\sin(k\theta)$ and $\cos(k\theta)$ are both eigenfunctions themselves on the line graph with eigenvalues of the same form. It is easy to see the $f_1$ and $f_2$ above are eigenfunctions with eigenvalue $z$ as prescribed, and that this technique does not work for $z$ outside of that range. Further, since $f_1(v_0) = -f_1(v_m) = 1$ and $f_2(v_0) = f_2(v_2) = 1$, any value at the $v_0$ and $v_2$  points can be achieved by a linear combination of these two function.

It is easy to check that the functions $\set{g_j}_{j=0}^m$ are linearly independent and are $3/2$-eigenfunctions of $\Lap_2$ (notice that these function are eigenfunctions on the whole of $G_2$, not just away from boundary).

To prove that these are all such eigenfunctions, possibly with boundary, we proceed by induction on $m$. It is clearly true for $m=1$. Define the subgraph of $G_2$, $G_2^{n} = \set{v_k\text{ or }u_k\in G_2~:~k\leq n}$. If $f$ is an $3/2$ eigenfunction of $G_2$, then $f \pm f(u_m)g_m \pm f(v_m) g_0$, where the $\pm$ is determined by the parity of $m$, is zero away from the vertices of $G_2^{m-1}$, and thus when restricted is a $3/2$-eigenfunction with boundary of $G_2^{m-1}$, and thus must be in the span of $\set{f_j}_{j=0}^{m-1}$ by our induction hypothesis.
\end{proof}


\begin{lem}\label{m1spec}
For a $3N$-gasket, the spectrum of $\Delta_1$, the Laplacian of the first level approximating graph is 
\begin{equation}
\sigma(\Delta_1) = \set{\frac{3}{2}}\bigcup\set{ \sin^2\paren{\frac{j \pi}{3N}} \text{ : } j=0,1,\ldots,3N-1}. \label{Mev}
\end{equation}
where the multiplicity of $3/2$ is $3N$ and where the multiplicity of the other eigenvalues is given by their multiplicity in (\ref{Mev}).
\end{lem}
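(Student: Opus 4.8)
The plan is to compute the spectrum of $\Delta_1$ directly by exploiting the fact that the level-$1$ graph of a $3N$-gasket is precisely the circular sawtooth graph: it consists of $3N$ copies of the sawtooth segment $G_2$ (with $m=N$) arranged cyclically, where the ``spine'' vertices $v_0,\dots,v_N$ of consecutive copies are glued at their endpoints, forming a $3N$-cycle of spine vertices, while the ``tooth'' vertices $u_k$ remain pendant to each edge of that cycle. Thus $V_1$ has $3N$ spine vertices (forming $V_0$ together with the gluing vertices) and $3N\cdot N$ — wait, more precisely the spine has $3N\cdot N$ vertices total? No: each of the $3N$ cells contributes $N+1$ spine vertices but adjacent cells share one, so the spine is a cycle $C_{3N\cdot 1}$ refined — actually the spine is the cycle on $3N N$ vertices is wrong; I would first carefully recount: each $F_i(V_0)$ contributes along the geodesic a path of length $N+1$ in $V_1$, so the spine is a cycle on $3N(N+1) - 3N = 3N\cdot N$ vertices, and there are $3N\cdot N$ tooth vertices as well. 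The exact count is not needed for the eigenvalues, only for multiplicities, so I would pin it down but not belabor it.

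The key step is the reduction \eqref{poly} from Lemma~\ref{M1}: away from the teeth, an eigenfunction $f$ of $\Delta_1$ with eigenvalue $z\neq 3/2$ restricts on the spine to an eigenfunction of the cycle graph Laplacian $\Lap_{\text{cycle}}$ with eigenvalue $2z$, and conversely every cycle eigenfunction extends uniquely (via $f(u_k) = (f(v_k)+f(v_{k+1}))/(1-z)$) to a $\Delta_1$-eigenfunction. Since the spine is a cycle on $3N\cdot N$ vertices, $\Lap_{\text{cycle}}$ has eigenvalues $1-\cos(2\pi \ell/(3N\cdot N))$... again I should recheck the cycle length; but the \emph{stated} answer $\sin^2(j\pi/(3N)) = (1-\cos(2\pi j/(3N)))/2$ tells me that the relevant cycle for the decimated problem has $3N$ vertices, which is consistent with the picture that after removing teeth one contracts each sawtooth segment to a single edge, leaving exactly the $3N$-cycle on $V_0\cup\{\text{gluing points}\}$ — no, $V_0$ has only $3$ points. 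I think the correct statement is: the spine, \emph{as a graph relevant to the eigenvalue equation}, behaves so that the non-$3/2$ part of $\sigma(\Delta_1)$ is $\{\tfrac12(1-\cos\theta) : e^{i\theta}\text{ a }3N\text{-th root of unity}\} = \{\sin^2(j\pi/(3N))\}_{j=0}^{3N-1}$, with multiplicity matching the $3N$-cycle multiplicities (so $j$ and $3N-j$ give a double eigenvalue, $j=0$ simple). I would establish this by the boundary-value bookkeeping: the circular sawtooth has a transfer-matrix/Floquet structure, and imposing periodicity around the $3N$ cells forces $N\theta' $ or equivalently picks out the $3N$-th roots of unity.

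For the $3/2$-eigenspace: by Lemma~\ref{M1} each of the $3N$ sawtooth cells carries the functions $g_0,\dots,g_N$, but on the closed-up graph the compatibility/gluing conditions at the $3N$ shared spine vertices cut the total dimension down; I expect the count to come out to exactly $3N$ (one ``$3/2$-mode'' per cell, or equivalently one per tooth-containing face after quotienting by the relations), and I would verify this by a dimension count: $\dim\ell(V_1) = 3N N + 3N N$ (spine plus teeth) should equal $1 + \sum(\text{multiplicities of }\sin^2(j\pi/3N)) + 3N$, and solving gives the multiplicity of $3/2$. The main obstacle is precisely getting these multiplicities and the vertex count exactly right — reconciling the local picture of Lemma~\ref{M1} (sawtooth with boundary) with the global circular identification, making sure no eigenfunctions are double-counted at the gluing vertices and that the $3/2$-eigenfunctions are genuinely independent after closing the loop. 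Once the bookkeeping is settled, the eigenvalue list follows immediately from \eqref{poly} and the known spectrum of the cycle graph.
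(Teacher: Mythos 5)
Your overall strategy --- reduce the level-one eigenvalue problem to the spine cycle via the relation $2z_2 f(v_k)=\Lap_1 f(v_k)$ from Lemma~\ref{M1}, then handle the $3/2$-eigenspace separately --- is exactly the route the paper takes. The gap is that you never correctly identify what the level-one graph is, and the rest of your argument leans on that. $V_1$ is \emph{not} $3N$ copies of a sawtooth segment with $m=N$: by the gluing rule $F_i(v_2)=F_{j}(v_3)$, $i=j+1 \bmod 3N$, the graph on $V_1$ consists of $3N$ triangles glued cyclically at single vertices, so it is the circular sawtooth, i.e.\ $G_2$ with $m=3N$ and the single identification $v_0=v_{3N}$; the spine is a cycle on the $3N$ shared vertices $F_i(v_2)$, the teeth are the $3N$ apexes $F_i(v_1)$, and $|V_1|=6N$, not of order $N^2$. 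Your proposed spine of $3N\cdot N$ vertices agrees with this only when $N=1$. Because of this, your derivation of the eigenvalue list $\sin^2(j\pi/(3N))$ is obtained by reverse-engineering from the stated answer (``the stated answer tells me the relevant cycle has $3N$ vertices'') rather than from the geometry, and your dimension count for the multiplicity of $3/2$ uses the wrong total dimension $3N\cdot N+3N\cdot N$, so it cannot close; note also that with your structure the natural count of $3/2$-modes would be one per tooth, i.e.\ $3N\cdot N$, contradicting your own expectation of $3N$.

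With the correct identification, everything you planned goes through essentially verbatim and matches the paper's proof: the functions $f(v_k)=e^{i\theta k}$ on the $3N$-cycle force $\theta=2\pi j/(3N)$, equation \eqref{poly} gives $z=\tfrac12\paren{1-\cos\theta}=\sin^2(j\pi/(3N))$ with the cycle multiplicities, and the remaining $6N-3N=3N$ dimensions form the $3/2$-eigenspace, whose independence after closing the loop is checked by exhibiting an explicit spanning set (the paper uses $g_j-cg_1$ built from Lemma~\ref{M1}). You should also verify, rather than merely expect, that the extended eigenfunctions satisfy the eigenvalue equation at the identified vertex $v_0=v_{3N}$; this is the one place where the periodicity genuinely enters and where the quantization $\theta\in\frac{2\pi}{3N}\Z$ comes from.
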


\begin{proof}
Using terminology from the proof of lemma \ref{M1}, the first level approximating graph of a $3N$-gasket can be thought of as $G_2$ for $m=3N$ and the identification of boundary points $v_0 = v_{3N}$. It is straight forward to check that $f(v_k) = e^{i \theta k}$ is still an eigenfunction of $G_1$ with the aforementioned identification, and that it satisfies the eigenvalue problem at $v_0=v_{3N}$. However, since $f(v_0) = f(v_{3N})$, we have that $\theta = \frac{2j\pi}{3N}$, $j=1,\ldots,3N$.  Eigenfunctions can be extended to all of $G_2$ in the same fashion as without the identification, so by (\ref{poly}) applied with $\theta = \frac{2j\pi}{3N}$, we get eigenvalues of the form (\ref{Mev}).  Note that every eigenfunction on $G_1$, $z_1\neq 3/2$, can be extended to $G_2$ with eigenvalue $(1-\cos\theta)/2 = \sin^2(\theta/2)$.

The other $3N$ eigenfunctions all have eigenvalue $3/2$. The $3/2$-eigenspace is span by, for example $g_j-cg_1$, for $j=0,2,3,\ldots,3N$ where $c\in\set{-1,0,1}$ is chosen so the values at $v_0$ and $v_1$ match. It will be important later to note that all but three of these functions have Dirichlet boundary conditions, that is, they take value $0$ everywhere on $V_0$.
\end{proof}

This linear sawtooth eigenbasis (\ref{ef_asym}) and (\ref{ef_sym}) will be incredibly useful in computing our spectral decimation function $R(z)$.   As it will simplify some calculations later on, we take as shorthand 
\begin{align}
P&=f_1(v_1)+f_1(u_1)=-(f_1(v_{m-1})+f_1(u_{m})) \\ Q&=f_2(v_1)+f_2(u_1)=f_2(v_{m-1})+f_2(u_m). \label{PandQ}
\end{align}
 and we will denote 
\begin{align}
r=\frac{Q-P}{2}\quad\text{and}\quad l=\frac{Q+P}{2}. \label{landr}
\end{align}
That is to say that if $f$ is an eigenfunction of $\Lap_2$ with boundary values $f(v_0)= a$ and $f(v_m) = b$, 
\[
\Lap_2 f(v_0) = a - \frac{al + br}{2} \quad \text{and} \quad \Lap_2 f(v_m) = b-\frac{ar+bl}{2}.
\]

We wish to find expressions for $P$ and $Q$ (and hence $l$ and $r$) in terms of our eigenfunctions.  Accordingly, $\Lap_2 f_1(v_{m}) = z f_1(v_{m}) = f_1(v_{m}) - \frac{1}{2}(1 + f_1(v_{1})) $ in the symmetric case and  $\Lap_2 f_2(v_{m}) = z f_2(v_{m}) = f_2(v_{m}) - \frac{1}{2}(f_2(v_{1}) - 1)$ in the skew-symmetric case.  These expressions give us
\begin{align}
P &= f_1(v_1) + f_1(v_{m}) = f_1(v_1) + \frac{1}{2(1-z)}(1 + f_1(v_1)) \label{P} \\
\nonumber  &= \frac{(3-2z)f(v_1)+1}{2(1-z)} 
\end{align}
and
\begin{align}
Q &= -(f_2(v_{1}) + f_2(v_{m})) = -(f_2(v_1) - \frac{1}{2(1-z)}(- 1 + f(v_1)))  \label{Q} \\
\nonumber  &= -\frac{(3-2z)f_2(v_1)-1}{2(1-z)}.
\end{align}

\subsection{Computation of $R(z)$} \label{sec4a}

Having an eigenbasis and a set of eigenvalues, we now wish to find the spectral decimation function on the $3N$-Gasket, which we will call $R(z)$.  
Let $g$ be an arbitrary eigenfunction of $\Delta_{n+1}$ on the sawtooth graph with eigenvalue $z$, and shall assume a priori that $g|_{V_n}$ has eigenvalue $R(z)$ (which will turn out to be expressed as a rational function of $z$). As we wish to look at the restriction to $V_n$. We consider two adjacent $n$-level cells isomorphic to $V_1$ cells which intersect at a point which we will call $x_{3N}$, assign it the eigenfunction the value $g(x_{3N}) = A$  at this vertex.  Moving counterclockwise, we sequentially call the other two points in $V_0$ $x_{4N}$ and $x_{5N}$ and assign the value $B$ to each. We can assume this symmetry because we will not be checking the the eigenvalue condition at $x_{4N}$ and $x_{5N}$, and thus can replace the values of $g$ there with there averages.  

We label the vertices in $V_1$ of degree $4$, $x_0,...x_{3N-1}$, beginning at the vertex immediately counterclockwise from $x_{3N}$ and continuing counterclockwise. Similarly, label the degree 2 points $x_{3N},x_{3N+1},\ldots,x_{6N-1}$.  By symmetry we can assume that $g$ is symmetric under exchanging the two $n$-cells adjacent to $x_{3N}$.

\begin{figure}[t]
\includegraphics[scale=.75]{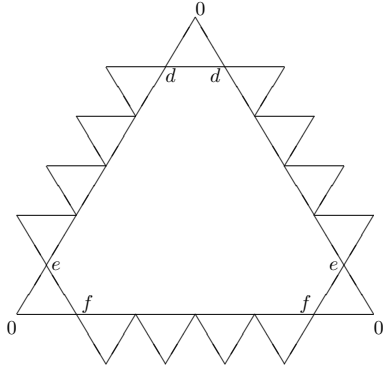}
\caption{ The function $h$.}\label{SawtoothSetup}
\end{figure}

To simplify things, we consider a function $h\in \ell(V_1)$ to locally build $g$ out of, by defining $h(x_{3N}) = 1$ and $h(x_{4N}) = h(x_{5N}) = 0$ and which is extended to the rest of $V_1$ by the relation $\Delta_1 h(x_j) = z h(x_j)$ when $j\neq 3N,4N,5N$. Further we can assume $h$ is symmetric such that $h(x_0)=h(x_{3N-1}) = a$, the vertices adjacent to $x_{4N}$ and $x_{5N}$ have value $h(x_{N-1})=h(x_{2N}) = b$, and the last two vertices adjacent to $x_{4N}$ and $x_{5N}$ $h(x_{N})=h(x_{2N-1})=c$. The values of $g$ are shown in figure \ref{SawtoothSetup}.  

With this scheme we get
\begin{align*}
\Delta_{1} h(x_0) = z a = a - \frac{1}{4}(1 + a + h(x_1) + h(x_{3N+1})) = a - \frac{1}{4}(1 + a + ar+bl) 
 \\
\Delta_{1} h(x_{n-1}) = z b = b - \frac{1}{4}(0 + c + h(x_{N-2}) + h(x_{4N-1}))= b - \frac{1}{4}(c + br+al) 
 \\
\Delta_{n+1} h(x_N) = z c = c - \frac{1}{4}(0 + b + h(x_{N+1}) + h(x_{4N+1})) = c - \frac{1}{4}(b + cr+cl). 
\end{align*} 
Simplifying

\begin{eqnarray}
4(1-z)a = 1 + a + ar + bl \label{eq0} \\
\nonumber 4(1-z)b = c + br + al \\
\nonumber 4(1-z)c = b + cr + cl.
\end{eqnarray}


Returning our focus to the four vertices adjacent to $x$ in $V_n$, we see that $\Delta_n g|_{V_n}(x) = R(z) A = A - \frac{1}{4}(4B) $.
On the other hand,  taking linear combinations of rotations of $h$, we get the value $aA + (b+c)B $ at vertices adjacent to $x_{3N}$ in $V_{n+1}$. Thus $z A = A - (aA + (b+c)B) $.  Combining this with $(1-R(z))A = B $, we get $ z A = A - A(a + (b+c)(1-R(z))) $ which yields 
\begin{equation}
R(z)  = \frac{a + b + c + z - 1}{b+c} = 1+\frac{a+z-1}{b+c} \label{generalR}
\end{equation}

\begin{figure}[b]
  \vspace{-10pt}
  \caption{Derivation of $R(z)$ \label{R_derivation}}
    \vspace{-10pt}
  \centering
    \includegraphics[scale=.75]{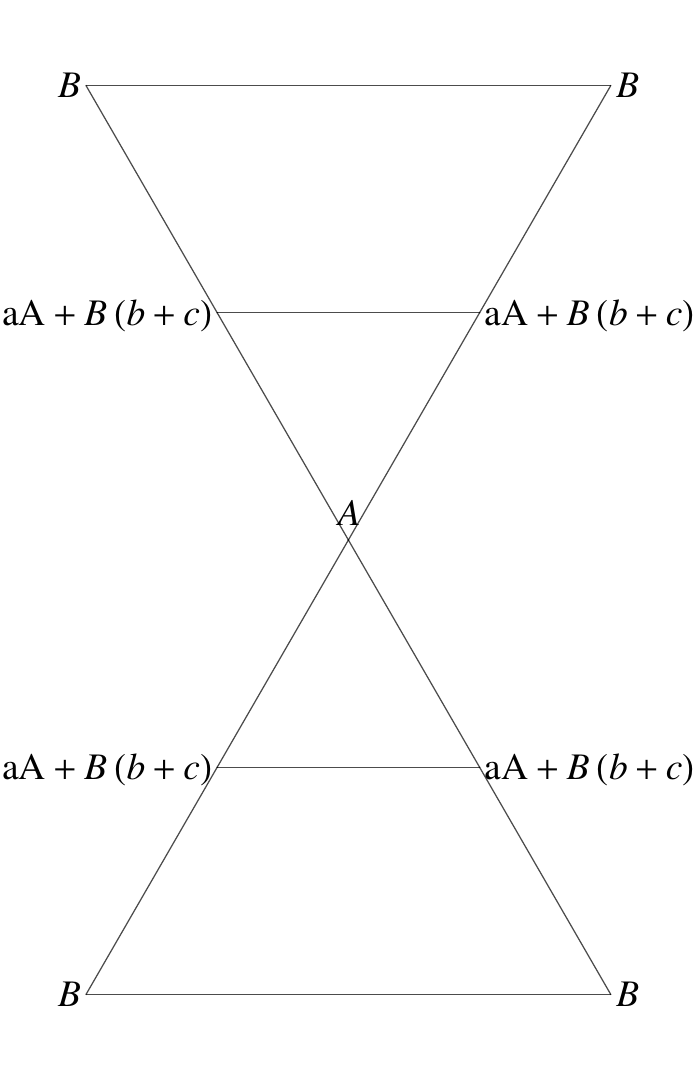}
  \vspace{-25pt}
\end{figure}

We now wish to put $R(z)$ entirely in terms of $z$ (suppressing the $n+1$ subscript).  Thus, we add the three equations in \ref{eq0} and solving for $a+b+c$, getting
\[
a + b + c = \frac{1}{4(1-z) - 1 - r - l} = \frac{1}{3-4z - r - l}.
\]
To calculate $b+c$ in the denominator, we combine equations (\ref{eq0}) to get
\begin{eqnarray}
& (4(1-z)-1-r)(b+c) = (a+c)l \label{eq1} \\
& (4(1-z) - r - l)(a+b) = 1 + (a + c) \label{eq2} \\
& (4(1-z)-r)(a+c) = 1 + (a+b) + (b+c)l \label{eq3}
\end{eqnarray}
Now, solving these three equations for $b+c$ yields
\begin{equation}
 b+c = -\frac{l(-5+l+r+4z)}{(l+r+4 z-3) ((l-1) l-(r+4 z-5) (r+4 z-3))}\label{bplusc}
\end{equation}
and thus
\begin{equation}
R(z) = -\frac{(1+(z-1)(3-l-r-4z)) ((l-1) l-(r+4 z-5) (r+4 z-3))}{l(-5+l+r+4z)}. \label{R}
\end{equation}

With (\ref{R}), (\ref{P}), (\ref{Q}), (\ref{ef_asym}) and (\ref{ef_sym}) we have all of the ingredients we need to construct our function $ R(z)$.  With much computer aid we found the above to be equivalent to the form in Theorem~\ref{Randphi}.

\subsection{Computation of $\phi(z)$}
We would like to compare the $V_1$ approximation to the $V_0$ approximation, so we treat function $g$ of our Laplacian in vector in block form
$$g=\begin{bmatrix}
g_0\\
g_1
\end{bmatrix},$$
where $g_0 = g|_{V_0}$ and $g_1= g|_{V_1\setminus V_0}$. Consider $\Delta_1$ as a $6N\times 6N$ matrix with the above block structure written
$$\Delta_1=\begin{bmatrix}
A&B\\
C&D
\end{bmatrix}$$
where $A$ is a $3\times 3$ matrix, $D$ is a $(6N-3)\times (6N-3)$ matrix and $B$ and $C$ are appropriately sized rectangular matrices.

If we further assume $g$ is an eigenfunction of $\Delta_1$ away from the boundary $V_0$, then
\begin{equation}
\begin{bmatrix}
A&B\\
C&D
\end{bmatrix}
\begin{bmatrix}
g_0\\
g_1
\end{bmatrix}=
\begin{bmatrix}
g_0'\\
zg_1
\end{bmatrix}\label{matrixEq}
\end{equation}

Focusing on the second equation given to us by (\ref{matrixEq}), we see that
\begin{align}
-(D-z)^{-1}Cg_0&=g_1, \label{u1}
\end{align}

Recalling that $S(z)=(A-z)-B(D-z)^{-1}C = \phi(z)(\Delta_n - R(z))$ is the Schur complement of the matrix $\Delta_1-z$, and $S_{1 \ 1}(z)$ is the first entry in the first row of this Schur complement. We compute that 
\begin{eqnarray*}
R(z)&=&1-\frac{S_{1 \ 1}(z)}{\phi (z)}.
\end{eqnarray*}

Now, to find $\phi(z)$ we must first find $S_{1,1}(z)$. Accordingly, we multiply the vector $u_0=(1,0,0)^T$ by $S(z)$, and use (\ref{u1}) and the fact that $A$ is a $3\times 3$ identity matrix to get
$$S_{1 \ 1}(z)=(1-z)+(Bu_0)_1$$

The rows of $B$ correspond to boundary points,
\[
B_{i,j} =
\begin{cases}
-1/2 & \text{if } v_i\sim v_{j+3}\\
0 & \text{otherwise}
\end{cases}
\]

Therefore, the first entry of $Bu_0$ is equal to $-1/2$ times the sum of the value of $u_1$ at each vertex adjacent to the boundary point with value 1. Since these boundary conditions  correspond directly to the eigenbasis we picked for our fractal when calculating $R(z)$, the boundary point in question is only adjacent to two other vertices, both of which have value $a$, using the notation from Subsection~\ref{R_derivation},  $Bu_0=-a.$

Substituting this value into our equality for $S_{1,1}(z)$ we get $S_{1,1}(z)=1-z-a$. Thus, using this fact and (\ref{generalR}) we have
$$R(z)=1-\frac{1-z-a}{\phi(z)} = 1 + \frac{a+z-1}{b+c} .$$
In particular,
$$\phi(z)=b+c,$$
which we had solved for in terms of $l$ and $r$ earlier. Using a computer algebra system, we obtained the simplified version in Theorem~\ref{Randphi}.
\subsection{Poles of $R(z)$ and zeros of $\phi(z)$}

We prove that the singularities of $R(z)$ are of the form \ref{sing}. First, note that $T_N$ and $U_n$ have the $N$ unique roots 
\[
T^{-1}_N(0) = \set{\cos\paren{\frac{(m-\pi/2}{N}}}_{m=1}^N \quad\text{and}\quad U^{-1}_N(0) = \set{ \cos\paren{\frac{m\pi}{N+1}}}_{m=1}^N.
\]
Further, since $\frac{d}{dx}T_N(x) = N U_{N-1}(x)$, the extrema of $T_N(x)$ must be zeros of $U_{N-1}$ and the value at these points must be $\pm 1$, see, for example \cite{MH03}. 

Consider the case when $N$ is even.  Then it is clear from (\ref{form1}) and the previous discussion that the singularities of $R(z)$ must be contained in the ${N}/{2}$ zeros of $T_N(\sqrt{z})$,  
\[
\zeta_m = \cos^2\paren{\frac{(2m-1)\pi}{2N}} = \frac12\paren{1 + \cos\paren{\frac{(2m-1)\pi}{N}}} .
\]
We have
\begin{eqnarray}
& T_N(1-2\zeta_m) = T_N(-\cos(\frac{(2m-1)\pi}{N})) = T_N(\cos(\frac{(2m-1)\pi}{N})) \label{Teven} \\
& U_{N-1}(1-2\zeta_m) = U_{N-1}(-\cos(\frac{(2m-1)\pi}{N})) = -U_{N-1}(\cos(\frac{(2m-1)\pi}{N})) \label{Ueven}
\end{eqnarray}
\noindent where negatives are determined in the last equalities depending on the parity of the degree of the Chebyshev polynomial.  But the form of the extrema of $T_N(x)$ shows that (\ref{Teven}) is always equal to $\pm 1$ and the form of the roots of $U_{N}$ indicates (\ref{Ueven}) is always zero.  Thus, 
\[
2(T_N(1-2 \zeta_m)+ U_{N-1}(1-2 \zeta_m))+1 = 2(\pm 1 + 0) + 1 \neq 0
\]
\noindent and $U_{N-1}(\zeta_m) = U_{N-1}(\cos(\frac{(m-\frac{1}{2})\theta}{N})) \neq 0$ as the roots of $U_{N-1}(x)$ are exactly of the form $\cos(\frac{m\pi}{N})$.  Therefore, the numerator of $R(z)$ in the case where $N$ is even is never zero so no singularity in this case is removable. The proof is similar for $N$ odd.

\section{Gaps in the spectrum}\label{gapproof}
In this section we follow \cite{HSTZ} to establish properties of the gaps in the spectrum of the Laplacian on a $3N$-gasket in  the sense of \cite{Str05}. We say that an infinite non-negative increasing sequence $\set{\alpha_i}_{i=1}^\infty$ has gaps, if $\limsup \alpha_{k+1}/\alpha_k > 1$. 

\begin{prop}\label{poles}
Let $z_{max}=\max\set{z~:~R(z) = z\text{~or~}R(z)=0}$ and $I_0= [0,z_{max}]$. If there  is a pole of $R $ in $I_0$, then there are gaps in the spectrum of $\Delta$. 
\end{prop}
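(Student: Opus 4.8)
The plan is to exploit the structure of spectral decimation described in Theorems~\ref{finitespectrum} and~\ref{lambda-thm}: the spectrum of $\Delta$ is obtained, up to the renormalization factor $c^n$, by iterating the inverse branch $\tilde R$ on a fixed finite ``generating'' set of exceptional eigenvalues, together with the Julia set of $R$. Concretely, each eigenvalue $\lambda_k$ of $\Delta$ arises as $\lim_{n\to\infty} c^{m+n}\tilde R^n(z)$ for some $z$ in one of the finite sets $\mathcal A\cup\{\sin^2(j\pi/3N)\}$, so the relative gaps between consecutive eigenvalues are governed by how $\tilde R$ (equivalently $R$) maps intervals near $0$. The key point is that a pole of $R$ lying in $I_0 = [0, z_{max}]$ forces the forward orbit structure under $R$ to ``skip over'' an interval: if $p\in I_0$ is a pole, then $R$ maps a punctured neighborhood of $p$ to a neighborhood of $\infty$, so the intervals of $[0,\infty)$ that are actual ``spectral'' intervals (preimages under iterates of $R$ of the generating eigenvalues) cannot fill in the gap around $p$'s image structure — this is exactly the mechanism of \cite{HSTZ}.

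The key steps, in order, would be: First, identify $I_0$ as a forward-invariant-type interval for the relevant dynamics — using $z_{max} = \max\{z : R(z) = z \text{ or } R(z) = 0\}$, argue that on $[0, z_{max}]$ away from poles, $R$ behaves like an expanding map fixing $0$, so that the eigenvalues of $\Delta$ (after renormalization) are encoded by the inverse iterates $\tilde R^n$ landing in $I_0$. Second, show that the presence of a pole $p$ of $R$ in the interior of $I_0$ creates a genuine hole: the set $\bigcup_n R^{-n}(\text{generators})$ restricted to $I_0$ has a gap, because near $p$ the function $R$ omits no large values yet the preimages $R^{-n}$ of the finite generating set, which is bounded, cannot accumulate densely across the pole — one gets two ``components'' $[0,p')$ and $(p'', z_{max}]$ of accessible spectrum separated by a definite ratio. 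Third, transport this multiplicative gap through the renormalization: since $\sigma(\Delta) = \{\lim c^{m+n}\tilde R^n(z)\}$ and $\tilde R$ is the branch with $\tilde R(0)=0$ which is approximately linear near $0$, a fixed relative gap in the ``symbolic'' spectrum near $I_0$ is reproduced (up to bounded multiplicative distortion) infinitely often as $k\to\infty$ among the $\lambda_k$, giving $\limsup \lambda_{k+1}/\lambda_k > 1$. This last step mirrors \cite[Chapter~3]{Str06} and the argument in \cite{FS92}.

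I would model the argument closely on \cite{HSTZ}: define the ``spectral intervals'' as the connected components of $I_0 \setminus (\text{poles of } R \text{ and their inverse-iterate preimages})$, observe that the bottom of the spectrum sits in the component containing $0$, and that a pole in $I_0$ guarantees at least two such components with the property that consecutive renormalized eigenvalues straddling the pole have ratio bounded below by a constant strictly greater than $1$. The quantitative input is that $R$ is, near $0$, comparable to multiplication by $R'(0)$ (which by \eqref{generalR} and the Chebyshev formulas can be computed, and equals $\rho = c/(3N)$ up to the relevant normalization), so iterating $\tilde R$ scales by a definite factor each time; combined with a pole at finite distance $p > 0$, the image intervals on either side of $p$ are separated by a ratio $\geq z''/z'$ for the endpoints $z', z''$ of the gap, and this ratio persists under all inverse iterates.

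The hard part will be making precise the claim that a pole of $R$ inside $I_0$ actually produces a gap in the inverse-iterate preimage set rather than being ``filled in'' from elsewhere — i.e., ruling out that preimages under $R^{-n}$ of the generating eigenvalues approach the pole from both sides densely enough to close the gap. This requires understanding the local behavior of $R$ and its inverse branches near the pole and near $z_{max}$: one must check that the branch $\tilde R$ relevant to the spectrum stays on one side of the pole, and that the forward orbit of the generating set never lands arbitrarily close to the pole on the ``wrong'' side. I expect this to come down to a monotonicity/interval-arithmetic argument on $I_0$ using the explicit form \eqref{form1} of $R$ and the location of its poles from \eqref{sing}, together with the fact that $z_{max}$ is defined precisely as the largest fixed point or zero, which pins down the right-hand end of the invariant interval. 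The remaining steps (the renormalization transport and the $\limsup$ conclusion) are then routine given the framework of Theorems~\ref{finitespectrum} and~\ref{lambda-thm}.
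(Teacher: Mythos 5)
Your proposal follows essentially the same route as the paper: a pole of $R$ inside $I_0$ punches a hole in the inverse-iterate structure, and the renormalization $\lambda=\lim c^{m+n}\tilde R^{n}(z)$ converts that hole into a multiplicative gap that recurs at all scales, which is exactly the mechanism of \cite{HSTZ}. The only remark worth making is that what you flag as ``the hard part'' (ruling out that preimages fill the gap back in) is dispatched in the paper by working with the nested sets $R^{-n}(I_0)$ rather than with preimages of the finite generating set: since $R^{-1}(I_0)$ omits an open neighborhood of the pole and the $R^{-n}(I_0)$ form a decreasing sequence of finite unions of intervals containing the Julia set, the Julia set is totally disconnected, so every $[0,\eps]$ contains a subinterval free of spectral accumulation and no further monotonicity or interval-arithmetic argument is needed.
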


\begin{proof}
The Julia set of $R(z)$, call it $\mathcal{J}$, is contained in in $R^{-n}(I_0)$, for $n=0,1,2,\ldots$. Because there is a pole in $I_0$, $R^{-n}(I_0)\subsetneq R^{-n+1}(I_0)$, and is a finite set of intervals. Since $\mathcal{J}\neq I_0$, it must be completely disconnected.

This implies that for every $\eps\geq 0$, there is a sub interval of $[0,\eps]$ which is not contained in $\mathcal J$. Which implies that there are gaps in the spectrum of $\mathcal{J}$ by an argument similar to the un-numbered Theorem in Section~3 of \cite{HSTZ}.
\end{proof}

\begin{figure}[t]
\includegraphics[width=.49\textwidth]{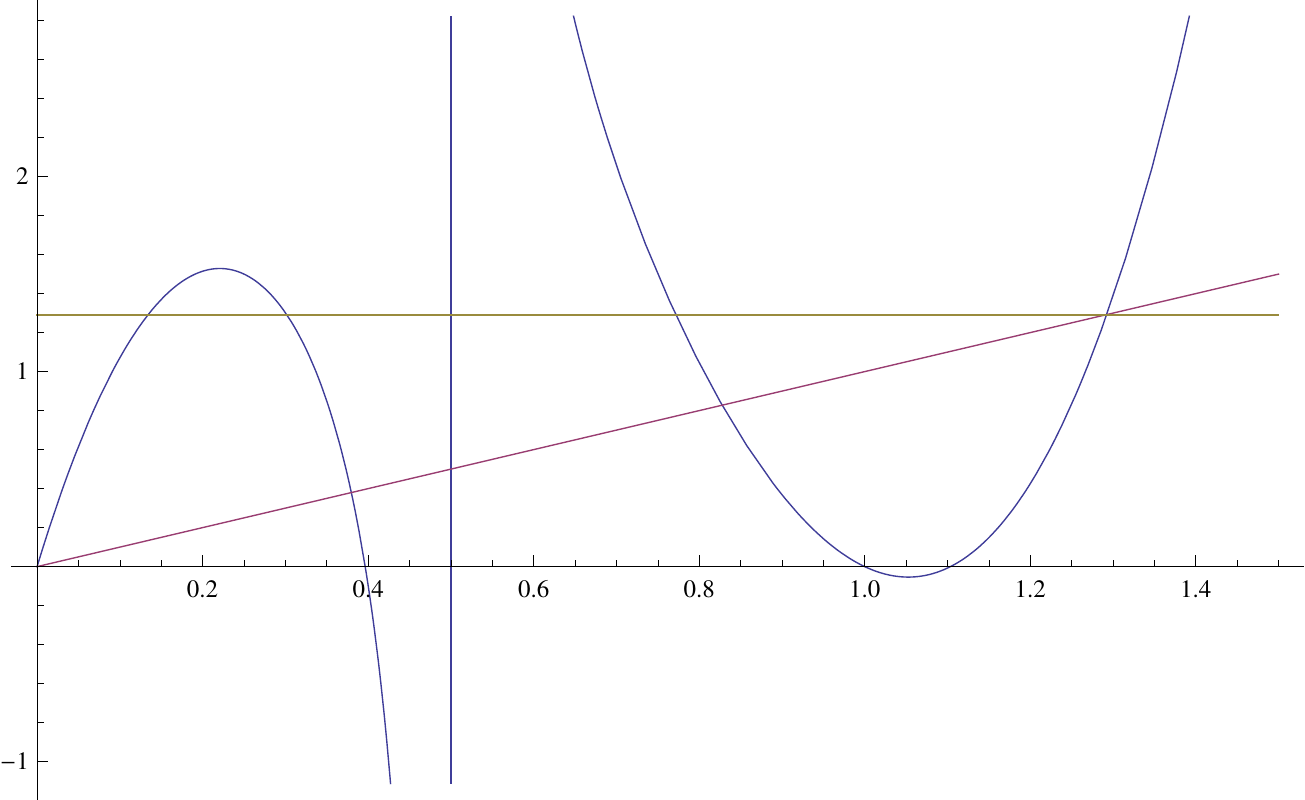}
\includegraphics[width=.49\textwidth]{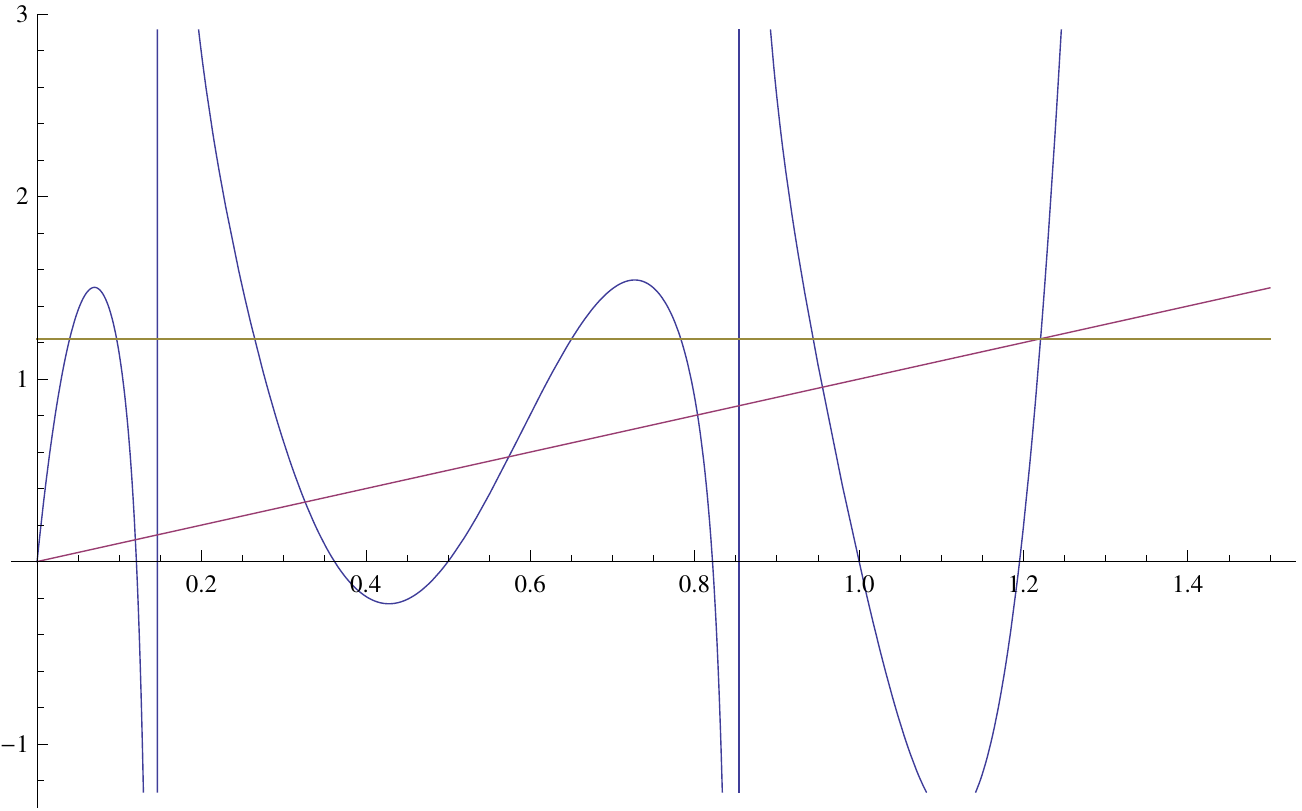}
\caption{$R(x)$ for $N=2$ (left) and $N=4$ (right).}
\end{figure}

\begin{proof}[Proof of Theorem \ref{gapsthm}]
From Theorem \ref{Randphi}, the poles of $R(z)$ are calculated to be real and between $0$ and $1$. 
Also, we see that there are zeros of $R(z)$ at the point, includes $1$ when $N$ is even and $\cos^2(\pi/2N)$ if $N$ is odd. Either way there is a zero of $R(z)$ greater than the largest pole. Thus by Theorem \ref{poles} there are gaps in the spectrum.
\end{proof}
\section{Spectral decimation of $3N$-gaskets} \label{theorems}
\subsection{The exceptional set}

To  compute the multiplicities of eigenvalues of the approximating graph, we must know   the exceptional set  $E(\Delta_0,\Delta) = \sigma(D) \cup \set{z:\phi(z) = 0}$. Here $D$ is the matrix from the block representation of
\[
\Delta_1=\begin{bmatrix}
A&B\\
C&D
\end{bmatrix},
\]
and $\sigma(D)$, the eigenvalues of $D$, correspond to the solutions to the eigenvalue problem on $V_1$ with Dirichlet boundary conditions at $V_0$. That is, if $\lambda \in \sigma(D)$, then there is $f\in \ell(V_1)$ such that $f|_{V_0} = 0$ and $\Delta_1f(x) = \lambda f(x)$ for $x\in V_1\setminus V_0$. Written as a block matrix,
\[
f = \begin{bmatrix}
0\\ g
\end{bmatrix}
\]
where $Dg = \lambda g $, because
\[
\Delta_1 f= \begin{bmatrix}
A& B\\ C&D
\end{bmatrix}
\begin{bmatrix}
0\\ g
\end{bmatrix}
 = 
 \begin{bmatrix}
Bg \\ Dg 
\end{bmatrix}.
\]
\begin{prop}
The exceptional set
\begin{align}
E(\Delta_0,\Delta)=\set{\frac{3}{2}}\cup \mathcal A \cup \mathcal B&\cup\set{\cos^{2}\paren{\frac{m\pi}{2N}}:m=1,...,N-1},
\end{align}
where
\[
\mathcal A =\cup\set{z:2 T_N(1-2 z)+2 U_{N-1}(1-2 z)+1=0},
\]
and
\[
\mathcal B = \begin{cases}
\set{z:T_N(\sqrt{z})-2 (z-1) \sqrt{z} U_{N-1}(\sqrt{z})=0} & \text{ if $N$ is even} \\
\set{z:U_{N-1}(\sqrt{z})- 2 \sqrt{z} T_N(\sqrt{z})=0} & \text{ if $N$ is odd}.
\end{cases}
\]
\end{prop}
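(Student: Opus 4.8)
The plan is to compute $E(\Delta_0,\Delta) = \sigma(D) \cup \{z : \phi(z)=0\}$ directly, using the work already done. The set $\{z:\phi(z)=0\}$ is already identified in Theorem~\ref{Randphi}: it consists of the point $3/2$ together with the poles of $R(z)$, namely $\{\cos^2(m\pi/2N)\}$ with $m$ ranging over odd (resp. even) values up to $N-1$ for $N$ even (resp. odd). Observe that the union of these two lists of poles over both parities is exactly $\{\cos^2(m\pi/2N):m=1,\dots,N-1\}$; so to match the claimed formula I must show that the "missing parity" poles reappear in $\sigma(D)$. Thus the real content is the computation of $\sigma(D)$, the Dirichlet spectrum of $\Delta_1$ on $V_1$ with zero boundary data on $V_0$.

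To compute $\sigma(D)$, I would go back to the sawtooth eigenbasis from Lemma~\ref{M1} and the local function analysis of Subsection~\ref{sec4a}. A Dirichlet eigenfunction on $V_1$ is an eigenfunction of $\Delta_1$ away from $V_0$ that vanishes on all of $V_0$. Decompose such a function according to the $\mathbb{Z}/3N$ rotational symmetry (as in Lemma~\ref{m1spec}, where the spectrum of $\Delta_1$ was found via $f(v_k)=e^{i\theta k}$): the eigenvalue-$3/2$ part contributes $3/2$ to the exceptional set (consistent with the $g_j$ with Dirichlet conditions noted at the end of Lemma~\ref{m1spec}); the part coming from eigenfunctions of the circular "spine" graph, extended along the teeth, contributes the eigenvalues $z$ for which the sawtooth transfer relations admit a nonzero solution with the values at the three $V_0$-vertices all equal to zero. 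Tracking the boundary-matching condition through equations \eqref{ef_asym}, \eqref{ef_sym}, \eqref{P}, \eqref{Q} and \eqref{R}, the vanishing of $b+c=\phi(z)$ in the denominator is exactly where the local function $h$ degenerates, and the vanishing of the relevant numerator factor $T_N(\sqrt z)-2(z-1)\sqrt z\,U_{N-1}(\sqrt z)$ (for $N$ even) or $U_{N-1}(\sqrt z)-2\sqrt z\,T_N(\sqrt z)$ (for $N$ odd) corresponds to the existence of a Dirichlet eigenfunction; this yields $\mathcal{B}$. The set $\mathcal{A}=\{z:2T_N(1-2z)+2U_{N-1}(1-2z)+1=0\}$ arises from the sawtooth eigenvalue condition $\sin(\tfrac m2\arccos(1-2z))=0$-type relations internal to a single $V_1$-cell (these are the eigenvalues forced by the teeth, independent of boundary data), and finally the "missing parity" poles $\cos^2(m\pi/2N)$ show up as the remaining Dirichlet eigenvalues where $T_N(\sqrt z)$ or $U_{N-1}(\sqrt z)$ vanishes and the eigenfunction can be localized. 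Assembling $\sigma(D)$ from these pieces and unioning with $\{z:\phi(z)=0\}$ gives the stated formula; in particular the two parity-restricted pole sets combine into the full range $m=1,\dots,N-1$.

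The main obstacle is the bookkeeping in the computation of $\sigma(D)$: one must carefully enumerate which eigenvalues $z$ genuinely admit a Dirichlet eigenfunction on the $3N$-fold symmetric graph $V_1$ (and with what multiplicity), distinguishing the contributions that are forced purely by the tooth relations (giving $\mathcal{A}$), those forced by vanishing of Chebyshev factors (giving $\mathcal{B}$ and the extra poles), and the $3/2$-eigenspace, while avoiding double-counting eigenvalues that already lie in $\{z:\phi(z)=0\}$. I expect this to be the same kind of computer-assisted Chebyshev-identity manipulation used to derive \eqref{form1} and \eqref{phi}, so the structure is clear but the verification that no spurious or missing values occur — especially checking that the numerator factor of $R(z)$ really does not vanish at the $\mathcal{A}$-points and at the genuine poles, paralleling the argument at the end of Subsection~\ref{computation} — is where care is needed.
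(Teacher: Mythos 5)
Your reduction is right as far as it goes: $E(\Delta_0,\Delta)=\sigma(D)\cup\phi^{-1}(0)$, the set $\phi^{-1}(0)$ is read off from Theorem~\ref{Randphi}, and the two parity-restricted families of poles do recombine into $\set{\cos^2(m\pi/2N)}$ once $\sigma(D)$ is known. But the entire content of the proposition is the determination of $\sigma(D)$, and that is precisely the part you leave as ``bookkeeping.'' The paper does not compute $\sigma(D)$ by hunting for Dirichlet eigenfunctions sector by sector; it splits $\sigma(D)$ into $\sigma(D)\cap\sigma(\Delta_1)$ (handled by the Dirichlet--Neumann lemma, which recovers the missing-parity Chebyshev roots and the $(3N-3)$-fold eigenvalue $3/2$) and $\sigma(D)\setminus\sigma(\Delta_1)$, and for the latter proves that the multiplicity of $z_0$ in $\sigma(D)$ equals the dimension of the nullspace of $\lim_{z\to z_0}S^{-1}(z)=\lim_{z\to z_0}\frac{1}{\phi(z)}(\Delta_0-R(z))^{-1}$. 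Since this $3\times3$ matrix is written out explicitly, its nullspace is nontrivial exactly at the poles of $\phi$, i.e.\ on $\mathcal A\cup\mathcal B$, with the symmetric/antisymmetric sectors distinguished by whether $\phi R$ also has a pole. This is the missing idea in your proposal: without it (or an equivalent), you have no argument that the values you list are \emph{all} of $\sigma(D)$ — exhibiting some Dirichlet eigenfunctions does not rule out others, and the paper closes this by a multiplicity count summing to $3N-3=\dim D$.

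Two of your specific attributions are also off. The set $\mathcal A=\set{z:2T_N(1-2z)+2U_{N-1}(1-2z)+1=0}$ is not the locus of the ``$\sin(\tfrac m2\arccos(1-2z))=0$-type'' degeneracies of the sawtooth basis (those are the excluded values $z=(1-\cos(k\pi/m))/2$ of Lemma~\ref{M1}); it is a pole set of $\phi$ at which $R$ vanishes, and its elements enter $\sigma(D)$ with multiplicity $2$ via the antisymmetric part of the nullspace of $\lim S^{-1}$. Likewise $T_N(\sqrt z)-2(z-1)\sqrt z\,U_{N-1}(\sqrt z)$ is a \emph{denominator} factor of $\phi$, not a numerator factor; $\mathcal B$ is where both $\phi$ and $\phi R$ blow up, forcing the symmetric vector $[1,1,1]^T$ into the nullspace. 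As written, your sketch would not assemble into a proof without importing the Schur-complement mechanism (or carrying out the full symmetry decomposition of the $(6N-3)$-dimensional Dirichlet problem, including a completeness check), so I would count this as a genuine gap rather than an alternative proof.
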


In Theorem~\ref{Randphi}, it is shown that 
\[
\phi^{-1}(0) = \begin{cases}
\cos^{2}\paren{\frac{m\pi}{2N}},~m=1,3,\ldots,N-1 & \text{if }N\text{ is even, and}  \\
\cos^{2}\paren{\frac{m\pi}{2N}},~m=2,4,\ldots,N-1 & \text{if }N\text{ is odd}.
\end{cases}
\]
What is left is to show the following lemma.

\begin{lem}
The spectrum of $D$ is the union of the set of zeros of $R(z)$, the set of poles of $\phi(z)$, and $\frac{3}{2}$.  Specifically, if $N$ is even
\begin{align*}
\sigma(D)=\set{\frac{3}{2}}&\cup \set{\cos^{2}\paren{\frac{m\pi}{2N}}:m=2,4,...,N}\cup\mathcal A \cup \mathcal B  
\end{align*}
and if $N$ is odd,
\begin{align*}
\sigma(D)=\set{\frac{3}{2}}&\cup \set{\cos^{2}\paren{\frac{m\pi}{2N}}:m=1,3,...,N}\cup\mathcal A \cup \mathcal B  
\end{align*}
%
Here the dimension of the eigenspace of the eigenvalues $3/2$ is $3N-3$, eigenvalues in $\mathcal A$ is 2, and the multiplicity of the rest is counted by multiplicity of occurrences in the set.
\end{lem}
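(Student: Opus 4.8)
The plan is to characterize $\sigma(D)$ directly as the set of eigenvalues $\lambda$ for which the Dirichlet eigenvalue problem on $V_1$ — i.e. $\Delta_1 f = \lambda f$ on $V_1\setminus V_0$ together with $f|_{V_0}=0$ — has a nontrivial solution, and then count dimensions cell by cell. The $3N$-gasket graph $V_1$ decomposes into $3N$ copies of the ``triangle'' cell $F_i(V_0)$, but for the purpose of the Dirichlet problem it is more convenient to use the sawtooth description from Section~\ref{computation}: $V_1$ is the cyclic sawtooth graph $G_2$ with $m=3N$ and the identification $v_0=v_{3N}$, and $V_0 = \{v_0, v_N, v_{2N}\}$ sits among the degree-$4$ vertices. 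So I would first split off the eigenvalue $z=3/2$, for which Lemma~\ref{M1} already gives an explicit basis of $3/2$-eigenfunctions $g_0,\dots,g_m$ on $G_2$; imposing the cyclic identification and the three Dirichlet conditions $f(v_0)=f(v_N)=f(v_{2N})=0$ on that $(3N)$-dimensional space cuts it down by exactly $3$, giving a $3/2$-eigenspace of dimension $3N-3$. This is the easy part and it handles the $\{3/2\}$ summand.

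For $z\neq 3/2$, equation \eqref{poly} from Lemma~\ref{M1} shows that any sawtooth eigenfunction restricts on the ``spine'' $G_1$ to an eigenfunction of the line/cycle Laplacian with eigenvalue $2z$, and is then recovered on the teeth $u_k$ by $f(u_k) = (f(v_k)+f(v_{k+1}))/(1-z)$. Thus for $z\neq 3/2$ the Dirichlet problem for $\Delta_1$ on $(V_1,V_0)$ reduces to: find a function on the $3N$-cycle, eigenfunction of the cycle Laplacian with eigenvalue $2z$, vanishing at the three equally spaced points $v_0,v_N,v_{2N}$, and moreover such that the \emph{induced} eigenvalue equation at those three vertices of $\Delta_1$ — which involves the neighboring teeth values — is also satisfied. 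Using the shorthand $l,r$ from \eqref{landr} and the boundary formulas $\Delta_2 f(v_0)=a-\tfrac12(al+br)$, etc., the condition ``$f$ is a genuine Dirichlet eigenfunction of $\Delta_1$'' becomes a $3\times 3$ homogeneous linear system in the three ``boundary fluxes,'' whose solvability is governed by the vanishing of a determinant that factors in terms of the sawtooth ingredients $P,Q$ (equivalently $l,r$) evaluated via \eqref{ef_asym}–\eqref{ef_sym}. I would compute this determinant and identify its zero set with $\mathcal A \cup \mathcal B \cup \{\cos^2(m\pi/2N)\}$: the factor forcing $2T_N(1-2z)+2U_{N-1}(1-2z)+1=0$ comes from the ``symmetric under cyclic rotation'' block (this is exactly the set $\mathcal A$ = zeros of the common factor of $R$ and $\phi$), the factor $\mathcal B$ comes from the skew block (it is precisely the denominator of $\phi(z)$ appearing in Theorem~\ref{Randphi}), and the points $\cos^2(m\pi/2N)$ with the appropriate parity of $m$ arise as the ``resonant'' values where $\cos(\tfrac m2\arccos(1-2z))$ or $\sin(\tfrac m2\arccos(1-2z))$ vanishes, i.e. the poles of the sawtooth eigenbasis, which are genuine Dirichlet eigenvalues of $\Delta_1$ even though the basis \eqref{ef_asym}–\eqref{ef_sym} degenerates there. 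Pairing the two parity lists of $\cos^2(m\pi/2N)$ — the poles of $\phi$ (odd $m<N$ for $N$ even, even $m<N$ for $N$ odd) together with the ones landing in $\mathcal B$ — reassembles the full list $m=2,4,\dots,N$ (for $N$ even) or $m=1,3,\dots,N$ (for $N$ odd) stated in the lemma.

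For the multiplicities: eigenfunctions associated to a point $z$ with $2z=1-\cos\theta$ generic on the $3N$-cycle come in the usual two-dimensional family $\operatorname{span}\{\cos k\theta,\sin k\theta\}$, and imposing vanishing at the three equally spaced spine points plus the three induced boundary equations generically kills all but the expected number; I expect each point of $\mathcal A$ to retain a two-dimensional Dirichlet eigenspace (matching the ``$\mathcal A$ eigenvalues have multiplicity $2$'' assertion and consistent with the multiplicity statement (iii) of Theorem~\ref{finitespectrum}), while the remaining eigenvalues (the $\mathcal B$ points and the $\cos^2(m\pi/2N)$ points) occur with the multiplicity with which they appear in their defining equation, degenerate cases where a value is a double root being tracked directly. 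Finally I would do a dimension bookkeeping check: $D$ is $(6N-3)\times(6N-3)$, and the counts $(3N-3) + 2\cdot\#\mathcal A + \#\mathcal B + \#\{\cos^2(m\pi/2N)\}$ must sum to $6N-3$; since $\#\mathcal A = N-1$ (from the degree of $2T_N+2U_{N-1}+1$ in $z$, noting this polynomial has degree $N$ but $z=3/2$ is not among its roots here — this is where I have to be careful about which roots of the defining polynomials actually lie in $\sigma(D)$) this provides the consistency constraint pinning down the remaining multiplicities.

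The main obstacle I anticipate is the bookkeeping at the degenerate/resonant parameter values $z=\cos^2(m\pi/2N)$: there the eigenbasis \eqref{ef_asym}–\eqref{ef_sym} blows up, so I cannot just read off the Dirichlet condition from those formulas and must instead argue by a limiting/continuity argument (or by exhibiting the Dirichlet eigenfunctions at those points explicitly as the residues of the meromorphic family), and I must be careful that these points are genuinely in $\sigma(D)$ rather than spurious poles — exactly the phenomenon already noted in the paper that $R$ has poles inside the hull of its Julia set. Getting the parity split of $m$ correct (odd vs. even, and whether $m=N$ is included) is the delicate combinatorial heart of the argument, and it is precisely what feeds into the pole structure \eqref{sing} and hence into the gap theorem and the multiplicity formulas in Theorem~\ref{finitespectrum}.
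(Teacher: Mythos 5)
Your reduction starts from a misidentification of where the boundary sits inside the sawtooth picture, and this is fatal to the main computation. In the cyclic sawtooth realization of $V_1$, the three points of $V_0$ are three equally spaced \emph{degree-two} vertices (teeth $u_k$), not three spine vertices: this is exactly the labelling of Subsection~\ref{sec4a}, where $x_{3N},x_{4N},x_{5N}\in V_0$ are listed among the degree-$2$ points (for $N=1$ it is just the statement that the corners of the Sierpinski gasket's first graph are its degree-$2$ vertices). If you instead impose $f(v_0)=f(v_N)=f(v_{2N})=0$ at spine vertices, then for $z\neq 3/2,1$ the teeth values are determined by \eqref{poly} and the problem decouples into three independent Dirichlet path problems on arcs of length $N$; you would get every $\cos^2\paren{m\pi/2N}$, $m=1,\dots,N-1$, each with multiplicity $3$, and nothing else besides the $3/2$-eigenspace --- no $\mathcal A$ and no $\mathcal B$. (Test $N=1$: your placement gives $\sigma(D)=\set{1,1,1}$, while the true answer is $\set{1/2,5/4,5/4}=\mathcal B\cup\mathcal A$ with the stated multiplicities.) The sets $\mathcal A$ and $\mathcal B$ arise precisely from the matching conditions at the spine vertices adjacent to the three boundary teeth, where the reduction \eqref{poly} fails because one neighboring tooth value is forced to be $0$ rather than $(f(v_{k-1})+f(v_k))/(2(1-z))$; with your boundary placement the determinant you propose to factor does not exist. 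A second, independent error: you additionally impose the eigenvalue equation of $\Delta_1$ at the three boundary vertices. An eigenvector of $D$, extended by zero to $V_0$, need satisfy no equation at $V_0$; your extra condition characterizes the Dirichlet--Neumann spectrum $\sigma(D)\cap\sigma(\Delta_1)$, and the eigenvalues in $\mathcal A\cup\mathcal B$ lie in $\sigma(D)\setminus\sigma(\Delta_1)$, so they would be excluded by it.

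Two smaller gaps. For $z=3/2$, restricting the explicit $3/2$-eigenbasis of $\Delta_1$ to functions vanishing on $V_0$ gives only the lower bound $3N-3$; since $D$-eigenfunctions need not satisfy the equation at $V_0$, one must separately rule out additional $3/2$-eigenfunctions of $D$ (the paper does this by the inductive argument at the end of Lemma~\ref{M1}). Also $\#\mathcal A=N$, not $N-1$ (for $N=1$, $\mathcal A=\set{5/4}$), so your dimension bookkeeping would not close as written. For contrast, the paper sidesteps the boundary-matching computation for $\sigma(D)\setminus\sigma(\Delta_1)$ entirely: it proves that for $z_0\notin\sigma(\Delta_1)$ the multiplicity of $z_0$ in $\sigma(D)$ equals the dimension of the kernel of $\lim_{z\to z_0}S^{-1}(z)$, and then reads off multiplicity $2$ on $\mathcal A$ and $1$ on $\mathcal B$ from the pole structure of $\phi(z)R(z)$ and $\phi(z)(3-2R(z))$ acting on constants and on the orthogonal complement of constants, respectively. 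If you want to carry out a direct sawtooth computation instead, it can be made to work, but only after relocating $V_0$ to the teeth and dropping the spurious boundary equations.
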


%
%
%
First we classify the eigenvalues of $D$ which are also Eigenvalues of $\Delta_1$, which we refer to as Dirichlet-Neumann eigenvalues.
\begin{lem}
The set of Dirichlet-Neumann spectrum of $\Delta_1$ is 
\[
\sigma(D) \cap \sigma(\Delta_1) = \set{\frac{3}{2}}\cup\set{\cos^2(m\pi/N):m\in\Z,0<m<N/2}.
\]
Here $3/2$ has multiplicity $3N-3$, and the rest have multiplicity according to their number or representations in $\set{\cos^2(m\pi/N):m\in\Z,0<m<N/2}$.
\end{lem}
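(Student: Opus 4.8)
The plan is to split on whether the eigenvalue is $3/2$, and for $z\neq 3/2$ to use the presentation (from the proof of Lemma~\ref{m1spec}) of the level‑one graph $V_1$ as the cyclic sawtooth $G_2$ with $m=3N$ and $v_0$ identified with $v_{3N}$: the $3N$ degree‑four ``junctions'' form a cycle, each degree‑two vertex hangs on two consecutive junctions, and unwinding $F_{kN}(v_1)=v_k$ shows that the three points of $V_0$ are three of the degree‑two vertices, lying over junction‑pairs that are equally spaced (gap $N$) around the cycle. I would first record that a Dirichlet--Neumann eigenvalue is a $z$ for which $\Delta_1 f=zf$ on all of $V_1$ has a solution with $f|_{V_0}=0$; writing such $f$ in block form with vanishing $V_0$‑part gives $Dg=zg$, so what must be enumerated are exactly these eigenvalues together with their (Dirichlet--Neumann) multiplicities.

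For $z=3/2$: Lemma~\ref{m1spec} gives $\dim\ker(\Delta_1-\tfrac32)=3N$ together with an explicit basis of which $3N-3$ members vanish on $V_0$; since $|V_0|=3$, the restriction map to $\ell(V_0)\cong\R^3$ has rank exactly $3$ (visible from the three non‑Dirichlet basis functions), so the space of $\tfrac32$‑eigenfunctions of $\Delta_1$ vanishing on $V_0$ has dimension exactly $3N-3$ --- consistently with $\dim\ker(D-\tfrac32)=3N-3$ from the previous lemma. This produces the first atom with the claimed multiplicity.

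For $z\neq 3/2$: by \eqref{poly} and the extension of eigenfunctions from the junction cycle to the teeth (Lemma~\ref{M1}), an eigenfunction of $\Delta_1$ with eigenvalue $z$ is determined by its restriction to the junction cycle, which must be a cycle‑Laplacian eigenfunction with eigenvalue $2z$; conversely every such restriction extends. Hence $z=\sin^2(j\pi/(3N))$, the eigenspace is spanned by $f(v_k)=a\cos(k\theta)+b\sin(k\theta)$ with $\theta=2\pi j/(3N)$, and --- since the value of $f$ at a tooth is a fixed nonzero multiple of the sum of its two neighbouring junction values --- the condition $f|_{V_0}=0$ becomes the vanishing of $h(t):=a\cos(t\theta)+b\sin(t\theta)$ at three arguments spaced $N$ apart. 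A nonzero such $h$ has equally spaced zeros at spacing $\pi/\theta$, so these three linear conditions on $(a,b)$ have a nontrivial solution iff $N$ is an integer multiple of $\pi/\theta$, i.e.\ iff $3\mid j$; and when $3\mid j$ the three conditions collapse to one, leaving a one‑dimensional Dirichlet--Neumann eigenspace. Writing $j=3m$ gives eigenvalues $\sin^2(m\pi/N)$, distinct for $0<m<N/2$ and each of multiplicity one; identifying this set with the list $\{\cos^2(m\pi/N)\}$ in the statement (for instance via $\sin^2(m\pi/N)=\cos^2((N-2m)\pi/(2N))$) and combining with the previous paragraph finishes the argument.

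The step I expect to be the main obstacle is the careful treatment of the degenerate endpoints $z=0$ and $z=1$ of the Fourier range, where the tooth‑extension formula breaks down (the denominator $1-z$, or the factor $\cos(\theta/2)$, vanishes) and the cycle eigenspace drops to dimension one: I would check by hand that the associated $\Delta_1$‑eigenfunctions --- the global constant for $z=0$, and the sign‑alternating function supported on the teeth for $z=1$ (which occurs only for $N$ even) --- do \emph{not} vanish on $V_0$, so that neither corrupts the clean divisibility count; for $z=1$ this uses that the three $V_0$‑teeth have the same index parity when $N$ is even. A secondary, purely combinatorial, point is to extract from the p.c.f.\ gluing the exact cyclic positions of the three $V_0$‑teeth, so that ``equally spaced by $N$'' is genuinely justified.
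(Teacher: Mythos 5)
Your argument is correct, but it takes a genuinely different route from the paper's. The paper identifies the Dirichlet--Neumann eigenvalues indirectly through the spectral decimation data: the candidate values are recognized as the zeros of $R(z)$ coming from the Chebyshev factors in Theorem~\ref{Randphi} (hence elements of $R^{-1}(\sigma(\Delta_0))$ and poles of $\phi$), and membership in $\sigma(\Delta_1)$ is then checked via the identity $\sin^2(j\pi/3N)=\cos^2((2j-3N)\pi/6N)$ and a divisibility condition on $j$. You instead work entirely inside $V_1$: you diagonalize the cyclic sawtooth by Fourier modes $a\cos(k\theta)+b\sin(k\theta)$ with $2z=1-\cos\theta$, observe that each tooth value is a nonzero multiple of the sum of its two neighbouring cycle values (so Dirichlet conditions at the three $V_0$-teeth become vanishing of $a\cos(t\theta)+b\sin(t\theta)$ at three arguments spaced $N$ apart), and read off that a nontrivial solution exists iff $3\mid j$, in which case the three conditions collapse to one and the eigenspace is one--dimensional. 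This is more elementary and self-contained, and it yields the eigenfunctions explicitly; the paper's route is shorter given that $R$ and $\phi$ have already been computed, and it is the form needed for the multiplicity bookkeeping of Proposition~\ref{p-mult}. One point worth flagging: your answer comes out as $\set{\sin^2(m\pi/N):0<m<N/2}$, which coincides with the stated set $\set{\cos^2(m\pi/N):0<m<N/2}$ only when $N$ is even; for $N$ odd it equals $\set{\cos^2((m-\tfrac12)\pi/N)}$ (e.g.\ for $N=3$ you get $3/4$, not $1/4$, and indeed $1/4\notin\sigma(\Delta_1)$). This is not an error on your side --- it agrees with the paper's own proof text and with the zeros of $R(z)$ in Theorem~\ref{Randphi} --- but rather an imprecision in the lemma as literally stated for odd $N$. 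The remaining loose ends you identify (the endpoint modes $z=0,1$, the exact cyclic positions of the $V_0$-teeth, and the rank-$3$ claim for the restriction of the $\tfrac32$-eigenspace to $\ell(V_0)$) are real but routine, and the paper is no more detailed on the analogous points.
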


\begin{proof}
In the proof of lemma \ref{m1spec}, we have already shown that $3/2$ is in $\sigma(D)$, and has multiplicity at least $3N-3$. By a similar argument as that employed at the end of the proof of lemma \ref{M1}, 
it is not difficult to prove that the multiplicity is exactly $3N-3$. 

As for $\cos^2(m\pi/N), m=1,2,\ldots, N/2$, when $N$ is even, and $\cos^2((m-1/2)\pi/N)$. These are roots of $T_N$ and $U_N$ respectively, and are thus zeros of $R(z)$, which means that they are in $R^{-1}(\sigma(\Delta_0))$ as well as poles of $\phi(z)$. 

An element in $\sigma(\Delta_1)$ is either $3/2$ of of the form
\[
\frac{1-\cos\paren{\frac{2j\pi}{3N}}}{2} = \sin^2\paren{\frac{j\pi}{3N} }= \cos^2\paren{\frac{j\pi}{3N}-\frac{\pi}{2}} = \cos^2\paren{\frac{(2j-3N)\pi}{6N}}.
\]
Thus, either
\[
m = \frac{2j-3N}{6} \quad\text{ if }N\text{ is even, or}\quad m=\frac{2j-3(N+1)}{6},\quad\text{ if }N\text{ is odd,}
\]
By choosing appropriate $j$ divisible by $3$ and between $1$ and $3N$ we can find any $m$ between $1$ and $N/2$. Thus all poles of $R(z)$ are in $\sigma(\Delta_1)$. Further, if $z_0\in\sigma(\Delta_1)$ and $z_0\neq 3/2$ then we have shown that the corresponding eigenfunction $f_0$ is an extension of the eigenfunction from the circle. 
\end{proof}


\begin{lem}
The set $\sigma(D)\setminus\sigma(\Delta_1)$ consists of the $N$ elements of $\mathcal A$, each being an eigenvalue with multiplicity 2, and the $\lfloor N/2 \rfloor +1$ elements of $\mathcal B$, each with multiplicity $1$.
\end{lem}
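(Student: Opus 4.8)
The plan is to analyze the eigenvalue problem $Dg = \lambda g$ directly, i.e. to find all $f \in \ell(V_1)$ with $f|_{V_0} = 0$ and $\Delta_1 f(x) = \lambda f(x)$ for $x \in V_1 \setminus V_0$, and then separate out those $\lambda$ that are \emph{not} already in $\sigma(\Delta_1)$. By the previous lemma, the Dirichlet--Neumann eigenvalues ($3/2$ and the $\cos^2(m\pi/N)$ coming from the circle) are exactly $\sigma(D) \cap \sigma(\Delta_1)$, so the task is to account for the remaining eigenvectors of $D$: these are Dirichlet eigenfunctions on $V_1$ whose restriction to the circle $G_1$ (in the sawtooth picture of Lemma~\ref{M1}) is \emph{not} an eigenfunction of the circle Laplacian, which can only happen at $\lambda = 3/2$ or when the sawtooth-to-circle decimation breaks down. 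Concretely, I would use the sawtooth eigenbasis from Lemma~\ref{M1} and the already-computed quantities $P, Q, l, r$ (equations \eqref{PandQ}--\eqref{landr}) together with the boundary relations derived just before Subsection~\ref{sec4a}.

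First I would set up the Dirichlet problem cell-by-cell. Each of the $3N$ level-$1$ cells is a copy of $V_0$, and along each edge of the cell the eigenfunction is, away from the three cell-corners, governed by the sawtooth equations; so an eigenfunction of $D$ assigns to each cell-corner a value (zero at the three global boundary points $v_1,v_2,v_3$) and the interior (degree-$2$ "tooth") vertices are then determined by the sawtooth formula $f(u_k) = (f(v_k)+f(v_{k+1}))/(1-z)$, provided $z \neq 1$. Imposing the eigenvalue equation at each degree-$4$ corner that is \emph{not} in $V_0$ gives a linear system on the cell-corner values; this is precisely a Dirichlet eigenvalue problem on a cycle of $3N$ corners with three of them pinned to zero, but with the edge-weights/self-loops modified by the factors $l(z), r(z)$ coming from the teeth. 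I would diagonalize this by the same discrete Fourier/Chebyshev ansatz used in Lemma~\ref{m1spec}: a solution on the cycle with three nodes killed decomposes into three independent "arc" problems (the three arcs of $N$ corners between consecutive pinned vertices), and on each arc the admissible frequencies are forced by Dirichlet conditions at both ends, producing eigenfunctions built from $\sin$ of the appropriate arguments.

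The two families then emerge as follows. The set $\mathcal{A}$, defined by $2T_N(1-2z) + 2U_{N-1}(1-2z) + 1 = 0$, is exactly the locus where the "propagation" across a single cell (encoded by the sawtooth transfer through $N$ steps, i.e. by $T_N$ and $U_{N-1}$ evaluated at $1-2z$) is consistent with the antisymmetric boundary configuration $g(x_{4N}) = g(x_{5N})$ used in the $R(z)$ derivation; here one can build an eigenfunction supported on a \emph{single} arc (vanishing on the other two arcs' corners), and since there are three arcs but the ambient symmetry of the $3N$-gon identifies configurations in pairs, each such $z$ contributes multiplicity $2$ — I would confirm the count is exactly $N$ by noting $\mathcal{A}$ is the zero set of a degree-$N$ polynomial in $z$ (expand $2T_N + 2U_{N-1}$, which has degree $N$, in $1-2z$) and checking these are distinct and disjoint from $\sigma(\Delta_1)$ using the pole/zero analysis of Section~\ref{computation}. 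The set $\mathcal{B}$ — the zeros of $T_N(\sqrt z) - 2(z-1)\sqrt z\, U_{N-1}(\sqrt z)$ for $N$ even, resp. $U_{N-1}(\sqrt z) - 2\sqrt z\, T_N(\sqrt z)$ for $N$ odd — is the denominator-of-$\phi$ factor that appears in Theorem~\ref{Randphi}; these are the $z$ at which the \emph{symmetric} cell configuration degenerates, i.e. the value $a$ (equivalently $\phi(z) = b+c$) has a pole, so that a Dirichlet eigenfunction exists that is symmetric across the shared corner $x_{3N}$; such a solution is rigid (no arc freedom), giving multiplicity $1$, and counting the roots of the relevant factor — a polynomial in $z$ of degree $\lfloor N/2\rfloor + 1$ after clearing $\sqrt z$ — yields the stated count. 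Finally I would check these two families exhaust $\sigma(D) \setminus \sigma(\Delta_1)$ by a dimension count: $\dim \ell(V_1 \setminus V_0) = 6N - 3$, the Dirichlet--Neumann part contributes the multiplicities from the previous lemma, and $2N + (\lfloor N/2\rfloor + 1)$ plus that total must equal $6N-3$.

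The main obstacle will be the bookkeeping in the arc decomposition — precisely pinning down which frequencies survive the Dirichlet conditions at the arc endpoints, and hence proving the multiplicities are \emph{exactly} $2$ (for $\mathcal{A}$) and $1$ (for $\mathcal{B}$) rather than merely at least that, together with ruling out accidental coincidences among $\mathcal{A}$, $\mathcal{B}$, $\{3/2\}$, and the $\cos^2(m\pi/2N)$ terms. I expect this to reduce, after using the Chebyshev identities $\frac{d}{dx}T_N = N U_{N-1}$ and the explicit root formulas recalled in Section~\ref{computation}, to checking that the three polynomials in question have no common roots; the degree count and the final $6N-3$ dimension identity then close the argument.
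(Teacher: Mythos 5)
Your overall strategy --- directly constructing Dirichlet eigenfunctions on $V_1$ and closing with the dimension count $\dim\ell(V_1\setminus V_0)=6N-3$ --- is genuinely different from the paper's. The paper works entirely on the $3$-dimensional boundary space: it first proves that for $z_0\notin\sigma(\Delta_1)$ the multiplicity of $z_0$ in $\sigma(D)$ equals $\dim\ker\lim_{z\to z_0}S^{-1}(z)$, then writes $S^{-1}(z)=\phi(z)^{-1}(\Delta_0-R(z))^{-1}$ explicitly and reads off the kernel from its action on the constant vector $[1,1,1]^T$ (scalar $-1/(\phi(z)R(z))$) versus its orthogonal complement (scalar $1/(\phi(z)(3-2R(z)))$). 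This automatically caps every such multiplicity at $3$ and yields exactly $2$ on $\mathcal A$ (where $\phi$ has a pole but $\phi R$ does not vanish) and exactly $1$ on $\mathcal B$ (where both $\phi$ and $\phi R$ have poles), with no eigenfunction construction needed.

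Your construction, however, has a genuine gap in its geometric setup. The three boundary points $v_k=F_{kN}(v_1)$ of $V_1$ are the degree-$2$ ``teeth'' of the circular sawtooth graph, not corners on the cycle of degree-$4$ vertices. Setting $f=0$ at three teeth does not disconnect that cycle, so the Dirichlet problem does \emph{not} split into three independent arc problems on $N$ corners each: the eigenvalue equation at a corner adjacent to a killed tooth still couples it to both of its neighbouring corners on the ring. Consequently the mechanism by which you locate $\mathcal A$ (single-arc-supported solutions) and derive the multiplicity $2$ (``three arcs identified in pairs by symmetry'') is unsupported, and since the final $6N-3$ count only closes the argument once correct lower bounds are in hand for $\mathcal A$ and $\mathcal B$, the proof does not go through as written. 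The decomposition that does work is by the reflection symmetries through the boundary teeth --- equivalently, splitting $\ell(V_0)$ into constants and mean-zero vectors, which is precisely what the Schur-complement computation encodes. If you want to keep a constructive proof, you would need to exhibit, for each $z_0\in\mathcal A$, two independent eigenfunctions antisymmetric about an axis through a boundary tooth, and for each $z_0\in\mathcal B$ one fully symmetric eigenfunction, and separately verify (as you do note) that $\mathcal A$, $\mathcal B$, $\set{3/2}$ and the circle eigenvalues are pairwise disjoint. Your degree counts for $\mathcal A$ and $\mathcal B$ and the $6N-3$ bookkeeping are correct and would then finish the argument.
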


\begin{proof}
 Considering $\Delta_1$ as a block matrix as before
\[
(\Delta_1-z)^{-1}\left[\begin{array}{c}g\\0\end{array}\right] = \left[\begin{array}{c}S^{-1}(z)g \\ -(D-z)^{-1}CS^{-1}(z)g \end{array}\right].
\]
Thus, if $z_0 \notin \sigma(\Delta_1)$ then $\lim_{z\to z_0}S^{-1}(z)$ is well defined. We prove that for $z_0 \notin \sigma(\Delta_1)$ the multiplicity of $z_0$ as an eigenvalue of $D$ is equal to the the dimension of the nullspace $\lim_{z\to z_0}S^{-1}(z)$. 

Since $S(z_0)$ exists and is invertable for $z_0\notin \sigma(D)$, the statement is vacuously true there. On the other hand, if $z_0\in\sigma(D)$ then there is $f\in\ell(V_0\setminus V_1)$ such that
\[
\Delta_1\left[\begin{array}{c}0 \\ f\end{array}\right] = \left[\begin{array}{c}Bf \\ z_0 f\end{array}\right],\quad\text{thus}\quad (\Delta_1-z_0)\left[\begin{array}{c}0 \\ f\end{array}\right] = \left[\begin{array}{c}Bf \\ 0 \end{array}\right].
\]
And in particular $\lim_{z\to z_0}S^{-1}(z)(Bf) = 0$. Thus the matrix $B$ is a linear mapping from the $z_0$-eigenspace of $D$ to the nullspace of $\lim_{z\to z_0} S^{-1}(z))$.

Conversely, assuming $\lim_{z\to z_0}S^{-1}(z)g = 0$ for some $g\in\ell(V_0)$ with $g\neq 0$, because $z_0\notin\sigma(\Delta_1)$ $(\Delta_1-z_0)^{-1}$ is well defined,
\[
\lim_{z\to z_0}  \left[\begin{array}{c}S^{-1}(z)g \\ -(D-z)^{-1}CS^{-1}(z)g \end{array}\right] = \lim_{z\to z_0}  (\Delta_1-z)^{-1}\left[\begin{array}{c}g\\0\end{array}\right] = (\Delta_1-z_0)^{-1}\left[\begin{array}{c}g\\0\end{array}\right] = \left[\begin{array}{c}0 \\ f\end{array}\right].
\]
Here $f\in\ell(V_1\setminus V_0)$ with $f\neq 0$ because $\Delta_1-z_0$ is invertible at $z_0\notin\sigma(\Delta_1)$. Thus, by the block representation we have non-zero $f$ such that $(D-z)f = 0$. Thus the linear map $g \mapsto \lim_{z\to z_0} -(D-z)^{-1}CS^{-1}(z)g$ is a well defined map from the nullspace of $\lim_{z\to z_0}S^{-1}(z)$ to the $z_0$-eigenspace of $D$, and is clearly the inverse of the map above, which proves that the two spaces have the same dimension.


Additionally, for $z\notin\sigma(D)\cup\sigma(\Delta_1)$, $S^{-1}(z)$ can be written
\[
 \frac1{\phi(z)}\paren{\Delta_0-R(z)}^{-1} = \frac{-1}{\phi(z)R(z)(3-2R(z))}\left[\begin{array}{ccc}1-2R(z) & 1 & 1 \\1 & 1-2R(z) & 1 \\1 & 1 & 1-2R(z)\end{array}\right].
\]
Thus
\[
\operatorname{Det}(S^{-1}(z)) = -4(\phi(z))^{-3}(3-2R(z))^{-2}(R(z))^{-1},
\]
So the above is invertible unless  $z_0\in\sigma(\Delta_1)$ or $\phi(z)$ has a pole at $z_0$.

Notice that for $z_0\notin \sigma(\Delta_1)$ and $R(z)\in\sigma(\Delta_0) = \set{0,3/2}$, then because $S^{-1}$ has a removable singularity, $\phi(z)$ must have a pole at $z_0$. Conversely, if $\phi(z)$ has a pole at $z_0$, then either $z\in\sigma(\Delta_0)$ or $S^{-1}(z_0)=0$.  

If $v = [1,1,1]^T$ and $u\in \spn{v}^\perp$, i.e. $v$ is invariant under all symmetries of the $3N$-gasket. It is simple to compute
\[
S^{-1}(z_0) v = \frac{-v}{\phi(z)R(z)} \quad\text{and}\quad S^{-1}(z_0)u = \frac{u}{\phi(z)(3-2R(z))}.
\]

Now assume  $ z_0 \in \mathcal A$ then $R(z_0)= 0$ and $\phi(z)$ has a pole at $z_0$ and $\lim_{z\to z_0}R(z)\phi(z)\neq 0$ and is well defined. Thus $S^{-1}(z_0)u = 0$ and $S^{-1}(z)v\neq 0$, and so the nullspace of $S^{-1}(z_0) = 2$. Thus, $z_0$ corresponds to an eigenspace of dimension 2, span by anti-symmetric eigenfunctions with respect to reflection about a given axis. A geometric argument shows that there should be $N$ such eigenfunctions for a given axis of symmetry (so $2N$ linearly independent functions in total).

On the other hand, for $z_0\in \mathcal B$, then both $\phi(z)$ and $\phi(z)R(z)$ have a pole at $z_0$. Thus $S^{-1}(z_0) v =0$, so $z_0$ corresponds to an eigenfunction which is invariant under all symmetries. A geometric argument shows that there should be $(N+2)/2$ such eigenfunctions if $N$ is even and $(N+1)/2$ if $N$ is odd.

Adding up the multiplicities we have found all of the $3N-3$ eigenvalues of $D$.
\end{proof}

\subsection{Multiplicities}\label{multiplicities}

In this section we recall from \cite{Tep08b} and prove new results about spectral decimation which will allow us to determine the eigenvalues with multiplicities of $\Delta_n$. Using these formulas we will be able to determine the spectrum of the Laplacian on the limiting fractal.

We define $\mult M z$ to be the multiplicity of an eigenvalue $z$ of the matrix $M$, and $ \operatorname{dim}_M$ to be the dimension of a square matrix $M$. Additionally $\mult{n}{z} = \mult{\Delta_n}{z}$ and $\operatorname{dim}_n=\operatorname{dim}_{\Delta_n}$. Using previous work in \cite{Tep08b}, we have the following proposition, letting $m$ denote the number of contractions defining an iterated function system (i.e. for a $3N$-gasket, $m=3N$):

	\def\pro#1{Proposition~\ref{p-mult}(\ref{p-mult#1})}
	\def\Pro{Proposition~\ref{p-mult}}

\begin{prop}\label{p-mult} \cite{Tep08b}
\begin{enumerate}
        \item\label{p-mult1}
If $z\notin E(M_0, M)$, then
\begin{equation}
	\mult n{z}=\mult{n-1}{ R(z)},
\end{equation}
and every corresponding eigenfunction at depth $n$ is an extension of an eigenfunction at depth $n-1$.

        \item\label{p-mult2}
If $z \notin \sigma(D)$, $\phi(z) = 0$ and $R(z)$ has a removable singularity at $z$, then
\begin{equation}
	\mult n{z}=\textup{dim}_{n-1},
\end{equation}
and every corresponding eigenfunction at depth $n$ is localized.

        \item\label{p-mult3}
If $z \in \sigma(D)$, both $\phi(z)$ and $\phi(z)R(z)$ have poles at $z$, $R(z)$ has a removable singularity at $z$,  and $\frac d{dz}R(z)\neq 0$, then
\begin{equation}
	\mult n{z}=m^{n-1}\mult D{z}-\textup{dim}_{n-1}+\, \mult{n-1}{ R(z)},
\end{equation}
and every corresponding eigenfunction at depth $n$ vanishes on $V_{n-1}$.

        \item\label{p-mult4}
If $z \in \sigma(D)$, but $\phi(z)$ and $\phi(z)R(z)$ do not have poles at $z$, and $\phi(z)\neq 0$, then
\begin{equation}
	\mult n{z}=m^{n-1}\mult D{z}+\, \mult{n-1}{R(z)}.
\end{equation}
In this case $m^{n-1}\mult D{z}$ linearly independent eigenfunctions are localized, and $\mult{n-1}{ R(z)}$ more linearly independent eigenfunctions are extensions of corresponding eigenfunction at depth $n-1$.

        \item\label{p-mult5}
If $z \in \sigma(D)$, but $\phi(z)$ and $\phi(z)R(z)$ do not have poles at $z$, and $\phi(z)= 0$, then
\begin{equation}
	\mult n{z}=m^{n-1}\mult D{z}+\, \mult{n-1}{ R(z)}+\, \textup{dim}_{n-1}
\end{equation}
provided $R(z)$ has a removable singularity at $z$. In this case there are $m^{n-1}\mult D{z}+\textup{dim}_{n-1}$ localized and
$\mult{n-1}{ R(z)}$ non-localized corresponding eigenfunctions at depth $n$.

        \item\label{p-mult6}
If $z \in \sigma(D)$, both $\phi (z)$ and $\phi(z)R(z)$ have poles at $z$, $R(z)$ has a removable singularity at $z$, and $\frac d{dz}R(z)= 0$, then
\begin{equation}
	\mult n{z}=\mult{n-1}{ R(z)},
\end{equation}
provided there are no corresponding eigenfunctions at depth $n$ that vanish on $V_{n-1}$. In general we have
\begin{equation}
	\mult n{z}=m^{n-1}\mult D{z}-\textup{dim}_{n-1}+\, 2\mult{n-1}{ R(z)}.
\end{equation}

        \item\label{p-mult7}
If $z \notin \sigma(D)$, $\phi(z)=0$ and $R(z)$ has a pole $z$, then $\mult n{z}=0$ and $z$ is not an eigenvalue.

        \item\label{p-mult8}
If $z \in \sigma(D)$, but $\phi(z)$ and $\phi(z)R(z)$ do not have poles at $z$, $\phi(z)= 0$, and $R(z)$ has a pole $z$, then
\begin{equation}
	\mult n{z}=m^{n-1}\mult D{z}
\end{equation}
and every corresponding eigenfunction at depth $n$ vanishes on $V_{n-1}$.
\end{enumerate}
\end{prop}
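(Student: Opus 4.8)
The plan is to reduce every one of the eight cases to linear-algebra bookkeeping on the block decomposition $\ell(V_n)=\ell(V_{n-1})\oplus\ell(V_n\setminus V_{n-1})$ supplied by the Schur complement, following \cite{MT95,Tep08a,Tep08b} for the parts already treated there and supplying a new argument for the cases in which $R$ has a pole at the point under consideration (parts \ref{p-mult7}--\ref{p-mult8}), a feature absent from all previously analyzed spectral decimation examples and ultimately the source of the gaps of Theorem~\ref{gapsthm}. Throughout, $\mult nz$ is both the algebraic and the geometric multiplicity of $z$ in $\Delta_n$ (these agree since $\Delta_n$ is self-adjoint), so one is free to count either the order of vanishing of $\det(\Delta_n-z)$ or the dimension of an explicitly constructed eigenspace.

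First I would set up the transfer dictionary between consecutive levels. Writing an eigenfunction $f=(f_0,f_1)$ of $\Delta_n$ at eigenvalue $z$ with $f_0=f|_{V_{n-1}}$, the two block equations read $(A-z)f_0+Bf_1=0$ and $Cf_0+(D-z)f_1=0$, where $A$ is the $V_{n-1}$-block of $\Delta_n$; for $z\notin\sigma(D)$ they force $f_1=-(D-z)^{-1}Cf_0$ and $S(z)f_0=0$, with $S(z)=(A-z)-B(D-z)^{-1}C=\phi(z)\bigl(\Delta_{n-1}-R(z)\bigr)$ by self-similarity and the computation of Section~\ref{computation}. This gives the determinant identity $\det(\Delta_n-z)=\det(D-z)\,\phi(z)^{\dim_{n-1}}\det\bigl(\Delta_{n-1}-R(z)\bigr)$ as rational functions of $z$. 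I would then isolate three families of eigenfunctions to be counted separately: family (A), the extensions $f_0\mapsto\bigl(f_0,-(D-z)^{-1}Cf_0\bigr)$ of $\Delta_{n-1}$-eigenfunctions $f_0$ at eigenvalue $R(z)$, present only when $R(z)$ is a finite eigenvalue of $\Delta_{n-1}$ and $\phi(z)\neq0$, contributing $\mult{n-1}{R(z)}$; family (B), the eigenfunctions $(0,g)$ with $(D-z)g=0$ and $Bg=0$, which vanish on $V_{n-1}$ and are assembled cell by cell from Dirichlet eigenfunctions of $\Delta_1$ (here one uses that those eigenfunctions have summed normal derivatives zero at shared vertices, which follows from Lemma~\ref{M1} and the computation of $\sigma(D)$ in Section~\ref{theorems}), contributing $m^{n-1}\mult Dz$; and family (C), the extra $\dim_{n-1}$ extensions that appear exactly when $\phi(z_0)=0$ with $R$ finite at $z_0$, since then $S(z_0)=\phi(z_0)\bigl(\Delta_{n-1}-R(z_0)\bigr)$ is the zero matrix and every $f_0$ extends.

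I would then run the case analysis, organized by four conditions at the point $z_0$: whether $z_0\in\sigma(D)$ (controls family (B)); whether $\phi(z_0)=0$ (controls family (C)); whether $\phi(z)$ and $\phi(z)R(z)$ have poles at $z_0$ (this distinguishes, via the explicit action of $S^{-1}(z)$ on the symmetric direction $v=(1,1,1)$ and on $v^\perp$ recorded in Section~\ref{computation}, whether the limit of $S^{-1}(z)$ annihilates the symmetric direction as well, which accounts for a one-dimensional difference in the relevant null space); and whether $R$ has a removable singularity or a pole at $z_0$ (a pole makes $R(z_0)=\infty\notin\sigma(\Delta_{n-1})$, so family (A) is empty). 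This last observation is what produces parts \ref{p-mult7} and \ref{p-mult8}: in \ref{p-mult7} the pole of $R$ exactly cancels the zero of $\phi$, equivalently $\phi R$ has no pole at $z_0$, so $\lim_{z\to z_0}S(z)$ is invertible and $\mult n{z_0}=0$; in \ref{p-mult8} family (A) is empty while family (B) survives, giving $\mult n{z_0}=m^{n-1}\mult Dz$. For parts \ref{p-mult1}--\ref{p-mult5} one matches the orders of vanishing of the two sides of the determinant identity at $z_0$ and identifies the eigenspace with the appropriate union of families (A), (B), (C); this is the computation carried out in \cite{Tep08b}. The side assertions that the depth-$n$ eigenfunctions are, as the case may be, extensions, localized, or vanishing on $V_{n-1}$, are read off directly from the three constructions rather than from the dimension counts.

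The main obstacle is part \ref{p-mult6}, the degenerate critical case $R'(z_0)=0$ with $z_0\in\sigma(D)$ and both $\phi$ and $\phi R$ having poles at $z_0$. Here the lift of multiplicity from level $n-1$ is genuinely ambiguous: because $R$ is locally two-to-one near $z_0$, an eigenfunction of $\Delta_{n-1}$ at $R(z_0)$ can contribute as much as $2\mult{n-1}{R(z_0)}$ to $\mult n{z_0}$, one contribution from the eigenfunction itself and one from a derivative branch of the family $z\mapsto S(z)^{-1}$, unless some of that slack is absorbed by eigenfunctions at depth $n$ that vanish on $V_{n-1}$; this is precisely the dichotomy stated in \ref{p-mult6}. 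Deciding which alternative occurs requires comparing, to second order, the orders of vanishing of $\det(D-z)$, $\phi(z)^{\dim_{n-1}}$, and $\det(\Delta_{n-1}-R(z))$ at $z_0$ and then checking that these two alternatives are exhaustive; a first-order dimension count does not suffice, and one must use the finer local behavior of $S^{-1}(z)$ near $z_0$ together with the explicit eigenbasis of Lemma~\ref{M1}.
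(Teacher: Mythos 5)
A preliminary remark: the paper does not actually prove this proposition --- it is imported verbatim from \cite{Tep08b} --- so the only in-paper argument of this kind to compare against is the proof of Theorem~\ref{thm-new}, which treats the one case missing from the list. That proof, like those in \cite{Tep08b}, runs through the resolvent identity
\[
(\Delta_n-z)^{-1}=(D-z)^{-1}P_n'+\paren{P_{n-1}-(D-z)^{-1}C}\paren{\phi(z)\Delta_{n-1}-\phi(z)R(z)}^{-1}\paren{P_{n-1}-B(D-z)^{-1}P_n'}
\]
and extracts the $z_0$-eigenprojection of $\Delta_n$ as a residue at $z_0$. Your route --- the Schur determinant identity $\det(\Delta_n-z)=\det(D-z)\,\phi(z)^{\textup{dim}_{n-1}}\det\paren{\Delta_{n-1}-R(z)}$ combined with the three explicit eigenfunction families --- is genuinely different. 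The residue method produces the eigenprojection itself, so the multiplicity, the localization statements, and the interaction with the matching conditions $Bg=0$ all fall out of one computation; your method gives a fast total count in the generic cases and correctly reproduces the formulas in parts \ref{p-mult1}, \ref{p-mult2}, \ref{p-mult4}, \ref{p-mult5} and \ref{p-mult7}.

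There is, however, a genuine gap. The determinant identity yields the order of vanishing of $\det(\Delta_n-z)$ only once the exact local orders of the factors are known, and the hypotheses of the proposition do not supply them: you implicitly take $\phi$ to have a \emph{simple} pole in \ref{p-mult3} and \ref{p-mult6} (to produce the $-\textup{dim}_{n-1}$) and $R-R(z_0)$ to vanish to order exactly $2$ in \ref{p-mult6}; neither is among the stated hypotheses, and for the $3N$-gaskets the simplicity of the poles is a separate nontrivial fact established in Theorem~\ref{Randphi}. Moreover, since $\Delta_n$ is self-adjoint, algebraic and geometric multiplicities coincide, so a completed determinant count would return a single unconditional answer in case \ref{p-mult6} and cannot by itself reproduce the stated dichotomy there, whose resolution depends on whether eigenfunctions vanishing on $V_{n-1}$ exist --- i.e.\ on the kernel of $B$ restricted to the $z_0$-eigenspace of $D$. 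The same kernel question is left open in your treatment of \ref{p-mult8}: the space of $g$ with $(D-z_0)g=0$ has dimension $m^{n-1}\mult D{z_0}$, but an eigenfunction $(0,g)$ must also satisfy the $\textup{dim}_{n-1}$ conditions $Bg=0$, and you give no reason these are vacuous there while they cost exactly $\textup{dim}_{n-1}$ in case \ref{p-mult3}. The missing link is that the hypothesis ``$\phi$ and $\phi R$ have no poles at $z_0$'' forces $B\,P_{D,z_0}\,C=0$ (otherwise $B(D-z)^{-1}C$ would blow up at $z_0$), and hence, since $C$ is a weighted transpose of $B$, that $B$ annihilates the whole $z_0$-eigenspace of $D$; this is exactly the information the residue computation delivers and a determinant count does not. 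Supplying that lemma, and either verifying the pole orders case by case or retreating to the resolvent argument for \ref{p-mult3}, \ref{p-mult6} and \ref{p-mult8}, would close the argument.
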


In computing the spectrum of the Laplacian of the $3N$ gasket, we shall need the following additional case, which is not covered by the result above.

\begin{thm}\label{thm-new}
Say $z_0\in\mathcal A$, then $R(z_0) = 0$, $\phi(z)$ has a pole at $z_0$ but $\phi(z)R(z)\neq 0$ has a removable singularity, then
\[
\mult n{z_0} = \frac{1-(3N)^{n-1}}{1-3N}
\]
for $n>2$, and $\mult n{z_0} = 0$ otherwise.
\end{thm}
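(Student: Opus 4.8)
The plan is to follow the template of Proposition~\ref{p-mult}, but carefully track the extra structure coming from the fact that $z_0\in\mathcal A$ sits in a genuinely new case: $R(z_0)=0\in\sigma(\Delta_0)$, $\phi$ has a pole at $z_0$, yet the product $\phi R$ has a removable singularity with $\lim_{z\to z_0}\phi(z)R(z)\neq 0$. First I would recall the basic spectral-decimation dichotomy for an eigenfunction $f$ of $\Delta_n$ with eigenvalue $z_0$: either $f$ restricts to an eigenfunction of $\Delta_{n-1}$ with eigenvalue $R(z_0)$, or $f$ vanishes on $V_{n-1}$ (is ``localized'' at level $n$). Since $R(z_0)=0$ and the only eigenvalue $0$ eigenfunction of $\Delta_{n-1}$ is constant, and a constant function is never an eigenfunction at level $n$ with eigenvalue $z_0\neq 0$ (one checks $\Delta_n$ applied to a constant is $0\neq z_0\cdot(\text{const})$ unless the constant is $0$), there are \emph{no} eigenfunctions at depth $n$ that come from extensions. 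Hence $\mult n{z_0}$ counts exactly the localized eigenfunctions, i.e. those $f$ with $f|_{V_{n-1}}=0$ and $\Delta_n f = z_0 f$ on $V_n\setminus V_{n-1}$.

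Next I would count these localized eigenfunctions cell-by-cell. An eigenfunction vanishing on $V_{n-1}$ decomposes as a sum of functions supported on the $3N$ level-$(n-1)$ cells $F_w(K)$, $|w|=n-1$, each of which is (via $F_w$) a function on $V_1$ vanishing on $V_0$ and satisfying $\Delta_1 g = z_0 g$ on $V_1\setminus V_0$ — i.e. an eigenfunction of the Dirichlet block $D$ with eigenvalue $z_0$. By the lemma above, $\mult D{z_0}=2$ for $z_0\in\mathcal A$. So naively there are $2\cdot(3N)^{n-1}$ such cell-supported functions; but they are \emph{not} all independent modulo the requirement of gluing to a globally defined eigenfunction vanishing on $V_{n-1}$: at each vertex $x\in V_{n-1}\setminus V_0$ shared by two cells, the sum of the two ``normal derivatives'' (the quantities $B g$ in the block notation) must vanish so that $\Delta_n f(x)=z_0 f(x)=0$. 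This is precisely where the condition that $\phi(z)R(z)$ has a \emph{nonzero} removable value enters: it guarantees that the matching map $B$ restricted to the $z_0$-eigenspace of $D$ is nonzero (in fact, as in the $\mathcal A$ analysis, the two $D$-eigenfunctions on a cell are antisymmetric and $B$ sends the symmetric combination to a nonzero boundary vector), so the gluing constraints are nontrivial. The standard count (as in \cite{Tep08b} for the $p-mult$ cases where eigenfunctions vanish on $V_{n-1}$) then gives
\[
\mult n{z_0} = (3N)^{n-1}\mult D{z_0} - \bigl(\text{number of independent matching constraints}\bigr),
\]
and the number of constraints is the dimension of the space spanned by the boundary data, which here works out to $2(3N)^{n-1} - \dfrac{1-(3N)^{n-1}}{1-3N}$, i.e. one ``free'' antisymmetric mode survives per cell arrangement in the recursive graph structure. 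Equivalently, I would set up the recursion $\mult n{z_0} = (3N)\,\mult{n-1}{z_0} + (\text{new cells at level }n\text{ contributing})$ — more precisely, localized functions at level $n$ are either inherited from level $n-1$ (there are $3N$ copies, giving $3N\cdot\mult{n-1}{z_0}$) plus a fixed number of genuinely new ones per configuration — and solve it with the initial data $\mult 1{z_0}=0$ (since $V_0$ is the whole boundary, a Dirichlet eigenfunction on $V_1$ would be a level-$1$ eigenfunction, but these are \emph{not} eigenfunctions of $\Delta_1$ because $z_0\in\sigma(D)\setminus\sigma(\Delta_1)$), $\mult 2{z_0}=0$ for the analogous degenerate reason at the first gluing stage, and $\mult 3{z_0}=1$, whence the closed form $\mult n{z_0}=\dfrac{1-(3N)^{n-1}}{1-3N}$ for $n>2$.

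The main obstacle will be pinning down exactly why $\mult 1{z_0}=\mult 2{z_0}=0$ while $\mult 3{z_0}=1$, and proving the recursion $\mult n{z_0} = 3N\cdot\mult{n-1}{z_0}+1$ for $n\geq 3$ rigorously rather than heuristically: this requires showing that at each level beyond the second, exactly one new independent localized eigenfunction appears, which amounts to computing the rank of the global matching operator on $\bigoplus_w (\text{$z_0$-eigenspace of }D)$ and verifying that its corank grows geometrically like $\tfrac{1-(3N)^{n-1}}{1-3N}$. I would handle this by exploiting the antisymmetry of the cell eigenfunctions (they change sign under the reflection of each cell), which forces many of the matching sums to vanish automatically and reduces the rank computation to a question about the first-homology / cycle structure of the level-$n$ cell-adjacency graph — a $3N$-gasket graph whose cycle rank is well known. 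I would then confirm the formula against the small cases $N=1$ (Sierpinski gasket, where $\mathcal A$ is a single point and the known multiplicities from \cite{Str06,Tep08b} should match $\tfrac{1-3^{n-1}}{1-3}=\tfrac{3^{n-1}-1}{2}$) and $N=2$ (the hexagasket, using \cite{Tep08a,Tep08b}) as a consistency check, and cross-check that the total $\sum$ over all eigenvalue types reproduces $\dim_n = \tfrac{3N+(3N-2)(3N)^n}{3N-1}$ from Theorem~\ref{finitespectrum}.
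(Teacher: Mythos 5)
Your high-level strategy is essentially the paper's: show that every $z_0$-eigenfunction at depth $n$ vanishes on $V_{n-1}$, then count the localized eigenfunctions via the cycle structure of the adjacency graph of $(n-1)$-cells. But as written the proposal has two genuine gaps. First, your argument that there are no extended eigenfunctions rests on the dichotomy ``$f|_{V_{n-1}}$ is an $R(z_0)$-eigenfunction of $\Delta_{n-1}$ or $f$ vanishes on $V_{n-1}$,'' which is exactly what fails to be automatic when $z_0\in\sigma(D)$: the Schur-complement reduction breaks down because $D-z_0$ is not invertible, so you cannot conclude that the restriction satisfies the decimated eigenvalue equation. The paper handles this by multiplying the resolvent identity $(\Delta_n-z)^{-1}=(D-z)^{-1}P_n'+(P_{n-1}-(D-z)^{-1}C)\bigl(\phi(z)\Delta_{n-1}-\phi(z)R(z)\bigr)^{-1}(P_{n-1}-B(D-z)^{-1}P_n')$ by $(z_0-z)$ and letting $z\to z_0$; the hypothesis that $\phi$ has a pole while $\phi R$ has a nonzero removable singularity is used precisely to kill the second term and obtain $P_{n,z_0}=P_{D,z_0}$. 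You invoke the same hypothesis, but for a different (and not actually load-bearing) purpose.

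Second, the count is asserted rather than derived, and your proposed seed is inconsistent with the formula you are trying to prove: solving $\mult{n}{z_0}=3N\,\mult{n-1}{z_0}+1$ with $\mult{3}{z_0}=1$ gives $\frac{1-(3N)^{n-2}}{1-3N}$, not $\frac{1-(3N)^{n-1}}{1-3N}$; the theorem (and Theorem~\ref{finitespectrum}(iii)) gives $\mult{3}{z_0}=3N+1$, not $1$. The correct count comes from the observation that a localized $z_0$-eigenfunction determines, and is determined by, an antisymmetric, divergence-free edge function $J$ on the graph $G$ whose vertices are the $(3N)^{n-1}$ cells $F_w(V_1)$, $|w|=n-1$, and whose edges are their pairwise intersections; the space of such $J$ has dimension equal to the first Betti number of $G$, namely $1+3N+\cdots+(3N)^{n-2}=\frac{1-(3N)^{n-1}}{1-3N}$. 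You do name this mechanism (``first-homology / cycle structure of the cell-adjacency graph''), which is the right idea, but the bijection between localized eigenfunctions and cycles — the step that converts the naive $2(3N)^{n-1}$ count into the Betti number — is exactly the content you defer, and without it the proof is not complete.
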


\begin{proof}
In Theorem 3.3 of \cite{Tep08b} it is shown that 
\begin{align*}
(\Delta_n-z)^{-1}& =  (D-z)^{-1}P_n' \\ &+ (P_{n-1}-(D-z)^{-1}C)(\phi(z)\Delta_{n-1}-\phi(z)R(z))^{-1}(P_{n-1}-B(D-z)^{-1}P_n')
\end{align*}
Where $P_{n-1}$ the orthogonal projection on $\ell(V_n)$  to the set of functions supported on $V_{n-1}$, and $P'_n = I_n - P_{n-1}$ is the projection on the orthogonal complement i.e. functions supported away from $V_{n-1}$. Using the spectral decomposition
\[
\Delta_n = \sum_{\lambda\in\sigma(\Delta_n)} \lambda P_{n,\lambda}
\]
where $P_{n,\lambda}$ is the orthogonal projection onto the $\lambda$-eigenspace of $\Delta_n$,  we get 
\begin{align*}
&\sum_{\lambda\in\sigma(\Delta_n)}\frac1{ \lambda-z }  P_{n,\lambda} =  (D-z)^{-1} P_n' \\ 
&+ \sum_{\lambda' \in \sigma ( \Delta_{n-1} ) }(P_{n-1}-(D-z)^{-1}C)\frac{1}{\phi(z)(\lambda'-R(z))}P_{n-1,\lambda'}(P_{n-1}-B(D-z)^{-1}P_n')
\end{align*}
Multiplying both sides by $(z_0-z)$ and taking the limit $z\to z_0$, the left hand side of the above becomes $P_{n,z_0}$ (the projection will be zero when $z_0$ is not an eigenvalue).

The first term on the right hand side becomes
\[
\lim_{z\to z_0} \sum_{\lambda\in\sigma(D)} \frac{z_0-z}{\lambda-z}P_{D,\lambda}P'_n = P_{D,z_0},
\]
where $P_{D,\lambda}$ is the orthogonal projection onto the $\lambda$-eigenspace of $D$. Again not that the projection $P_{D,z_0}$ may be zero. The second term vanishes because
\[
\lim_{z\to z_0}\frac{z_0-z}{\phi(z)\lambda'-\phi(z)R(z)}  = 0,
\]
as $\phi_0(z) :=\phi(z)/(z_0-z)$ has a pole at $z_0$ of order 2. Putting this together,
$P_{n,z_0}  = P_{D,z_0}$.
In other words, if $f$ is an eigenfunction of $z_0$, $f|_{V_{n-1}} \equiv 0$. We establish in the end of Section~\ref{computation} that eigenvectors of $z_0$ for $\Delta_1$ are such that if $u$ is a $z_0$-eigenfunction of $D$ then the values on $V_0$ of $\Delta_1 u$ are orthogonal to constant vectors, i.e.
$
\sum_{x\in V_0} \Delta_1u(x) = 0
$. 
Further, for any vector in $\R^3$ orthogonal to constants, there is a $z_0$-eigenfunction of $D$ such that $\Delta_1 u$ has that vector as boundary values.

In other words, 
 we consider the graph $G$ with vertex set $\set{F_w(V_1)~|~w\in\alb^{n-1}}$, which are the $n-1$ cells of the $V_n$, and edge relation is given by the  non-empty intersections. A $z_0$-eigenfunction $f$ induces a function $J$ on the edges of $G$, by
$
J(F_w(V_1),F_v(V_1)) = -\frac14 \sum_{y\sim x, ~y\in F_w(V_1)} f(y)
$
where $x$ is the unique element in the intersection of $F_w(V_1)$ and $F_v(V_1)$. Further $J$ is such that $J(\alpha,\beta) = -J(\beta,\alpha)$ and $\sum_{\beta\sim\alpha} J(\alpha,\beta) = 0$ (in this sum, $\alpha$ is fixed). Additionally, any function $J$ with these two properties can be obtained in this way. The dimension of the linear space spanned by such functions is the 1st Betti number of the graph, which is easily computed to be
$
1+3N + (3N)^2 + \ldots + (3N)^{n-2} = \dfrac{1-(3N)^{n-1}}{1-3N}
$
\end{proof}

%
\begin{proof}[Proof of Theorem \ref{finitespectrum}]
The dimension of $\Delta_n$ is the same as the number of vertices in $V_n$ which can be calculated by noticing that there are $3N$ critical points --- points where $F_i(K)$ overlaps with $F_j(K)$ where $j\neq i$. Thus the number of vertices satisfies the recurrence relation $|V_{n+1}| = 3N|V_{n}| - 3N$ with $|V_0|= 3$. Note that by results from \ref{p-mult}(\ref{p-mult7}), that every zero of $\phi(z)$ is $3/2$ or not an eigenvalue (has multiplicity $0$).

We calculate the multiplicities (and thus the spectrum) of the eigenvalues below.
\begin{enumerate}[(i)]
\item is found using Proposition \ref{p-mult}(\ref{p-mult5}).  Clearly $\phi\paren{{3}/{2}}=0$ and ${3}/{2}$ is not a singularity of $R(z)$, $\phi\paren{{3}/{2}}R\paren{{3}/{2}}=0$.  Thus, \ref{p-mult}(\ref{p-mult5}) applies and gives us (i), noticing that $R(3/2)$ is not an eigenvalue of $\Delta_{n-1}$ and thus.
\item Let $R^m(z) = \sin^2\paren{k\pi/(3N)}$. If $3\nmid k$, $R^{m}(z)\notin E(\Delta_0,\Delta)$ and thus $R^{m+1}(z)\in\sigma(\Delta_0)$. We can see that know $R^{m+1}(z)\neq 0$, so $R^{m-1}(z)=\frac{3}{2}$. If $3\mid k$,  $R^m(z)\in \sigma(D)$, $R^{m+1}(z)=0$, and because $R^m(z)\in\sigma(\Delta_1)$ and it is not $3/2$,  $\phi(z)\neq 0$, we apply Proposition  \ref{p-mult}(\ref{p-mult4}) . 
Note that If $R(z) = 3/2$ then either $z \sigma(D)$ or $z\in \sigma(\Delta_1)$ then, from Proposition \ref{p-mult}(\ref{p-mult3}), we know that if $z\in \sigma(D)$, then $z$ is not an eigenvalue. Also note, that $z\in \sigma(D)$ implies $z\in \mathcal B$, because all other element of $\sigma(D)$ are $3/2$ or zeros of $R(z)$.
\item If $R^m(z) \in A$, then we apply Theorem \ref{thm-new}.
\end{enumerate}
\end{proof}
%
\bibliography{newbib}
\bibliographystyle{plain}
%
\end{document}